\newcommand{\vkappa}{\textrm{\large$\varkappa$}}
\newcommand{\indep}{\mathrel{\text{\scalebox{1.07}{$\perp\mkern-10mu\perp$}}}}
\newcommand{\nindep}{\centernot{\indep}}
\newtheorem{theorem}{Theorem}[section]
\newtheorem{lemma}[theorem]{Lemma}
\theoremstyle{definition}
\newtheorem{definition}{Definition}[section]
\newcommand{\ti}{{\mathbf{t}}_i}
\newcommand{\tij}{{\mathbf{t}}_{ij}}
\renewcommand{\tt}{{\mathbf{t}}}
\newcommand{\ci}{{\mathbf{c}}_i}
\newcommand{\cc}{{\mathbf{c}}}
\newcommand{\cij}{{\mathbf{c}}_{ij}}
\newcommand{\cj}{{\mathbf{c}}_j}
\newcommand{\X}{{\bf\hat X}}
\newcommand{\R}{{\mathbf{R}}}
\newcommand{\BB}{{\mathcal{B}}}
\newcommand{\B}{{\mathbf{B}}}
\newcommand{\NN}{{\mathcal{N}}}
\newcommand{\HH}{{\mathcal{H}}}
\renewcommand{\H}{{\mathbf{H}}}
\newcommand{\id}{\mathbb{I}}
\newcommand{\km}{\mathbb{K}}
\newcommand{\beq}{\begin{equation}}
\newcommand{\eeq}{\end{equation}}
\newcommand{\beqa}{\begin{eqnarray}}
\newcommand{\eeqa}{\end{eqnarray}}
\newcommand{\eps}{\varepsilon}
\newcommand{\etal}{\emph{et al.~}}
\newcommand{\ie}{\emph{i.e.~}}
\title{A normal approximation for joint frequency estimatation under Local Differential Privacy}
\author{Thomas Carette}
\affil{Sony R\&D}
\begin{document}

\maketitle

%
%
\abstract{
	In the recent years, Local Differential Privacy (LDP) has been one of the corner stone of privacy preserving data analysis. However, many challenges still opposes its widespread application. One of these problems is the scalability
	of LDP to high dimensional data, in particular for estimating joint-distributions. In this paper, we develop an approximate estimator for frequency joint-distribution estimation under so-called pure LDP protocols~\cite{Wangetal17a}.
}

\section{Introduction}

With the advances of data-driven technologies in the last decades, privacy has become a hot topic at all levels of society.
Correspondingly, the litterature on methods that allow statistical analysis and inference while protecting the anonymity of the data subjects has exploded.

One such technique is Differential Privacy~\cite{Dwork14} and its extension, Local Differential Privacy (LDP)~\cite{Kasiviswanathan11}. The key contribution of (Local) Differential Privacy is to quantify, with parameters $\epsilon$ and $\delta$, the degree of anonymization of disclosed data. The parameters $\epsilon$ and $\delta$ are usually chosen 
depending on the problem at hand. One of its features is to ensure \emph{plausible deniability} which is at 
the root of the invention of the \emph{Randomized Response}~\cite{Warner65} method in 1965. Randomized Response (RR), applied on binary data, and its generalization to categorical data (GRR), are prototypical LDP protocols.

Since the introduction of the LDP formalism by Dwork and co-workers, a vast literature has derived from marrying its methodology with (G)RR, as attested by two recent comprehensive reviews~\cite{Xiongetal20, Yangetal20}. While the privacy preserving collection of a single discrete information is relatively well understood, generic solution to the problem of accessing multi-dimensional statistics can still be improved. 

One of the corner stone of recent research stems from the observation that splitting the privacy budget $\epsilon$ for concurrent question leads to an exponentially growing variance, but splitting the population so that each user answers a single question only leads to a linear growth of the variance~\cite{Wangetal17a}. Of course, the downside of the latter approach is that the correlation between the different answers is lost. Various algorithms have been proposed to collect joint-distribution using LDP, based on expectation maximization~\cite{Fantietal16,Xuebin16} or other approaches~\cite{Cormodeetal18}. Both methods suffer from scalability issues in general. On the question of $k$-way marginal release, which is core to a series of statistical methods, the CALM method~\cite{Zhangetal18} was introduced to try to strike a balance between the two strategies. However this balance remains dataset- and task-dependent, meaning higher level algorithms need to be designed.

Another question, and opportunity, lies in how to efficiently encode a complex dataset. Complexity can stem from large or unknown dictionary sizes as well as from high dimensionality and low entropy. Any categorical dataset can be turned one-dimensional at the cost of an exponentially large dictionary.

When using binary encoding of the dictionary, the randomized response can be applied on each bit independently. This approach has been explored extensively. Various method achieves good or optimal variance-utility tradeoffs for frequency estimation~\cite{Wangetal16, Wangetal17a}, but often rely on data transfers increasing exponentially with the number of features. Once again, some level of optimization can be done by randomizing lossy encoding over users, thereby reducing communication and focussing the use of the privacy budget for each user~\cite{Erlingsson14, BassilySmith15, Wangetal17a}, but by limiting the amount of information each user contribute, large populations are required to obtain useful signal/noise ratio on very small marginals.

In order to push back the limit on the number of features reported by single users, Ren~\etal~\cite{Xuebin16} proposed a multi-demensional version of RAPPOR~\cite{Erlingsson14, Fantietal16} by essentially concatenating the Bloom filters of each feature rather than considering the full row as an element of a dictionary. The advantage of this approach is that the size of bit array to communicate grows only linearly with the number of feature instead of exponentially. Also, this decomposition of the information allows easy access to lower-dimensional marginals without the need of suming over all the noisy cells of the full contingency table. The issue with this method is that reconstructing the final joint-distribution can be rather expensive. Indeed, the only method proposed so far was to obtain it numerically using Expectation Maximization (EM). Some acceleration or stabilisation of the convergence can be achieved by using clever initial guesses~\cite{Xuebin16, Wangetal17b}, but this approach is still unsatisfactory in most cases.

Another issue with using numerically estimated joint distribution is that it constitutes a dead-end for analytical research of new algorithms. As an illustration, the research performed on analytical description of the distribution of LDP frequency estimations was critical to investiate theoretic optimal solutions~\cite{Holohanetal16, Kairouzetal16,Wangetal17a}. However, to our knowledge, no such description exist that efficiently describes the distribution of \emph{marginals} in a multi-dimensional dataset.
Many LDP statistical learning algorithms, some of them iterative, rely on the knowledge of conditional probabilities in the dataset~\cite{Xiongetal20, Yangetal20} but because those conditional probabilities are not known analytically, researchers must rely on costly simulations and intuition.

In this paper, we tackle the problem of analytical approximation of the posterior joint distribution of a discrete dataset after perturbation. We focus on pure LDP protocols proposed by~\cite{Wangetal17a}. By solving this problem we hope to facilitate the development of new algorithms leveraging the analytical properties of the deduced equations.

\section{Background and notations}

In this paper, we will use boldface font to denote random variables or functions, and normal font for values.

\begin{table}[h!]
	\center
	\begin{tabular}{lll}
		$\varepsilon, \delta$ & traditional DP parameters & definition \ref{def_dp}\\
		\bf{A} & random function defining a LDP protocol & definition \ref{def_dp}\\
		$D, D'$ & representations of datasets & definition \ref{def_dp}\\
		$\mathcal{X}, \mathcal{Y}$ & input and output spaces of a LDP protocol  & definition \ref{def_dp}\\
		$p, q$ & traditional randomized response paraeters & definition \ref{def_pure_ldp}\\
		$S$ & Support of a random function \bf{A} & definition \ref{def_pure_ldp} \\
		$\mathbb{S}$ & Support operator corresponding to $S$ & equation \ref{eq-support-op}\\
		t & ground truth set of all joint frequencies of values in a dataset $D$ \\
		$\X$ & \emph{evidence matrix} of perturbed dataset & section \ref{sec-md-pureldp}\\
		\bf{c} & set of all marginal frequencies of $\X$ (PMDH) & equations \ref{eq-freq-def}-\ref{eq-coocc-def2}\\
		\bf{t} & estimator of ground truth computed with {\bf c} (TMDH)  & section \ref{sec-md-pureldp}\\
		$\mathcal{I}, \mathcal{J}, \mathcal{S}$ & Sets of columns in $\X$ & section \ref{sec-pmdh}\\
		$\alpha,\beta$ & size of column sets & section \ref{sec-pmdh}\\
		$\km^{(\alpha\beta)}(\mathcal{J}, \mathcal{S})$ & Subset incidence matrix for  $\mathcal{S}\subseteq\mathcal{J}$ & equation \ref{eq_k_def}
	\end{tabular}	
	\caption{Notation}
\end{table}

\subsection{Differential Privacy}

Differential privacy is defined in the context of a trusted curator that process the users data and releases anonymized reports. In differential privacy, the level of anonymity is defined using a statistical framework~\cite{Dwork14}.

\theoremstyle{definition}
\begin{definition}\label{def_dp}
An random function {\bf A}, mapping elements of its domain $\mathcal{X}$ to its image $\mathcal{Y}$, satisfies $(\eps, \delta)$-differential privacy with $\eps > 0$, if and only if for any two datasets $D, D' \in \mathbb{N}^{|\mathcal{X}|}$ satisfying $|| D - D' || = 1$, \ie that differ only in one element, and for all $y \in \mathcal{Y}$
$$
P({\bf A}(D) = y) \leq e^\eps P({\bf A}(D') = y) + \delta
$$
\end{definition}

This definition establishes a framework to \emph{measure} privacy leakage in the parameter $\eps$ with a certain slack $\delta$. In order to lift the requirement to rely on a trusted curator, Kasiviswanathan~\etal~\cite{Kasiviswanathan11} introduced a stricter definition, \emph{Local Differential Privacy}. In this case, rather than ensuring that the random algorithm has an output distribution that is close for neighboring databases, it is imposed that any two elements have a finite probability ratio to have the same image

\theoremstyle{definition}
\begin{definition}\label{def_ldp}
An random function {\bf A}, mapping elements of its domain $\mathcal{X}$ to its image $\mathcal{Y}$, satisfies $(\eps, \delta)$-local differential privacy with $\eps > 0$, if and only if for any two inputs $x, x' \in \mathcal{X}$ we have for all $y \in \mathcal{Y}$
$$
P({\bf A}(x) = y) \leq e^\eps P({\bf A}(x') = y) + \delta
$$
\end{definition}

We will focus on the cases where $\delta=0$, \ie strictly no accidental privacy leakage is allowed.

\subsection{Randomized response}

The randomized response protocol was introduced by Warner~\cite{Warner65} to guarantee anonymity in surveys. It can be defined formally on binary answers as
\beqa
P(\R_{p,q}(1) = 1) &=& q \label{def_rr_q}\\
P(\R_{p,q}(0) = 1) &=& p \label{def_rr_p}
\eeqa
where $\R_{p,q}$ represents the randomized response function.

Note that when $0 < p < q < 1$, the space of $\R_{p,q}$ form a monoid with the composition operation $\circ$, i.e. it forms an
algebra which has all the properties of a group except for the inversion.
\beqa
\R_{p (q'-p') + p', q (q'-p') + p'} &=& \R_{p, q} \circ \R_{p', q'} \qquad \textrm{(complete)}\\
\R_{p, q} &=& \R_{p, q} \circ \R_{0, 1} \qquad \textrm{(identity)}\\
\R_{p, q} \circ (\R_{p', q'} \circ \R_{p'', q''}) &=& (\R_{p, q} \circ \R_{p', q'}) \circ \R_{p'', q''}\qquad \textrm{(associative)}
\eeqa
The identity is a direct consequence of the relation of completeness and the composition operation is always associative. Note also that we need both $p = 1-q$ and $p' = 1-q'$ to have commutativity
between the composition of two randomized response functions.

\subsection{Pure LDP protocols}

Wang~\etal~\cite{Wangetal17a} introduced a family of randomized-response-like protocols, dubbed \emph{pure LDP protocols}.
\theoremstyle{definition}
\begin{definition}\label{def_pure_ldp}
An encoding or hashing function $A_{e} : \mathcal{X} \rightarrow \mathcal{Y}_e \subseteq \mathcal{Y}$, a random function $\mathbf{A}_{p} : \mathcal{Y}_e \rightarrow \mathcal{Y}$,
 and a support function $S : \mathcal{Y} \rightarrow \bar{\mathcal{X}}$ with $\bar{\mathcal{X}}$ the set of sets over $\mathcal{X}$, form a \emph{Pure LDP protocol} if we have two probability values $p<q$ such that for all $x,x'\in\mathcal{X}$, $x\neq x'$
\beqa
P(\mathbf{A}(x)  \in \{y | x \in S(y) \}) &=& q \\
P(\mathbf{A}(x') \in \{y | x \in S(y) \}) &=& p 
\eeqa
where $\mathbf{A} = \mathbf{A}_p \circ A_e$
\end{definition}
$\mathcal{X}$ is the dataset dictionary of values, while $\mathcal{Y}_e$ is the dictionary of encoded values (which can be smaller than the original dictionary). The space $\mathcal{Y}$ can be outside of the $\mathcal{Y}_e$ dictionary space so that the support function cannot be considered as a decoder.

The above definition states that components of an encoding of an input value are conserved with a probability $q$, and that there is a probability $p$ than any other input value would share the same component\footnote{Unfortunately, p and q are used interchangeably in the litterature. We adopt the convention of the RAPPOR paper~\cite{Erlingsson14} while Wang~\etal~\cite{Wangetal17a} use $p$ for the probability of conserving the original value, and $q$ for the probability of polution.}. It can be seen as a generalization of the Randomized Response protocol, which is itself a pure LDP protocol. As noted by the original authors, it is a direct consequence of the definition~\ref{def_pure_ldp} that the use of Bloom filters as encoding function, as in RAPPOR, does not lead to a pure protocol as soon as any hashing collision $h_i(x) = h'_i(x')$ occurs, $h$ and $h'$ being any hashing function of the filter.

The support operator maps perturbed vectors in the Unary Encoding space
\beq\label{eq-support-op}
\mathbb{S}_{x}(y) = 1 \textrm{ if } x \in S(y) \textrm{ and } 0 \textrm{ otherwise}
\eeq
Wang~\etal~\cite{Wangetal17a} prove that
\beq\label{eq-support-est}
\mathbf{t}_x = \frac{\sum_{i} \mathbb{S}_{x}(\mathbf{A}(x_i)) - np}{q-p}
\eeq
is an unbiased estimator for the frequency of any element $i$,
where $x, \{x_i\} \in \mathcal{X}$, and $i$ spans all the elements of the dataset. Note that the random variable $\mathbb{S}_{x}(\mathbf{A}(x_i))$ follow Bernouilli distributions
\beq
\mathbb{S}_{x}(\mathbf{A}(x_i)) \sim
\left\{\begin{array}{ll}
\textrm{Bernouilli}(q) & \textrm{if } x=x_i \\
\textrm{Bernouilli}(p) & \textrm{if } x \neq x_i \\
\end{array}\right.
\eeq
Two variables $\mathbb{S}_{x}(\mathbf{A}(x_i))$ and $\mathbb{S}_{x}(\mathbf{A}(x_j))$ obviously follow independent distributions. However, for a single input they are in general not independent, \ie $\mathbb{S}_{x}(\mathbf{y}) \nindep \mathbb{S}_{x'}(\mathbf{y})$.

\subsection{Multi-dimensionality, marginals and pure LDP protocols}\label{sec-md-pureldp}

Wang~\etal~\cite{Wangetal17a} focussed on instances of pure LDP protocols applied to element frequency estimation, but it is not restricted to this case. The advantage of considering explicitely the multi-dimensionality of the dataset is to be able to avoid the exponential growth of the dictionary with the number of categories, which poses problems for communication costs, information loss during encoding, or marginal retrieval.

A first step to fight ths curse of dimensionality is to follow in the footsteps of Ren~\etal~\cite{Xuebin16} and split the dataset row domain into independent sub-domains for each feature. The encoding function $A_E$ encodes each feature independently and concatenate the results. Similarly, the perturbation function $\mathbf{A}_P$ can be a combination of randomization steps each acting in its own feature sub-domain. Finally the support function $S$ as well can be defined as mapping each feature encoding space into the corresponding dictionary. In this setting, any arbitrary combination of feature-specific pure LDP protocol will form a global pure LDP-protocol.

If we have $f$ features, then we have for the dataset value space $\mathcal{X}$ as well as for the encoded value space $\mathcal{Y}$
\beqa
\mathcal{X} &=& \mathcal{X}^{(1)} \oplus \cdots \oplus \mathcal{X}^{(f)} \\
\mathcal{Y} &=& \mathcal{Y}^{(1)} \oplus \cdots \oplus \mathcal{Y}^{(f)} \\
\eeqa
and
\beq
\mathbb{S}_{x}(\mathbf{A}_p(A_e(x_i))) = \mathbb{S}^{(1)}_{x}(\mathbf{A}^{(1)}_p(A^{(1)}_e(x^{(1)}_i))) \oplus \cdots \oplus  \mathbb{S}^{(f)}_{x}(\mathbf{A}^{(f)}_p(A^{(f)}_e(x^{(f)}_i)))
\eeq
where the superscripts indicate the feature sub-domain and the $\oplus$ is the direct sum of the resulting elements. In the following, we will assume that a basis set is built on such a decomposition of $\mathcal{X}$ and that any index $i$ indicates a specific value $x^{(k)}$ belonging to a specific feature space $\mathcal{X}^{(k)}$, thereby allowing us to unambiguously omit the feature superscript in the notation. We define the \emph{evidence matrix} random variable
\beq
\X_{ni} = \mathbb{S}_{i}(\mathbf{A}_p(A_e(x_n)))
\eeq
For a specific $i$, it forms a sequence of independent Bernouilli experiments so that the marginal sum
\beq\label{eq-freq-def}
\ci = \sum_{n=0}^N \X_{ni}
\eeq
follows the distribution of sum of Binomial random variables, one of probability $p$ and the other of probability $q$. We can also construct higher order perturbed marginal counts
\beqa
\cij &=& \sum_{n=0}^N \X_{ni}\X_{nj}\label{eq-coocc-def}\\
\cc_{ijk} &=& \sum_{n=0}^N \X_{ni}\X_{nj}\X_{nk} \label{eq-coocc-def2}\\
\cdots \nonumber
\eeqa
Those counts also follow the distribution of sum of Binomial random variables. Considering that every possible row in $\X$ occurs with a certain probability, it is straightforward that a change of basis set in the histogram space reduces the joint distribution of all {\bf c} to a sum of multinomial distributions. This change of basis set is explicited in appendix~\ref{app-rowtype} and mulinomial properties are used in the lemma of appendix~\ref{app-c-multinom}. However, it cannot be used to derive an analytical posterior for the joint frequency distribution.

In this paper, we perform an analytical derivation of multivariate normal parameters 
for the joint distribution $\{\ci, \cij, \dots\}$, later referred to as the Perturbed Multi-Dimensional Histogram (PMDH). This allows to use Bayesian Inference to deduce the Truth Multi-Dimensional Histogram (TMDH)
\beq
P(A | B) \propto P(A) P(B|A) 
\eeq
with $A = (\{\ti, \tij, \dots\} = \{t_i, t_{ij}, \dots\})$ and $B = ()\{\ci, \cij, \dots\} = \{c_i, c_{ij}, \dots\})$.
We demonstrate the use of this model by computing analytically the distribution of the LDP correlation estimator. 

Some generalization of our results could be performed using the properties of the normal distribution under linear basis transformations. Furthermore, the developments presented in the appendix of this paper can serve as a basis for dealing with more complex cases. A relatively straightforward but tedious work would be to adapt them to LH or GRR of each independent feature space.

The decomposition of {\bf c} in independent distributions can easily be done if $\mathbb{S}_{x} \circ \mathbf{A}_p \circ A_e$ is block diagonal and that the index $i, j, k, \dots$ belong to different blocks, \ie if $\X_{ni} \indep \X_{nj} \indep \X_{nk}$.
In general, $\mathbf(A(x_n))$ is composed of multiple independent random variables. For instance one for GRR, and $|\mathcal{X}|$ for UE. If $\X_{ni} \indep \X_{nj}$, then the set of random variables they depend on is disjoint, and they compound to the probabilities $p$ or $q$ according to~\ref{def_pure_ldp}.  If $\X_{ni} \nindep \X_{nj}$ it means they depend on common random variables. For instance, using Bloom filters over set values data (to avoid collisions), two values $x$ and $x'$ can share a bit in their support function, the other bits of the filter serving to disambiguate the values. In LH and GRR, all dictionary values are supported by the same underlying multinomial random variable. 

We list a few examples of correlated co-occurrence probabilities in Table~\ref{tab-evid-corr}.
In GRR, all co-occurrence counts $\cij, \cc_{ijk}, \dots$ are simply 0 if any pair of indices belong to the same block, while in LH they are merely modulated depending on the size of the hashing dictionary.

Besides standard hashing and encoding techniques, $A_e$ can also implement various linear transformations like for instance in the Hadamard Randomized Response~\cite{Acharya19a} which also lead to dependencies between evidence matrix elements. A careful investigation of all those aspects is beyond the scope of the present work. In the following, we assume $\X_{ni} \indep \X_{nj}$.

\begin{table}
	\center
	\caption{\label{tab-evid-corr} Co-occurrence probabilities. $p_i = E(\X_{ni})$. The probability is the success probability of $\prod_{i=1}^k\X_{ni}$. LH: $d$ is the dimension of the hash space}~\\
	\begin{tabular}[h!]{lc}
		Protocol & probability\\
		UE  & $\prod_i p_i$ \\
		GRR & 0 \\
		LH  & $p_1/d^k$ \\
	\end{tabular}
\end{table}

\subsection{Previous work}

Different aspects of joint-distribution have been considered, but few concern the analytical development of marginal estimators. Fanti~\etal~\cite{Fantietal16} proposed the use an EM-algorithm to reconstruct the joint-distribution of a dataset encoded row by row. 
Ren~\etal~\cite{Xuebin16} also considered using EM, optimizing the solution by opting for a column-wise encoding scheme and speeding up convergence using an initial guess obtained with Lasso.
Wang~\etal~\cite{Wangetal17b} proposed an analytical, iterative formula for the co-occurrences frequency estimation, but they didn't derive the co-variance matrix between estimators so that EM was still necessary to obtain the joint-distribution.
Cormode~\etal~\cite{Cormodeetal18} showed the superiority of Hadamard (or Fourier) transforms to collects marginals of binary datasets.
The CALM framework of Zhang~\etal~\cite{Zhangetal18} further improves numerical derivation of marginals by offering a systematic way to neglect specific correlations, and using maximum entropy to extrapolate between lower dimensional marginals. These lower-dimensional marginals are still constructed using established methods of full row value frequency estimation.


In this work, we leverage the concept of pure LDP protocols~\cite{Wangetal17a} (as discussed in section~\ref{sec-md-pureldp}), inspired by the LoPub approach to multi-dimensional datasets~\cite{Xuebin16}, and develop those following the empirical estimation methodology~\cite{Holohanetal16, Kairouzetal16}.

\section{Normal parameters for Multi-Dimensional Histograms}

According to the Central Limit Theorem (CLT), the distribution $\mathbf{C} = \{\ci, \cij, \cc_{ijk}, \dots\}$ converges to a normal distribution for large $N$. We have 
\beqa
\lim_{N \rightarrow \infty} \mathbf{C}/N &\sim& \mathcal{P}(c)\ \NN\left(\hat \mu, \hat \Sigma; t\right)\\
\label{eq-infer-pmdh}
\lim_{N \rightarrow\infty} \mathbf{T}/N &\sim& \mathcal{P}(t)\ \NN\left(\mu, \Sigma; c\right) \label{eq-infer-tmdh}
\eeqa
where $\mathcal{P}(t)$ and $\mathcal{P}(c)$ are priors. Without any specific prior knowledge, a so-called ``non-informative prior'' is chosen over the TMDH domain. This domain is constrained in general by the fact that the individual frequencies must be comprised between 0 and 1, that counts are discrete, and that co-occurrences cannot be larger than the corresponding frequencies. Other constrains are specific to the encoding function which, for instance, could guarantee that categories in a single dimension are mutually exclusive.

In general, the dimension of this representation makes it computationally intractable, but in practice it is possible to truncate it to low-order co-occurrences as in most cases high-order co-occurrences are extremely unlikely to be statistically significant. It is however not our purpose to propose scalable algorithms using the PMDH and TMDH distributions, but rather to lay the mathematical ground work to be built upon.

In section~\ref{sec-pmdh}, we derive analytical formulas for  $\hat \mu$ and $\hat \Sigma$ for arbitrary co-occurrence order. 
In section~\ref{sec-tmdh}, we derive the associated inferred parameters $\mu$ and $\Sigma$.

For defining arbitrary co-occurrences, we will be using curved letters $\mathcal{I}$, $\mathcal{J}$, $\mathcal{S}$,\ldots as sets of indices for which bits corresponding to a single row are positive, either before or after the randomization process depending on the context. The number of elements in such a set, \ie~the number of positive
bits in the row, will be denoted $|\mathcal{I}|$.

\subsection{PMDH normal parameters}\label{sec-pmdh}

\subsubsection{Expected values}

\begin{theorem}\label{theo-exp-c}
The expected value of a co-occurrence of order $|\mathcal{I}|$ of indices $\mathcal{I} = [i_1, \dots, i_k]$ can be developed as
\beq
E(\cc_{\mathcal{I}}) = \sum_{\mathcal{S} \subset \mathcal{I}} (q-p)^{|\mathcal{S}|}p^{|\mathcal{I}|-|\mathcal{S}|}  t_\mathcal{S}
\eeq
where $t_\mathcal{\emptyset}=N$, the number of elements in the dataset.
\end{theorem}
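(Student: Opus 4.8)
The plan is to reduce the claim to a per-row computation by linearity of expectation, factorize each row's contribution using the assumed independence $\X_{ni}\indep\X_{nj}$, and then reorganize the resulting double sum so that the ground-truth joint counts $t_{\mathcal{S}}$ appear. Starting from $\cc_{\mathcal{I}}=\sum_{n=0}^{N}\prod_{i\in\mathcal{I}}\X_{ni}$, linearity gives $E(\cc_{\mathcal{I}})=\sum_{n}E\!\big(\prod_{i\in\mathcal{I}}\X_{ni}\big)$, and since the evidence-matrix entries for distinct indices of a single row are assumed independent, the expectation of the product splits as $\prod_{i\in\mathcal{I}}E(\X_{ni})$.

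Next I would insert the Bernoulli means recalled before the statement: $E(\X_{ni})=q$ when index $i$ is the value truly carried by row $n$, and $E(\X_{ni})=p$ otherwise. Writing $\mathcal{T}_n$ for the subset of $\mathcal{I}$ consisting of indices actually present in row $n$, the factorized expectation for that row equals $q^{|\mathcal{T}_n|}p^{\,|\mathcal{I}|-|\mathcal{T}_n|}$. The key algebraic move is to substitute $q=(q-p)+p$ in each of the $|\mathcal{T}_n|$ factors equal to $q$ and distribute, which produces $\sum_{\mathcal{S}\subseteq\mathcal{T}_n}(q-p)^{|\mathcal{S}|}p^{\,|\mathcal{I}|-|\mathcal{S}|}$ and thereby creates exactly the weights appearing in the theorem.

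Finally I would sum over rows and interchange the order of summation. For a fixed $\mathcal{S}\subseteq\mathcal{I}$, row $n$ contributes the term $(q-p)^{|\mathcal{S}|}p^{\,|\mathcal{I}|-|\mathcal{S}|}$ precisely when $\mathcal{S}\subseteq\mathcal{T}_n$, i.e. when every index of $\mathcal{S}$ is truly present in row $n$; the number of such rows is by definition the joint ground-truth frequency $t_{\mathcal{S}}$, with $t_{\emptyset}=N$ because every row meets the empty constraint. Collecting these contributions yields $E(\cc_{\mathcal{I}})=\sum_{\mathcal{S}\subseteq\mathcal{I}}(q-p)^{|\mathcal{S}|}p^{\,|\mathcal{I}|-|\mathcal{S}|}t_{\mathcal{S}}$, as required.

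I expect the main obstacle to be purely the combinatorial bookkeeping of this last interchange: one must check that $\mathcal{S}$ should range over all subsets of $\mathcal{I}$ rather than only those realized by a particular row, and that the aggregation $\sum_{n:\,\mathcal{S}\subseteq\mathcal{T}_n}1$ coincides with $t_{\mathcal{S}}$ in every case, including $\mathcal{S}=\emptyset$. It is worth emphasizing that the independence hypothesis is exactly what licenses the factorization of the per-row expectation; without it the product would not split and the expansion of $q=(q-p)+p$ could not be performed factor by factor, so the identity holds precisely in the independent regime assumed at the end of Section~\ref{sec-md-pureldp}.
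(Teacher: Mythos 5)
Your proof is correct, but it takes a genuinely different route from the paper's. The paper proceeds distributionally: it first decomposes the randomized matrix by ground-truth row type via Lemma~\ref{theo-row-type-counts} (an inclusion--exclusion recursion expressing the number of rows of each type through the counts $t_\mathcal{S}$), then represents $\cc_\mathcal{I}$ in Lemma~\ref{theo-co-occ-formula} as a sum of binomial random variables with success probabilities $q^{b}p^{|\mathcal{I}|-b}$, and only then extracts the mean using $E(\B(n,\pi))=n\pi$, a summation swap, and the binomial identity $\sum_{b}\binom{k}{b}q^{b}(-p)^{k-b}=(q-p)^{k}$. You instead work directly at the level of a single row: factorize $E\bigl(\prod_{i\in\mathcal{I}}\X_{ni}\bigr)$ by independence into $q^{|\mathcal{T}_n|}p^{|\mathcal{I}|-|\mathcal{T}_n|}$, substitute $q=(q-p)+p$ factor by factor to get $\sum_{\mathcal{S}\subseteq\mathcal{T}_n}(q-p)^{|\mathcal{S}|}p^{|\mathcal{I}|-|\mathcal{S}|}$, and interchange the row and subset sums using $\#\{n:\mathcal{S}\subseteq\mathcal{T}_n\}=t_\mathcal{S}$. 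This is shorter and avoids inclusion--exclusion entirely --- the alternating sums the paper must cancel with the binomial theorem simply never appear, because you expand in the direction $q\mapsto(q-p)+p$ rather than resolving $(q-p)^k$ out of signed row-type counts. What the paper's heavier machinery buys is reuse: Lemma~\ref{theo-co-occ-formula} delivers the full distribution of $\cc_\mathcal{I}$, which is recycled verbatim for the variance (Theorem~\ref{theo-var-c}, via $V(\B(n,\pi))=n\pi(1-\pi)$) and, through the row-type/multinomial representation, for the covariances (Theorem~\ref{theo-covar-c}); your expectation-only argument does not extend to second moments so directly, since for overlapping index sets the relevant products of entries within a row no longer factorize into independent pieces. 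Two minor remarks: the factorization step needs \emph{mutual} independence of $\{\X_{ni}\}_{i\in\mathcal{I}}$, not merely the pairwise relation as literally written at the end of Section~\ref{sec-md-pureldp} --- but the paper's own proof multiplies probabilities across all bits of a row and thus rests on the same strengthened hypothesis, so this is a shared assumption rather than a gap in your argument; and your derivation (like the paper's appendix, where $k$ runs from $0$ to $|\mathcal{I}|$) confirms that the sum in the theorem should be over $\mathcal{S}\subseteq\mathcal{I}$, the strict $\subset$ in the statement being a typo.
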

The proof can be found in Appendix~\ref{app-exp-c}. Because the multi-variate expected value $E(\mathbf{C})$ is built out of the expected value of the marginals$\{E(\ci), E(\cij), \dots\}$ , we can write the above theorem in matrix form.
Our basis set for the multi-dimensional histogram is split in the order of co-occurrence considered: $\alpha = 1$ corresponds to basic counts $\ci$, $\alpha=2$ to pair-wise co-occurrences $\cij$, $\alpha=3$, 3rd order co-occurrences $\cc_{ijk}$, etc. Except stated otherwise, we will use greek letters $\alpha$, $\beta$, $\gamma$ to denote indices of \emph{blocks} in the histrogram vector representation, \ie range of rows and columns corresponding to co-occurrence counts of a given order.
From theorem~\ref{theo-exp-c} we have
\beq\label{eq-exp-c-vec}
\hat \mu = M \cdot T + P N 
\eeq
defining
\beqa
M &=& \sum_{\alpha=1}^{D} \sum_{\beta=1}^{\alpha} p^{\alpha-\beta}(q-p)^\beta \km^{(\alpha\beta)}\label{eq-c-exp}\\
P &=& \sum_{\alpha=1}^{D} p^{\alpha} \mathbb{V^{(\alpha)}}
\eeqa
where $\mathbb{V^{(\alpha)}}$ the vector of ones in the $\alpha$ block and 0 elsewhere. The matrix $M$ is triangular. The matrices $\km^{(\alpha\beta)}$ are binary matrices defined to map all sets of indices $\mathcal{S}$ of size $\beta$ to the set of indices $\mathcal{J}$ of size $\alpha$. It is defined element-wise as
\beq\label{eq_k_def}
\km^{(\alpha\beta)}(\mathcal{J}, \mathcal{S}) = 1\ \mathrm{iif}\ \mathcal{S} \subseteq \mathcal{J}
\eeq
An example for three columns is given in appendix~\ref{app-ex-m3}. Note that $\km^{\alpha\alpha}$ is the identity matrix in the $\alpha$ block and 0 elsewhere. The result above relies on the assumption that we know the dataset size, \ie $t_\emptyset = N$ ($\alpha = 0$), $V(N)=0$.

Anticipating on subsequent development, it is also possible to include it in the vector representation, hence dropping the $PN$ term of equation~\ref{eq-exp-c-vec} and to treat the knowledge of $N$ as a prior in the inference process. Alternatively, we can account for overdetermined encoding strategies. For instance, with unary encoding, we can account for a contrain $\sum_{i\in \mathcal{I}} t_i = N$ for a pertinent subset of columns $\mathcal{I}$. For a single category, the modified $M$ takes the form $\tilde M = M + b \mathbb{J}$ with $\mathbb{J}_{ij}=1$ $\forall$ $i,j$. The inverse of that matrix is easily obtained from $M^{-1}$ using the Sherman-Morrison formula
\beq\label{eq-sm-formula}
\tilde M^{-1} = M^{-1} - \frac{b\ M^{-1} \mathbb{J} M^{-1}}{ 1 + b {\bar 1}^T M^{-1} {\bar 1}}
\eeq
where ${\bar 1}_i = 1$ $\forall i$. It is trivial to extend this to any linear constrain on the $t_i$'s.

\subsubsection{Covariance matrix}

\begin{theorem}\label{theo-var-c}
The variance of a co-occurrence of order $|\mathcal{I}|$, of indices $\mathcal{I} = [i_1, \dots, i_k]$ can be developed as
\beqa
V(\cc_{I}) &=& \sum_{\mathcal{S} \subseteq \mathcal{I}} t_\mathcal{S}\ (q-p)^{|\mathcal{S}|}p^{|\mathcal{I}|-|\mathcal{S}|} (1 - (q+p)^{|\mathcal{S}|}p^{|\mathcal{I}|-|\mathcal{S}|})
\eeqa
\end{theorem}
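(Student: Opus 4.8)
The plan is to exploit the fact that $\cc_{\mathcal{I}} = \sum_{n=1}^N \prod_{i\in\mathcal{I}}\X_{ni}$ is a sum of independent contributions, one per row, so that the variance decomposes additively over rows. For a fixed row $n$, the product $Y_n = \prod_{i\in\mathcal{I}}\X_{ni}$ is itself a $\{0,1\}$ random variable, hence Bernoulli with success probability $\pi_n = E(Y_n)$ and variance $\pi_n(1-\pi_n) = \pi_n - \pi_n^2$. Since rows are independent, $V(\cc_{\mathcal{I}}) = \sum_n (\pi_n - \pi_n^2) = E(\cc_{\mathcal{I}}) - \sum_n \pi_n^2$. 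The first term is already supplied by Theorem~\ref{theo-exp-c}, so the entire task reduces to evaluating the second-moment sum $\sum_n \pi_n^2$ in closed form.

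To compute $\pi_n$, I would invoke the standing independence assumption $\X_{ni}\indep\X_{nj}$, which lets the expectation of the product factorise: $\pi_n = \prod_{i\in\mathcal{I}} E(\X_{ni})$. Writing $\mathcal{M}_n\subseteq\mathcal{I}$ for the subset of indices that match the true value of row $n$ (so $E(\X_{ni})=q$ for $i\in\mathcal{M}_n$ and $p$ otherwise), this gives $\pi_n = q^{|\mathcal{M}_n|}p^{|\mathcal{I}|-|\mathcal{M}_n|}$ and therefore $\pi_n^2 = (q^2)^{|\mathcal{M}_n|}(p^2)^{|\mathcal{I}|-|\mathcal{M}_n|}$.

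The key step is then to reproduce, for the squared probabilities, exactly the combinatorial expansion that underlies the proof of Theorem~\ref{theo-exp-c}, but with the substitution $q\mapsto q^2$, $p\mapsto p^2$. Writing $q^2 = p^2 + (q-p)(q+p)$ and expanding $\prod_{i\in\mathcal{M}_n}(p^2 + (q-p)(q+p))$ into a sum over subsets $\mathcal{S}\subseteq\mathcal{M}_n$ yields $\pi_n^2 = \sum_{\mathcal{S}\subseteq\mathcal{M}_n}(q-p)^{|\mathcal{S}|}(q+p)^{|\mathcal{S}|}p^{2|\mathcal{I}|-2|\mathcal{S}|}$. Summing over $n$ and collecting, for each fixed $\mathcal{S}$, the rows with $\mathcal{S}\subseteq\mathcal{M}_n$ — whose number is precisely the true co-occurrence count $t_{\mathcal{S}}$ — gives $\sum_n \pi_n^2 = \sum_{\mathcal{S}\subseteq\mathcal{I}} t_{\mathcal{S}}(q-p)^{|\mathcal{S}|}(q+p)^{|\mathcal{S}|}p^{2|\mathcal{I}|-2|\mathcal{S}|}$. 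Subtracting this from $E(\cc_{\mathcal{I}})$ and factoring out the common $t_{\mathcal{S}}(q-p)^{|\mathcal{S}|}p^{|\mathcal{I}|-|\mathcal{S}|}$ from both sums produces the bracketed factor $1-(q+p)^{|\mathcal{S}|}p^{|\mathcal{I}|-|\mathcal{S}|}$, which is the claimed expression.

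I expect the only real obstacle to be bookkeeping rather than anything conceptual: one must argue carefully that re-indexing the double sum $\sum_n\sum_{\mathcal{S}\subseteq\mathcal{M}_n}$ as $\sum_{\mathcal{S}\subseteq\mathcal{I}}\#\{n:\mathcal{S}\subseteq\mathcal{M}_n\}$ correctly identifies the coefficient as $t_{\mathcal{S}}$, matching the convention fixed in Theorem~\ref{theo-exp-c} (in particular $t_\emptyset = N$). The factorisation of $\pi_n$ is where the within-row independence hypothesis is essential; without it the per-row success probability would not collapse to $q^{|\mathcal{M}_n|}p^{|\mathcal{I}|-|\mathcal{M}_n|}$ and the clean subset expansion would fail.
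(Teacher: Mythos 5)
Your proof is correct, and it takes a genuinely more elementary route than the paper's. The paper first groups rows by their ground-truth type on $\mathcal{I}$: Lemma~\ref{theo-row-type-counts} (an inclusion--exclusion identity) expresses the row-type counts as alternating sums of the $t_\mathcal{S}$, Lemma~\ref{theo-co-occ-formula} then writes $\cc_\mathcal{I}$ as a sum of independent binomials with success probabilities $q^{b}p^{|\mathcal{I}|-b}$, and the claimed formula follows from $V(\B(n,p))=np(1-p)$ plus two applications of the binomial theorem that collapse the alternating sums into $(q-p)^k$ and $(q^2-p^2)^k=(q+p)^k(q-p)^k$ factors. You instead stay at the level of individual rows: $\cc_\mathcal{I}=\sum_n Y_n$ with $Y_n$ Bernoulli of parameter $\pi_n=q^{|\mathcal{M}_n|}p^{|\mathcal{I}|-|\mathcal{M}_n|}$, so $V(\cc_\mathcal{I})=E(\cc_\mathcal{I})-\sum_n\pi_n^2$, and your subset-indexed expansion of $q^2=p^2+(q-p)(q+p)$ followed by the exchange of summation $\sum_n\sum_{\mathcal{S}\subseteq\mathcal{M}_n}=\sum_{\mathcal{S}\subseteq\mathcal{I}}t_\mathcal{S}\,(\cdot)$ accomplishes in one stroke what the paper's inversion-plus-collapse does in two; the identification $\#\{n:\mathcal{S}\subseteq\mathcal{M}_n\}=t_\mathcal{S}$ is precisely the definition of the ground-truth co-occurrence counts (with $t_\emptyset=N$), so the bookkeeping you flag as the only obstacle is sound, and the same expansion with $q=p+(q-p)$ incidentally re-proves Theorem~\ref{theo-exp-c}. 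Your factorization of $\pi_n$ uses mutual independence of $\{\X_{ni}\}_{i\in\mathcal{I}}$ within a row, which is exactly the standing assumption the paper's Lemma~\ref{theo-co-occ-formula} relies on implicitly, so no extra hypothesis is introduced. What the paper's heavier machinery buys is reuse: the row-type representation also underlies the multinomial analysis of Lemma~\ref{theo-covmat-sbs} needed for the covariance Theorem~\ref{theo-covar-c}, where per-row contributions to two overlapping co-occurrences are dependent Bernoulli variables and the diagonal identity $V(Y_n)=\pi_n-\pi_n^2$ alone no longer suffices --- though your method would extend there as well by computing $E(Y_nZ_n)-E(Y_n)E(Z_n)$ row by row.
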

The proof can be found in Appendix~\ref{app-var-c}.

\begin{theorem}\label{theo-covar-c}
Consider a PMDH built on a dataset of $N$ rows and of columns $\mathcal{I}$ with corresponding underlying ground truth $T = \{t_\mathcal{J} | \mathcal{J} \subseteq \mathcal{I}\}$. Consider two arbitrary column sets $\mathcal{J}$ and $\mathcal{J}'$. The covariance between the co-occurrences counts $\cc_\mathcal{J}$ and $\cc_\mathcal{J}'$ is
\beq
V(\cc_\mathcal{J}, \cc_{\mathcal{J}'})  = 
\sum_{\mathcal{S}\subseteq {\mathcal{J}_\cup}}
t_{\mathcal{S}} \ (q-p)^{|\mathcal{S}|}\ p^{|\mathcal{J}_\cup| - |\mathcal{S}|}
\left(1 - (q+p)^{|\mathcal{S} \cap \mathcal{J}_\cap|} p^{|\mathcal{J}_\cap| - |\mathcal{S} \cap \mathcal{J}_\cap|} \right)
\eeq
where $\mathcal{J}_\cup = \mathcal{J} \cup \mathcal{J}'$ and $\mathcal{J}_\cap = \mathcal{J} \cap \mathcal{J}'$.
\end{theorem}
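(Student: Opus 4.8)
The plan is to exploit the row-wise independence of the evidence matrix to collapse the covariance onto a single-row computation, and then to recover the closed form by the same binomial bookkeeping that underlies Theorem~\ref{theo-exp-c}. Write $Y^{(n)}_\mathcal{J} = \prod_{i\in\mathcal{J}}\X_{ni}$, so that $\cc_\mathcal{J} = \sum_n Y^{(n)}_\mathcal{J}$. Because distinct rows are independent, in the expansion of $E(\cc_\mathcal{J}\cc_{\mathcal{J}'}) - E(\cc_\mathcal{J})E(\cc_{\mathcal{J}'})$ every cross term with $n\neq m$ cancels, leaving
\[
V(\cc_\mathcal{J},\cc_{\mathcal{J}'}) = \sum_n \Big( E\big(Y^{(n)}_\mathcal{J} Y^{(n)}_{\mathcal{J}'}\big) - E\big(Y^{(n)}_\mathcal{J}\big)E\big(Y^{(n)}_{\mathcal{J}'}\big)\Big).
\]
This is the one place row independence is used; everything after is a per-row identity.

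First I would simplify the two single-row expectations. Since each $\X_{ni}$ is an indicator, $\X_{ni}^2 = \X_{ni}$, so the repeated columns in $\mathcal{J}\cup\mathcal{J}'$ collapse and $Y^{(n)}_\mathcal{J} Y^{(n)}_{\mathcal{J}'} = Y^{(n)}_{\mathcal{J}_\cup}$. For the product of expectations I would use the within-row independence assumption $\X_{ni}\indep\X_{nj}$ to factor each expectation over its columns; writing $T_n$ for the set of true positive indices of row $n$ (so $E(\X_{ni})=q$ for $i\in T_n$ and $p$ otherwise) and using $|\mathcal{J}|+|\mathcal{J}'| = |\mathcal{J}_\cup|+|\mathcal{J}_\cap|$ together with the analogous identity for intersections with $T_n$, the exponents reorganize to give $E(Y^{(n)}_\mathcal{J})E(Y^{(n)}_{\mathcal{J}'}) = E(Y^{(n)}_{\mathcal{J}_\cup})\,E(Y^{(n)}_{\mathcal{J}_\cap})$. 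Factoring out $E(Y^{(n)}_{\mathcal{J}_\cup})$ then yields the structure $E(Y^{(n)}_{\mathcal{J}_\cup})\big(1 - E(Y^{(n)}_{\mathcal{J}_\cap})\big)$ already visible in the statement.

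It remains to expand over subsets and sum over rows. The term $\sum_n E(Y^{(n)}_{\mathcal{J}_\cup})$ is exactly $E(\cc_{\mathcal{J}_\cup})$, which by Theorem~\ref{theo-exp-c} equals $\sum_{\mathcal{S}\subseteq\mathcal{J}_\cup} t_\mathcal{S}(q-p)^{|\mathcal{S}|}p^{|\mathcal{J}_\cup|-|\mathcal{S}|}$ and supplies the ``$1$'' part. For the subtracted term I would expand $E(Y^{(n)}_{\mathcal{J}_\cup})E(Y^{(n)}_{\mathcal{J}_\cap})$ column by column: columns in $\mathcal{J}_\cup\setminus\mathcal{J}_\cap$ contribute a single factor $q=p+(q-p)$ or $p$, while columns in $\mathcal{J}_\cap$ appear in both expectations and contribute a squared factor $q^2 = p^2+(q^2-p^2)$ or $p^2$. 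Expanding both products and indexing the chosen ``true'' columns by a subset $\mathcal{S}\subseteq\mathcal{J}_\cup\cap T_n$, split as $\mathcal{S}=(\mathcal{S}\setminus\mathcal{J}_\cap)\cup(\mathcal{S}\cap\mathcal{J}_\cap)$, gives a summand $(q-p)^{|\mathcal{S}\setminus\mathcal{J}_\cap|}(q^2-p^2)^{|\mathcal{S}\cap\mathcal{J}_\cap|}$ times the matching power of $p$. Summing over rows replaces the constraint $\mathcal{S}\subseteq T_n$ by the count $t_\mathcal{S}$, and finally using $q^2-p^2=(q-p)(q+p)$ to re-collect the exponents reproduces $(q-p)^{|\mathcal{S}|}p^{|\mathcal{J}_\cup|-|\mathcal{S}|}(q+p)^{|\mathcal{S}\cap\mathcal{J}_\cap|}p^{|\mathcal{J}_\cap|-|\mathcal{S}\cap\mathcal{J}_\cap|}$, the subtracted factor in the claim.

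The main obstacle is purely bookkeeping: correctly tracking how the $\mathcal{J}_\cap$ columns, counted twice, produce the $(q+p)$ factor through $(q-p)(q+p)=q^2-p^2$, and verifying that the partition of $\mathcal{S}$ relative to $\mathcal{J}_\cap$ reassembles the two separate binomial expansions into a single closed form. A convenient sanity check is the diagonal case $\mathcal{J}=\mathcal{J}'$, where $\mathcal{J}_\cup=\mathcal{J}_\cap=\mathcal{I}$ and the formula must collapse to the variance of Theorem~\ref{theo-var-c}; matching that special case confirms the exponent accounting.
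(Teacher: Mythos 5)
Your proof is correct, and it takes a genuinely different route from the paper. You collapse the covariance onto single rows via row independence, then handle each row elementarily: indicator idempotence gives $Y^{(n)}_\mathcal{J} Y^{(n)}_{\mathcal{J}'} = Y^{(n)}_{\mathcal{J}_\cup}$, within-row independence gives the factorization $E(Y^{(n)}_\mathcal{J})E(Y^{(n)}_{\mathcal{J}'}) = E(Y^{(n)}_{\mathcal{J}_\cup})E(Y^{(n)}_{\mathcal{J}_\cap})$, and the expansion $q = p + (q-p)$, $q^2 = p^2 + (q-p)(q+p)$ organizes the subset sum directly over \emph{true} columns $\mathcal{S} \subseteq T_n$, so that summing over rows produces $t_\mathcal{S}$ immediately. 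The paper instead partitions the rows by ground-truth type $b$, invokes the multinomial covariance of the row-type representation (Lemma~\ref{theo-covmat-sbs}) on each segment to reach the same per-row expression $p_{\mathcal{J}_\cup}(\mathcal{B})\bigl(1 - p_{\mathcal{J}_\cap}(\mathcal{B})\bigr)$ that you obtain directly, then converts segment sizes to co-occurrence counts via the inclusion--exclusion Lemma~\ref{theo-row-type-counts}, which forces two applications of the binomial theorem to resum the alternating signs. Your choice of expansion basis makes that resummation unnecessary, so your argument is shorter and bypasses both auxiliary lemmas; what the paper's detour buys is the exact joint multinomial distribution of the PMDH (equation~\ref{eq-c-multinom}) as a by-product, which it reuses elsewhere (e.g.\ for exact sampling in the validation section), and shared machinery with the marginal-distribution Lemma~\ref{theo-co-occ-formula}. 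One presentational nit: when you write $E(\X_{ni}) = q$ for $i \in T_n$ you are implicitly using the per-column Bernoulli parameters of the pure protocol and the standing assumption $\X_{ni} \indep \X_{nj}$ from section~\ref{sec-md-pureldp}; stating that explicitly up front would make the factorization step airtight.
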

The proof can be found in Appendix~\ref{app-covar-c}.

We can see directly that that Theorem~\ref{theo-var-c} is a special case of Theorem~\ref{theo-covar-c} with $\mathcal{J}_\cap =\mathcal{J}_\cup = \mathcal{J} = \mathcal{J}'$. Similarly, if $\mathcal{J}_\cap = \emptyset$ the correlation vanishes, as expected. Examples of low-order cases are

\beqa
V(\ci, \cij) &=& N\ p^2(1-p) + t_j\ p(q-p)(1-p)\nonumber\\ && \qquad\qquad\qquad
+ t_i\ p(q - p)[1 - q - p] + t_{ij}\ (q - p)^2(1 - q - p)\\
V(\cij, \cc_{jk}) &=& N\ p^3 (1 - p) + (t_i + t_k)\ p^2 (q-p)(1 - p) + t_j\ p^2(q-p)(1 - q - p) \nonumber\\ && \qquad\qquad\qquad\ 
(t_{ij} + t_{jk})\ p (q - p)^2(1 - q - p ) + t_{ik}\ p (q-p)^2(1 - p)\nonumber\\ && \qquad\qquad\qquad\ 
t_{ijk} (q-p)^3 (1-q-p)
\eeqa

\subsection{TMDH normal parameters}\label{sec-tmdh}

The likelihood of a certain TMDH value after observing one PMDH $C = \{c_i, c_{ij}, \dots \}$ is obtained in the normal approximation~\ref{eq-infer-tmdh} using equation~\ref{eq-exp-c-vec}
\beqa\label{eq-infer-tmdh-params}
\mu &=& M^{-1} (C - NP) \\
\Sigma &=& M^{-1} \hat \Sigma (M^{-1})^T
\eeqa
Here we must realize that $\hat \Sigma$ depends on the ground truth $T$ and hence cannot be used as-is to compute the likelihood. The simplest way is to inject the expected value $\mu$ in the computation of $\Sigma$. However, a more careful approach is sometimes necessary in the presence of prior information.

In order to compute mean and variance of the TMDH, we first deduce the form of $M^{-1}$
\begin{lemma}\label{theo-inv-m}
Given a matrix $M$ of the form~\ref{eq-c-exp}, its inverse takes the form
\beq
M^{-1} 
=
\sum_{\alpha=1}^{D} (q-p)^{-\alpha} 
\sum_{\beta=1}^{\alpha} (-p)^{\alpha - \beta}\ \km^{(\alpha\beta)}
\eeq
\end{lemma}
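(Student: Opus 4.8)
The plan is to verify directly that the proposed matrix is a right inverse of $M$, exploiting the block structure in the co-occurrence order and the multiplicative algebra of the incidence matrices $\km^{(\alpha\beta)}$. Both $M$ and the candidate inverse are block lower-triangular in the order index: the block sitting in row-order $\alpha$ and column-order $\beta$ (with $\beta\le\alpha$) is a scalar multiple of $\km^{(\alpha\beta)}$, namely $p^{\alpha-\beta}(q-p)^\beta\,\km^{(\alpha\beta)}$ for $M$ and $(q-p)^{-\alpha}(-p)^{\alpha-\beta}\,\km^{(\alpha\beta)}$ for the candidate. Since $M$ is a finite square matrix whose diagonal blocks $M^{(\alpha\alpha)}=(q-p)^\alpha\km^{(\alpha\alpha)}$ are invertible (here $p<q$ and $\km^{(\alpha\alpha)}$ is the identity on block $\alpha$), a right inverse is automatically the two-sided inverse, so it suffices to compute a single product.

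The key lemma to establish first is the composition rule for the incidence matrices. For $|\mathcal{J}|=\alpha$ and $|\mathcal{S}|=\gamma$, the definition~\ref{eq_k_def} gives
$$(\km^{(\alpha\beta)}\km^{(\beta\gamma)})(\mathcal{J},\mathcal{S}) = \#\{\mathcal{R} : \mathcal{S}\subseteq\mathcal{R}\subseteq\mathcal{J},\ |\mathcal{R}|=\beta\},$$
which vanishes unless $\mathcal{S}\subseteq\mathcal{J}$, and, when $\mathcal{S}\subseteq\mathcal{J}$, equals $\binom{\alpha-\gamma}{\beta-\gamma}$, the number of ways to enlarge $\mathcal{S}$ to size $\beta$ by adjoining elements of $\mathcal{J}\setminus\mathcal{S}$. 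Hence
$$\km^{(\alpha\beta)}\km^{(\beta\gamma)} = \binom{\alpha-\gamma}{\beta-\gamma}\,\km^{(\alpha\gamma)}.$$

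With this identity in hand, the $(\alpha,\gamma)$ block of the product of $M$ with the candidate inverse collapses: the $(q-p)^{\pm\beta}$ factors cancel across the two blocks, leaving a single $\km^{(\alpha\gamma)}$ weighted by $\sum_{\beta=\gamma}^{\alpha} p^{\alpha-\beta}(-p)^{\beta-\gamma}\binom{\alpha-\gamma}{\beta-\gamma}$. Substituting $j=\beta-\gamma$ and pulling out $p^{\alpha-\gamma}$ reduces this coefficient to $p^{\alpha-\gamma}\sum_{j=0}^{\alpha-\gamma}\binom{\alpha-\gamma}{j}(-1)^j = p^{\alpha-\gamma}(1-1)^{\alpha-\gamma}$, which is $0$ whenever $\alpha>\gamma$ and equals $1$ when $\alpha=\gamma$. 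Since $\km^{(\alpha\alpha)}$ is the identity on block $\alpha$, the off-diagonal blocks vanish and the diagonal blocks reduce to $\id$, so the product is the identity and the stated formula is indeed $M^{-1}$.

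The main obstacle is getting the incidence-matrix composition rule right — correctly identifying the combinatorial factor $\binom{\alpha-\gamma}{\beta-\gamma}$ by counting the intermediate sets $\mathcal{R}$ — and then recognizing that the surviving sum is an alternating binomial sum that collapses via $(1-1)^{\alpha-\gamma}$. Once those two observations are in place, the cancellation of the $(q-p)$ and $p$ prefactors is routine bookkeeping.
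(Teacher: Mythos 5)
Your proof is correct, but it takes a genuinely different route from the paper. The paper proves the lemma \emph{constructively}: it splits $M = A - B$ into its block-diagonal part $A$ and strictly lower-triangular part $-B$, expands $(A-B)^{-1}$ as a terminating Neumann series $\left(\sum_{k\ge 0}(A^{-1}B)^k\right)A^{-1}$, computes the $k$-th power of $A^{-1}B$ using the same composition rule you derive (its Lemma on $\km^{(\alpha\beta)}\km^{(\beta\gamma)} = \binom{\alpha-\gamma}{\alpha-\beta}\km^{(\alpha\gamma)}$, identical to your $\binom{\alpha-\gamma}{\beta-\gamma}$ form), identifies the resulting coefficients as $k!\,S(\delta,k)$ with $S$ the Stirling numbers of the second kind, and finally shows the alternating sum $\sum_k (-1)^k k!\,S(\delta,k)$ collapses to $(-1)^\delta$ via a triangular recurrence. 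You instead take the stated formula as a candidate and verify $M M^{-1} = \id$ block by block: after the $(q-p)^{\pm\beta}$ cancellation the $(\alpha,\gamma)$ block carries the coefficient $p^{\alpha-\gamma}\sum_{j=0}^{\alpha-\gamma}(-1)^j\binom{\alpha-\gamma}{j} = p^{\alpha-\gamma}(1-1)^{\alpha-\gamma}$, which vanishes off the diagonal and gives $\id$ on it, and for a finite square matrix a one-sided inverse is two-sided. Your verification is substantially shorter and avoids the Stirling-number machinery entirely, at the cost of being non-constructive — it only works because the formula is handed to you in the statement, which is exactly the situation here, so it is the more economical proof of the lemma as posed. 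What the paper's heavier derivation buys is a way to \emph{discover} the inverse without guessing, plus the combinatorial interpretation of the intermediate coefficients (ordered set partitions), which has independent interest; your route buys brevity and robustness, since the single binomial collapse is much harder to get wrong than the paper's chain of re-indexed sums. One small point to make airtight: the paper states its composition lemma for strict inequalities $\alpha > \beta > \gamma$, whereas your block sum includes the boundary terms $\beta=\alpha$ and $\beta=\gamma$; these are covered by the convention that $\km^{(\alpha\alpha)}$ acts as the identity on block $\alpha$, as you implicitly use, and it is worth saying so explicitly.
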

The proof can be found in appendix~\ref{app-inv-m}. An example for three columns is given in appendix~\ref{app-ex-m3}. 

The formulas for the mean and covariance matrix are provided in sections~\ref{sec-t-exp} and~\ref{sec-t-var} respectively and demonstrated in appendix~\ref{app-t-params}.

\subsubsection{Expected values}\label{sec-t-exp}

\begin{theorem}\label{theo-t-exp}
In estimated true co-occurrence $\mu_\mathcal{I}(C)$ on a set of columns $\mathcal{I}$
depending on an observed PMDH is
\beqa
\mu_\mathcal{I}(C) &=&
({q-p})^{-|\mathcal{I}|}
\sum_{\mathcal{S} \subseteq \mathcal{I}} (-p)^{|\mathcal{I}| - |\mathcal{S}|}c_\mathcal{S}
 \eeqa

\end{theorem}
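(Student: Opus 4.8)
The plan is to read off the $\mathcal{I}$-th component of the matrix identity $\mu = M^{-1}(C - NP)$ from equation~\ref{eq-infer-tmdh-params}, using the closed form of $M^{-1}$ supplied by Lemma~\ref{theo-inv-m}. Writing $k = |\mathcal{I}|$, so that the row $\mathcal{I}$ lives in block $\alpha = k$, only the $\alpha = k$ term of the outer sum in Lemma~\ref{theo-inv-m} contributes, and within it the factor $\km^{(\alpha\beta)}(\mathcal{I},\mathcal{S})$ is nonzero precisely when $\mathcal{S} \subseteq \mathcal{I}$ with $\beta = |\mathcal{S}|$. Hence the single relevant matrix entry is
\[
(M^{-1})_{\mathcal{I},\mathcal{S}} = (q-p)^{-k}(-p)^{k - |\mathcal{S}|}
\]
for every non-empty $\mathcal{S} \subseteq \mathcal{I}$, and zero otherwise.

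Second, I would split $\mu_\mathcal{I}(C) = (M^{-1}C)_\mathcal{I} - N(M^{-1}P)_\mathcal{I}$. The first piece is immediate from the entry above,
\[
(M^{-1}C)_\mathcal{I} = (q-p)^{-k}\sum_{\emptyset\neq\mathcal{S}\subseteq\mathcal{I}}(-p)^{k-|\mathcal{S}|}\,c_\mathcal{S},
\]
which already matches the claimed expression except that it omits the empty-set term (the matrix $M$, and therefore the formula of Lemma~\ref{theo-inv-m}, is indexed only over blocks $\beta\geq 1$).

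Third --- and this is the only delicate point --- I would show that the correction $-N(M^{-1}P)_\mathcal{I}$ supplies exactly that missing term. Since $P_\mathcal{S} = p^{|\mathcal{S}|}$ on every block $\beta \geq 1$, grouping subsets by their size $j = |\mathcal{S}|$ gives
\[
(M^{-1}P)_\mathcal{I} = (q-p)^{-k}\sum_{j=1}^{k}\binom{k}{j}(-p)^{k-j}p^{j},
\]
and the binomial identity $\sum_{j=0}^{k}\binom{k}{j}(-p)^{k-j}p^{j} = (p-p)^k = 0$ (for $k\geq 1$) collapses this to $-(q-p)^{-k}(-p)^k$, so that $-N(M^{-1}P)_\mathcal{I} = (q-p)^{-k}(-p)^{k}\,N$. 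Recognising $N = c_\emptyset$, this is precisely the $\mathcal{S}=\emptyset$ summand $(q-p)^{-k}(-p)^{k-|\mathcal{S}|}c_\mathcal{S}$, and adding it to the previous display extends the sum to all $\mathcal{S}\subseteq\mathcal{I}$, yielding the stated formula.

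The main obstacle is the bookkeeping around the $NP$ offset: because both $M$ and the inverse of Lemma~\ref{theo-inv-m} run only over non-empty blocks, the empty-set contribution is not produced by $M^{-1}C$ on its own, and one must verify through the vanishing alternating sum that the deterministic shift $-NP$ reinserts it with exactly the right sign and coefficient. Once that cancellation is established, the remainder is a direct component extraction and requires no further estimates.
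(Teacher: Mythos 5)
Your proposal is correct and follows essentially the same route as the paper's proof: extracting the $\mathcal{I}$-th component of $\mu = M^{-1}(C-NP)$ via Lemma~\ref{theo-inv-m}, then using the vanishing alternating binomial sum together with $c_\emptyset = N$ to absorb the $-NP$ offset into the $\mathcal{S}=\emptyset$ summand. The only difference is presentational: you spell out explicitly the empty-set bookkeeping that the paper compresses into its closing sentence.
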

For $|\mathcal{I}|=1$ we recover the well known frequency estimator of pure LDP protocols, and randomized response in particular.

\subsubsection{Covariance matrix}\label{sec-t-var}

The law of total variance dictates
\beqa
V(\mathbf{T})
&=& V_\mathbf{T}(E_\mathbf{C}(\mathbf{T})) + E_\mathbf{T}(V_\mathbf{C}(\mathbf{T}))\\
&=& V_\mathbf{T}(\mu) + E_\mathbf{T}(\Sigma) \\
&=& 0 + M^{-1}\hat \Sigma(E(\mathbf{T})) (M^{-1})^T \\
&=& M^{-1}\hat \Sigma(\mu) (M^{-1})^T \label{eq-tot-var}
\eeqa
where $V_\mathbf{C}$ and $V_\mathbf{T}$ are the variance at fixed $\mathbf{T}$ and $\mathbf{C}$ respectively, and $E_\mathbf{C}$ and $E_\mathbf{T}$ are similarly defined expected values. The last two steps are possible thanks to the linearity of the expected value operator. The first term vanishes because the $\mu$ estimator does not depend on ground truth. 

\begin{theorem}\label{theo-t-var}
Consider a TMDH inferred from a randomized dataset $\hat X$ of $N$ rows and of columns $\mathcal{I}$ characterized by a PMDH $C = \{c_\mathcal{J} | \mathcal{J} \subseteq \mathcal{I}\}$. Consider two arbitrary column sets $\mathcal{J}$ and $\mathcal{J}'$. The covariance between the co-occurrences counts $\tt_\mathcal{J}$ and $\tt_\mathcal{J}'$ is
\beqa
V(\tt_\mathcal{J}, \tt_{\mathcal{J}'})
 &=&
\left(\frac{1}{q-p}\right)^{|\mathcal{J}|+|\mathcal{J'}|}\!\!
\sum_{\mathcal{S} \subseteq \mathcal{J}_\cup}
c_{\mathcal{S}}
(-p)^{|\mathcal{J}_\cup| - |\mathcal{S}|}
 \left(
 (-p)^{|\mathcal{J}_\cap\setminus\mathcal{S}|}(1-2p)^{|\mathcal{J}_\cap\cap\mathcal{S}|} - (q-p)^{|\mathcal{J_\cap}|}
 \right)\nonumber\\
\eeqa
where $\mathcal{J}_\cup = \mathcal{J} \cup \mathcal{J}'$ and $\mathcal{J}_\cap = \mathcal{J} \cap \mathcal{J}'$.
\end{theorem}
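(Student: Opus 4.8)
The plan is to exploit the fact that, by Theorem~\ref{theo-t-exp}, each estimator is a fixed linear combination of the observed counts, namely $\tt_{\mathcal{J}} = (q-p)^{-|\mathcal{J}|}\sum_{\mathcal{A}\subseteq\mathcal{J}}(-p)^{|\mathcal{J}|-|\mathcal{A}|}\cc_{\mathcal{A}}$, which is exactly the $(\mathcal{J},\mathcal{A})$ entry of $M^{-1}$ read off from Lemma~\ref{theo-inv-m}. Bilinearity of the covariance (equivalently, writing equation~\ref{eq-tot-var} entrywise) then reduces the claim to
\[
V(\tt_{\mathcal{J}}, \tt_{\mathcal{J}'}) = (q-p)^{-|\mathcal{J}|-|\mathcal{J}'|}\sum_{\mathcal{A}\subseteq\mathcal{J}}\sum_{\mathcal{B}\subseteq\mathcal{J}'}(-p)^{|\mathcal{J}|-|\mathcal{A}|+|\mathcal{J}'|-|\mathcal{B}|}\,V(\cc_{\mathcal{A}},\cc_{\mathcal{B}}),
\]
where $V(\cc_{\mathcal{A}},\cc_{\mathcal{B}})$ is supplied by Theorem~\ref{theo-covar-c} with union $\mathcal{A}\cup\mathcal{B}$ and intersection $\mathcal{A}\cap\mathcal{B}$. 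Since the total-variance identity~\ref{eq-tot-var} instructs us to evaluate $\hat\Sigma$ at the inferred ground truth, I replace every $t_{\mathcal{S}'}$ in Theorem~\ref{theo-covar-c} by $\mu_{\mathcal{S}'}$ and, at the end, re-express $\mu$ through the observed counts using Theorem~\ref{theo-t-exp} again.

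The key simplification is to split the PMDH covariance of Theorem~\ref{theo-covar-c} into its two summands. The first, $\sum_{\mathcal{S}'\subseteq\mathcal{A}\cup\mathcal{B}}t_{\mathcal{S}'}(q-p)^{|\mathcal{S}'|}p^{|\mathcal{A}\cup\mathcal{B}|-|\mathcal{S}'|}$, is by Theorem~\ref{theo-exp-c} nothing but $E(\cc_{\mathcal{A}\cup\mathcal{B}})$; and because $\mu$ is defined so that $M\mu + NP = C$, evaluating this expectation at $\mu$ returns exactly the observed $c_{\mathcal{A}\cup\mathcal{B}}$ (including $c_\emptyset = N$). For this piece I reindex the remaining double sum by $\mathcal{S} = \mathcal{A}\cup\mathcal{B}$ and classify each column according to whether it lies in $\mathcal{J}\setminus\mathcal{J}'$, in $\mathcal{J}'\setminus\mathcal{J}$, or in $\mathcal{J}_\cap$. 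A column of $\mathcal{J}_\cap\cap\mathcal{S}$ may be covered by $\mathcal{A}$ only, by $\mathcal{B}$ only, or by both, contributing a factor $(-p)+(-p)+1 = 1-2p$; a column of $\mathcal{J}_\cap\setminus\mathcal{S}$ must be absent from both and contributes $(-p)^2$; every other column contributes a single $1$ or $(-p)$. Collecting exponents turns this contribution into $c_{\mathcal{S}}(-p)^{|\mathcal{J}_\cup|-|\mathcal{S}|}(-p)^{|\mathcal{J}_\cap\setminus\mathcal{S}|}(1-2p)^{|\mathcal{J}_\cap\cap\mathcal{S}|}$, i.e.\ the first term inside the bracket of the statement.

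The second summand of Theorem~\ref{theo-covar-c}, carrying the mask $(q+p)^{|\mathcal{S}'\cap\mathcal{A}\cap\mathcal{B}|}p^{|\mathcal{A}\cap\mathcal{B}|-|\mathcal{S}'\cap\mathcal{A}\cap\mathcal{B}|}$, does not collapse to an observed count directly and must be handled in full. Here I substitute $\mu_{\mathcal{S}'} = (q-p)^{-|\mathcal{S}'|}\sum_{\mathcal{U}\subseteq\mathcal{S}'}(-p)^{|\mathcal{S}'|-|\mathcal{U}|}c_{\mathcal{U}}$ (the $(q-p)^{\pm|\mathcal{S}'|}$ factors cancel), reindex by the final label $\mathcal{S}=\mathcal{U}$, and evaluate the inner sums over $\mathcal{A}$, $\mathcal{B}$ and the intermediate set $\mathcal{S}'$ using binomial / inclusion--exclusion identities of the type $\sum_{k}\binom{m}{k}x^{k}=(1+x)^m$. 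I expect this to be the main obstacle: unlike the first piece, four layers of subset summation now interact through the $\mathcal{A}\cap\mathcal{B}$ mask, and the bookkeeping must show that all dependence on $\mathcal{S}\cap\mathcal{J}_\cap$ cancels, leaving the coefficient of $c_{\mathcal{S}}$ equal to the clean $-(-p)^{|\mathcal{J}_\cup|-|\mathcal{S}|}(q-p)^{|\mathcal{J}_\cap|}$, the second bracketed term.

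Adding the two contributions and restoring the prefactor $(q-p)^{-|\mathcal{J}|-|\mathcal{J}'|}$ yields the stated formula. As an end-to-end sanity check I would verify the scalar case $\mathcal{J}=\mathcal{J}'=\{i\}$, where the formula collapses to $(q-p)^{-2}\bigl(Npq + (1-p-q)\,c_i\bigr)$; substituting $t_i = (c_i-Np)/(q-p)$ into the elementary Bernoulli variance $t_i\,q(1-q)+(N-t_i)\,p(1-p)$ reproduces exactly this expression, confirming both the sign conventions and the $t\!\to\!\mu\!\to\!c$ substitution.
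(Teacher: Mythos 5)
Your overall architecture is the same as the paper's: expand equation~\ref{eq-tot-var} entrywise, so that $V(\tt_\mathcal{J},\tt_{\mathcal{J}'})$ becomes the double subset sum $\sum_{\mathcal{A}\subseteq\mathcal{J}}\sum_{\mathcal{B}\subseteq\mathcal{J}'}$ of $(-p)$-weighted PMDH covariances from Theorem~\ref{theo-covar-c} evaluated at $\mu$, with $\mu$ re-expressed through the observed counts via Theorem~\ref{theo-t-exp}. Within that shared frame, your treatment of the first summand is correct and genuinely slicker than the paper's: recognizing $\sum_{\mathcal{S}'\subseteq\mathcal{A}\cup\mathcal{B}} t_{\mathcal{S}'}(q-p)^{|\mathcal{S}'|}p^{|\mathcal{A}\cup\mathcal{B}|-|\mathcal{S}'|}$ as $E(\cc_{\mathcal{A}\cup\mathcal{B}})$ and invoking $M\mu+NP=C$ to replace it by $c_{\mathcal{A}\cup\mathcal{B}}$ short-circuits the explicit binomial collapse the paper performs (the appearance of $\id(\mathcal{J}_\Delta\setminus\mathcal{S}',\emptyset)\,\id(\mathcal{J}_\cap\setminus\mathcal{S}',\emptyset)$ in its intermediate step), and your per-column multiplicity count over pairs with fixed union ($(-p)+(-p)+1=1-2p$ on $\mathcal{J}_\cap\cap\mathcal{S}$, $(-p)^2$ on $\mathcal{J}_\cap\setminus\mathcal{S}$) correctly yields $(-p)^{|\mathcal{J}_\cup|-|\mathcal{S}|}(-p)^{|\mathcal{J}_\cap\setminus\mathcal{S}|}(1-2p)^{|\mathcal{J}_\cap\cap\mathcal{S}|}$. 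The scalar sanity check is also valid.

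The genuine gap is that the second summand — the term carrying the mask $(q+p)^{|\mathcal{S}'\cap\mathcal{A}\cap\mathcal{B}|}p^{|\mathcal{A}\cap\mathcal{B}|-|\mathcal{S}'\cap\mathcal{A}\cap\mathcal{B}|}$ — is never computed: you write that you ``expect'' the four-layer subset sum to cancel down to $-(-p)^{|\mathcal{J}_\cup|-|\mathcal{S}|}(q-p)^{|\mathcal{J}_\cap|}$, but that asserted cancellation \emph{is} the hard content of the theorem, and it is precisely where the paper's appendix spends nearly all of its effort (the $s_\cap$/$s_\Delta$ split producing indicator constraints, then the $x,y$ binomial sums generating the $(1-2p)$ and $(-q)^x$ powers). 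The good news is that your own per-column technique closes it: at fixed $\mathcal{U}$, the sum over intermediate sets $\mathcal{U}\subseteq\mathcal{S}'\subseteq\mathcal{A}\cup\mathcal{B}$ factorizes column-by-column, giving $(-p)+p=0$ on columns of $(\mathcal{A}\cup\mathcal{B})\setminus(\mathcal{A}\cap\mathcal{B})$ outside $\mathcal{U}$ (forcing $\mathcal{U}$ to contain the symmetric difference) and $(q+p)(-p)+p\cdot p=-pq$ on columns of $\mathcal{A}\cap\mathcal{B}$ outside $\mathcal{U}$; the outer $(\mathcal{A},\mathcal{B})$ sum then also factorizes, with weight $(q+p)-p-p=q-p$ on columns of $\mathcal{J}_\cap\cap\mathcal{S}$, $-pq+p^2=-p(q-p)$ on $\mathcal{J}_\cap\setminus\mathcal{S}$, and $1$ resp.\ $-p$ on symmetric-difference columns inside resp.\ outside $\mathcal{S}$, assembling exactly to $-(q-p)^{|\mathcal{J}_\cap|}(-p)^{|\mathcal{J}_\cup|-|\mathcal{S}|}$. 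Until that (or the paper's equivalent algebra) is actually written out, your proposal proves only half of the stated formula.
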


We see that $\mathcal{J}_\cap = \emptyset$ lead trivially to a vanishing correlation. We can derive the following examples of variances
\beqa
V(\tt_i) &=& (q-p)^{-2}\left(qpN + (1-p-q) c_i\right)\\
V(\tt_{ij}) &=& (q-p)^{-4}\left(\left( (1-2 p)^2 - (q - p)^2\right)c_{ij} + p\left(p(1 - 2p) + (p-q)^2\right) (c_i+c_j) - p^2q(q-2p)N\right)
\nonumber\\&&\qquad
\eeqa
and co-variances
\beqa
V(\tt_{ij}, \tt_i) &=& (q-p)^{-3} \left((1-q-p)(c_{ij} - p\ c_i) + qp (c_j - p N)\right)\\
V(\tt_{ij}, \tt_{jk}) &=& (q-p)^{-4} \big((1-p-q)(c_{ijk} - p (c_{ij}+c_{jk}) + p^2 c_j) + pq(c_{ik} - p (c_i+c_k) + p^2 N) \big)
\eeqa

\section{Validation and discussion}

We provide the code for the numerical validation of the normal approximation in the repository~\cite{code}. In this paper, we summarize the key findings of our numerical validation. These were obtained using an artificial dataset of 4 independent binary variables
for which we studies the full joint distribution under a pure LDP protocol with $q=0.8$ and $p=0.1$. The dataset size was varied from $\sim 10$ to $\sim 10^5$. This arguably corresponds to higher $\epsilon$ than disirable ($\sim 2$) and smaller dataset size than typical. However we trust this choice is reasonable as it shows both the transition to well determined TMDH and the issues that one can encounter with the normal approximation for very low frequencies. The user is free to download our code to test other configurations.

\subsection{Validation of the PMDH}\label{sec-val-pmdh}

There are two main approximations in the equations~\ref{theo-exp-c} and~\ref{theo-var-c} related to omitting high order moments in the distribution : we neglect the distortion close to domain boundaries (\ie when a mean frequency is close to 0 or 1, or when a co-occurrence is close to the limits imposed by the individual frequencies) and we assume a linear error (\ie the variance of a marginal frequency is independent of the error on other marginal frequencies).  The latter can be illustrated by considering the conditional probabilities over the exact multinomial and comparing them to the normal approximation. It can be proven that for partial sum following a multinomial $\{\ci, \cj, \cij\}$, $(\cij|\ci=x, \cj=y) \sim \mathcal{H}(N, x, y)$ where $\mathcal{H}$ is a hypergeometric function. This leads to
\beq
E(\cij | \ci=x, \cj = y) =  \frac{xy}{N}
\quad ; \qquad
V(\cij | \ci=x, \cj = y) =  \frac{xy(N-x)(N-y)}{N^2(N-1)}
\eeq
while te normal normal approximation yields
\beq
E(\cij | \ci=x, \cj = y) \approx  xp_j  + yp_i  - Np_jp_i
\quad ; \qquad
V(\cij | \ci=x, \cj = y) \approx Np_ip_j(1-p_i)(1-p_j)
\eeq
where $p_i,p_j$ are the probabilities of success of $\X_i$ and $\X_j$ respectively. This issue should be kept in mind when trying to force priors as conditions on the posterior distribution directly.

We try 2 different simulation techniques : full simulation, where the binary dataset is built and perturbed explicitely, and a more efficient approach where we sample from the exact sum of multinomial distributions. As expected, both converge to the same PMDH distribution. We therefore use the latter for our tests.

Figure~\ref{fig-pmdh-normalp} shows that, as expected, the calculated normal parameters agree with the simulations within the numerical accuracy. For metrics emphasizing divergence to normal, we observe a larger error which diminishes with increasing dataset size (see Figure~\ref{fig-pmdh-non-normal}).

\begin{figure}
	\centering
	\begin{subfigure}[]{0.3\textwidth}\centering
		\includegraphics[width=\textwidth]{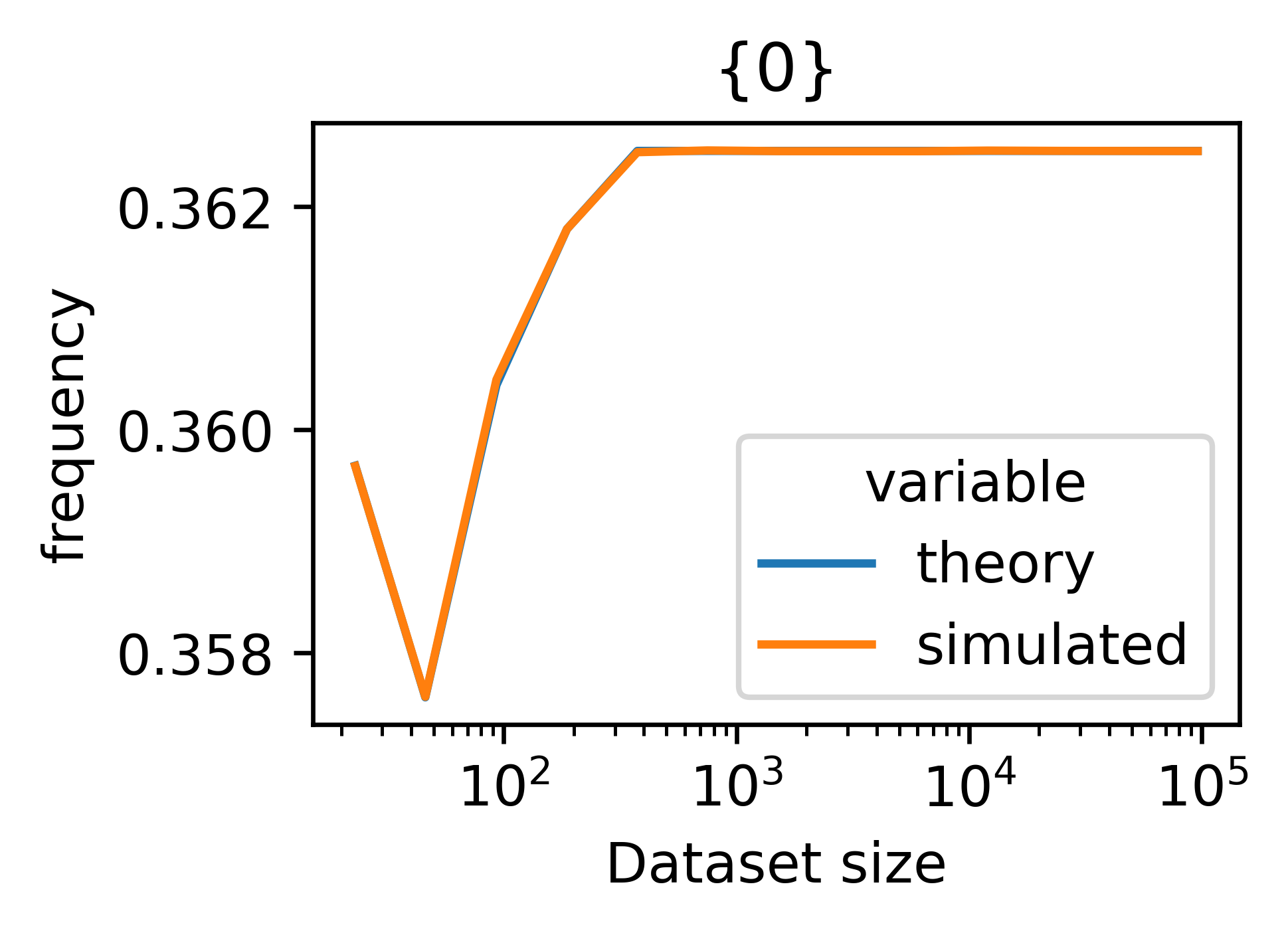}
		\caption{}
	\end{subfigure}
	\begin{subfigure}[]{0.3\textwidth}\centering
		\includegraphics[width=\textwidth]{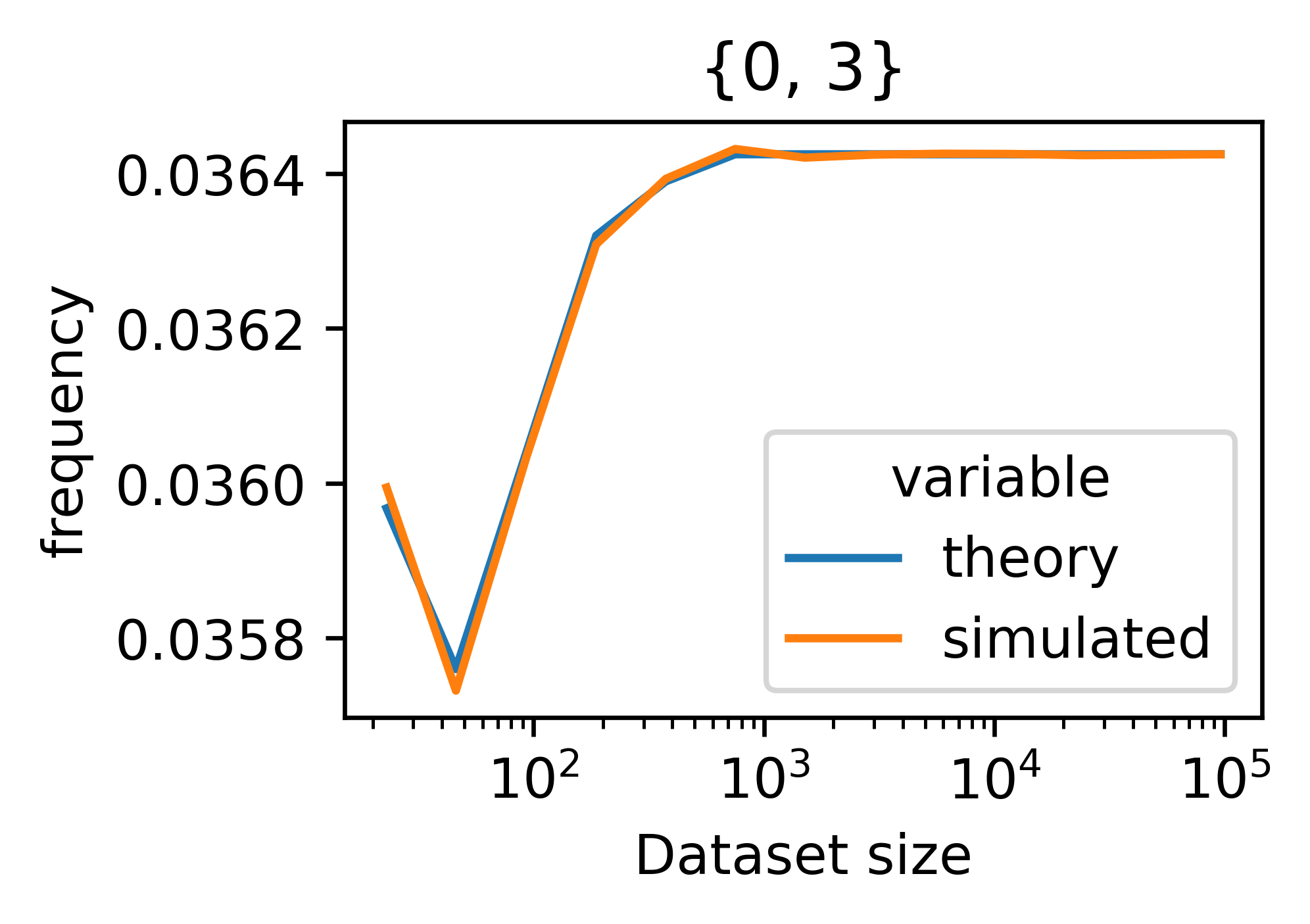}
		\caption{}
	\end{subfigure}
	\begin{subfigure}[]{0.3\textwidth}\centering
		\includegraphics[width=\textwidth]{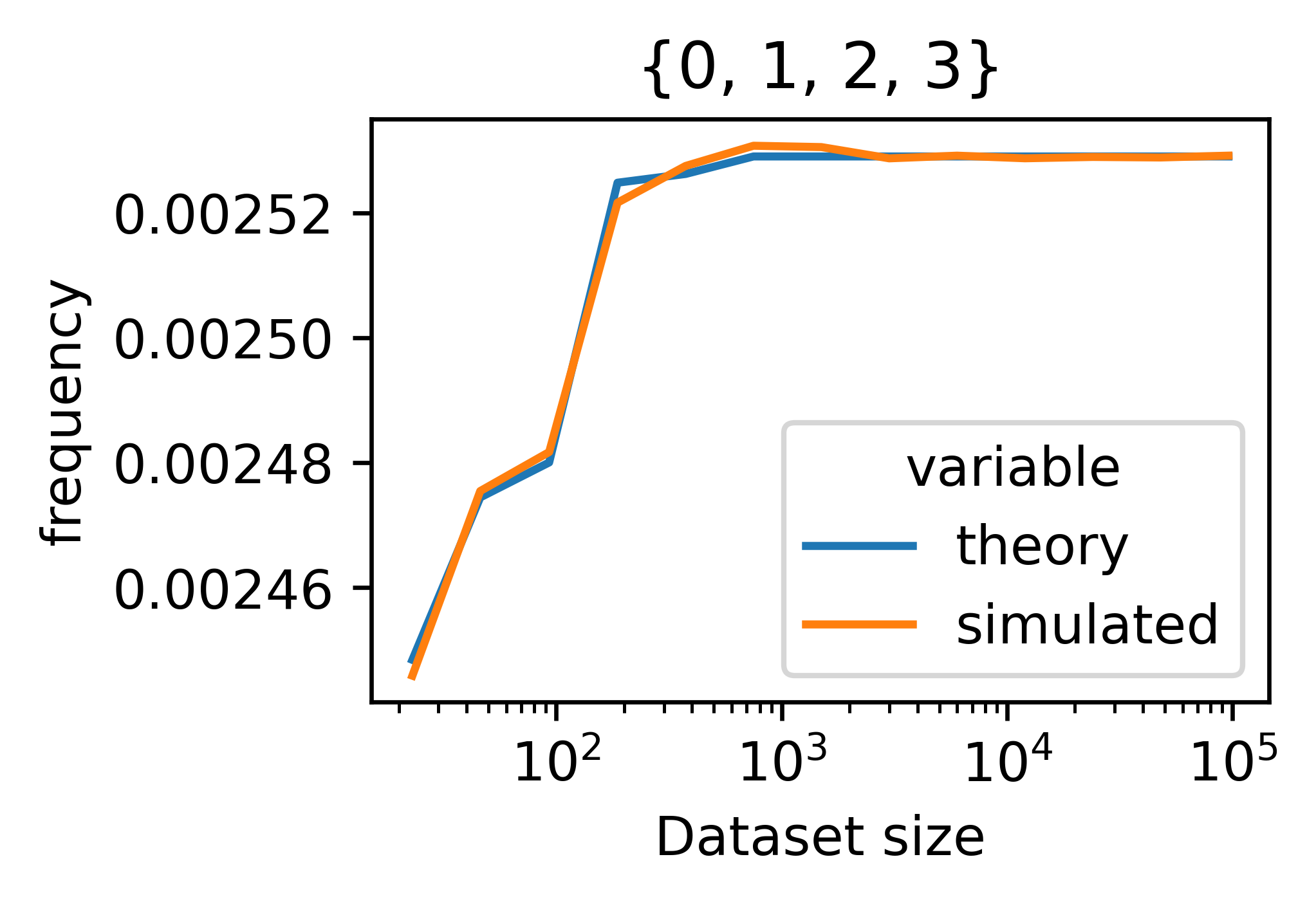}
		\caption{}
	\end{subfigure}
	
	\begin{subfigure}[]{0.3\textwidth}\centering
		\includegraphics[width=\textwidth]{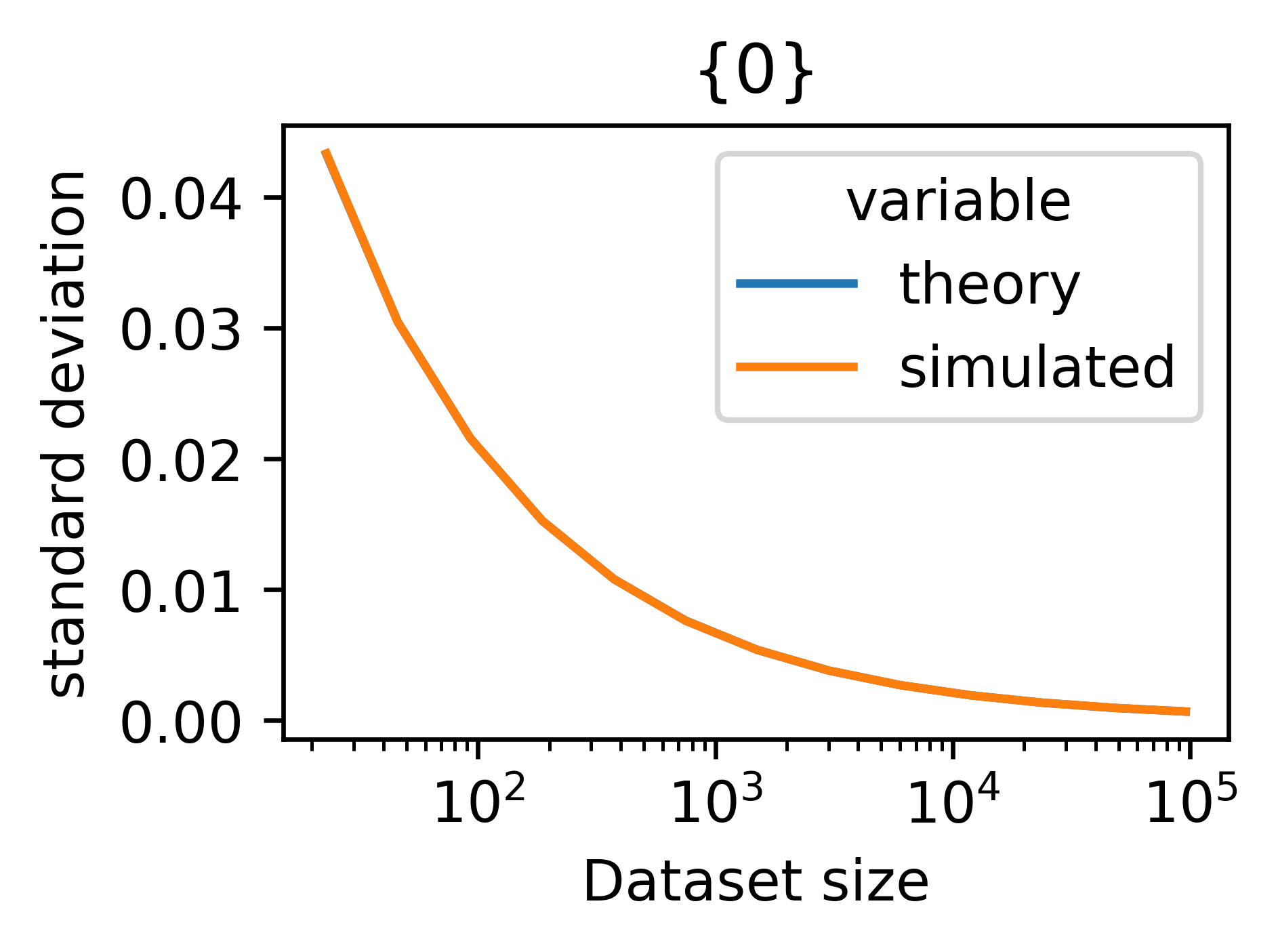}
		\caption{}
	\end{subfigure}
	\begin{subfigure}[]{0.3\textwidth}\centering
		\includegraphics[width=\textwidth]{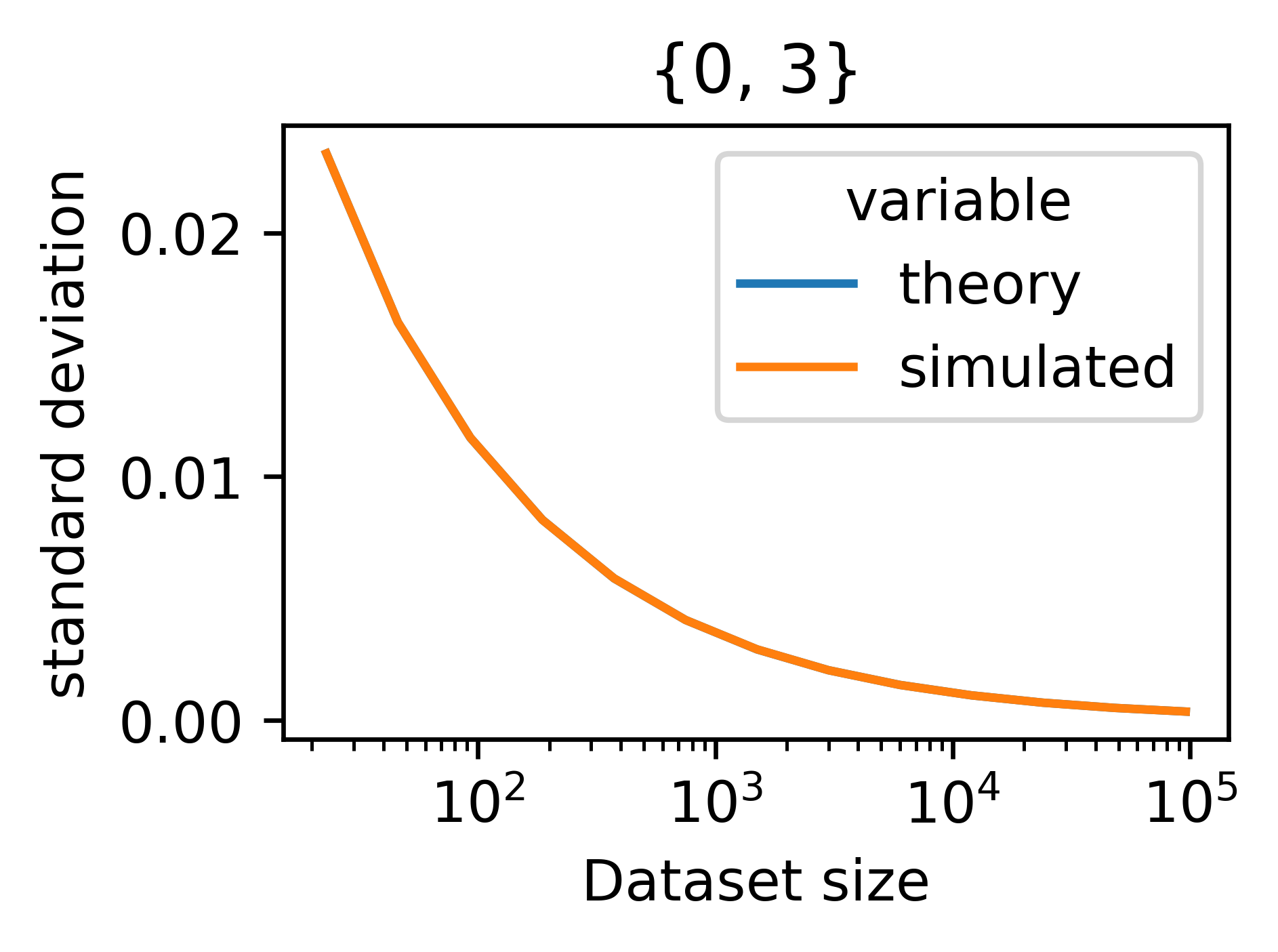}
		\caption{}
	\end{subfigure}
	\begin{subfigure}[]{0.3\textwidth}\centering
		\includegraphics[width=\textwidth]{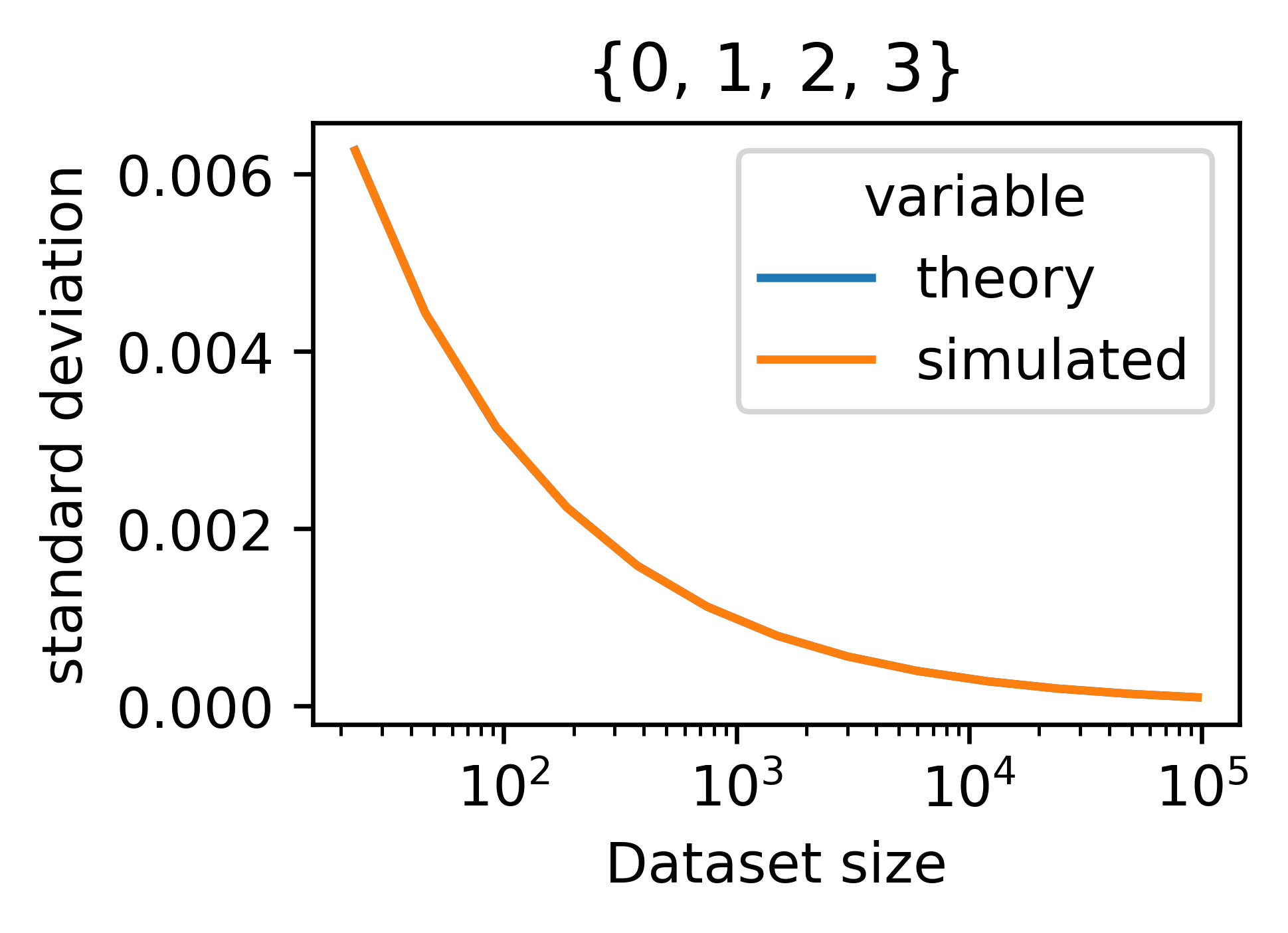}
		\caption{}
	\end{subfigure}

	\begin{subfigure}[]{0.3\textwidth}\centering
		\includegraphics[width=\textwidth]{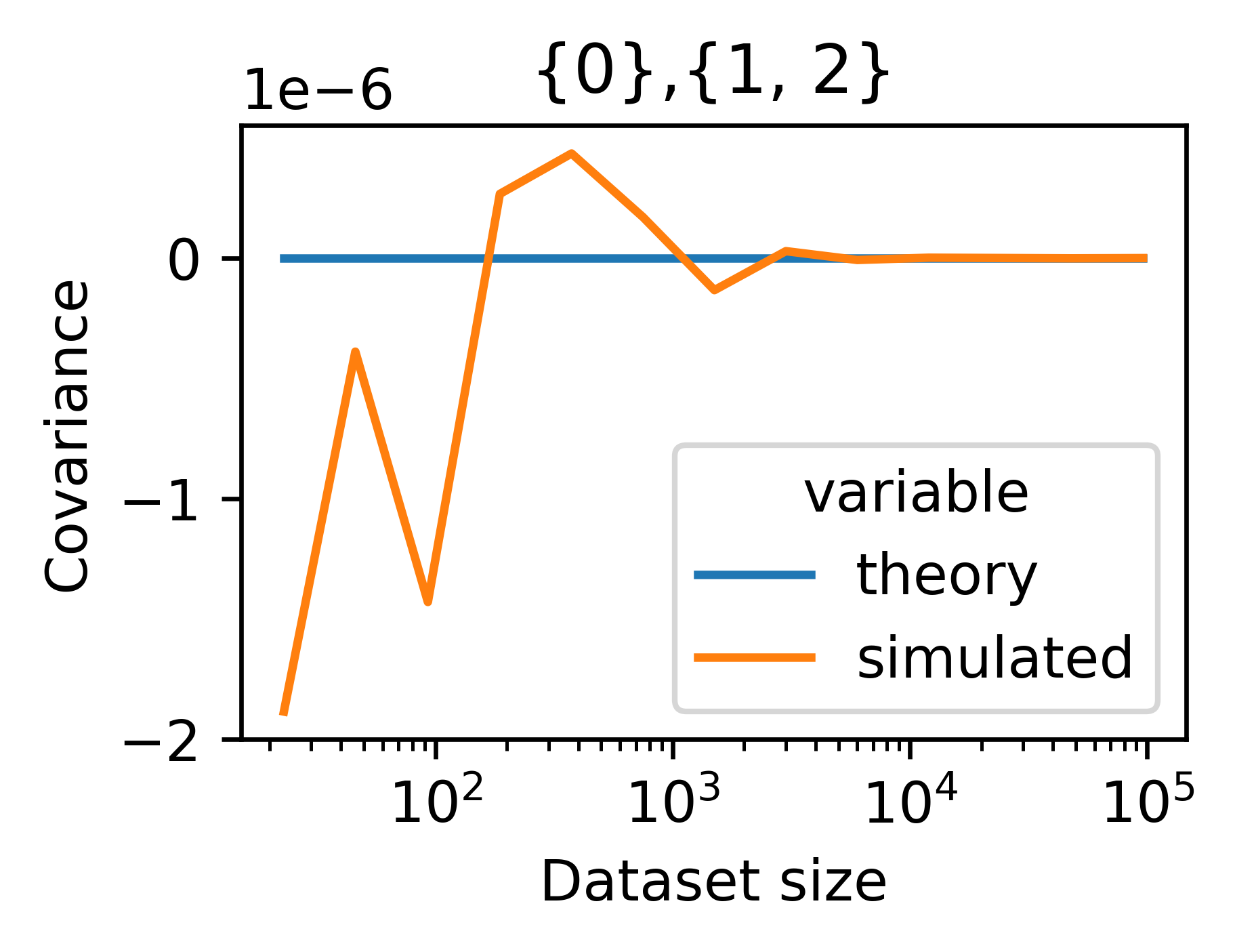}
		\caption{}
	\end{subfigure}
	\begin{subfigure}[]{0.3\textwidth}\centering
		\includegraphics[width=\textwidth]{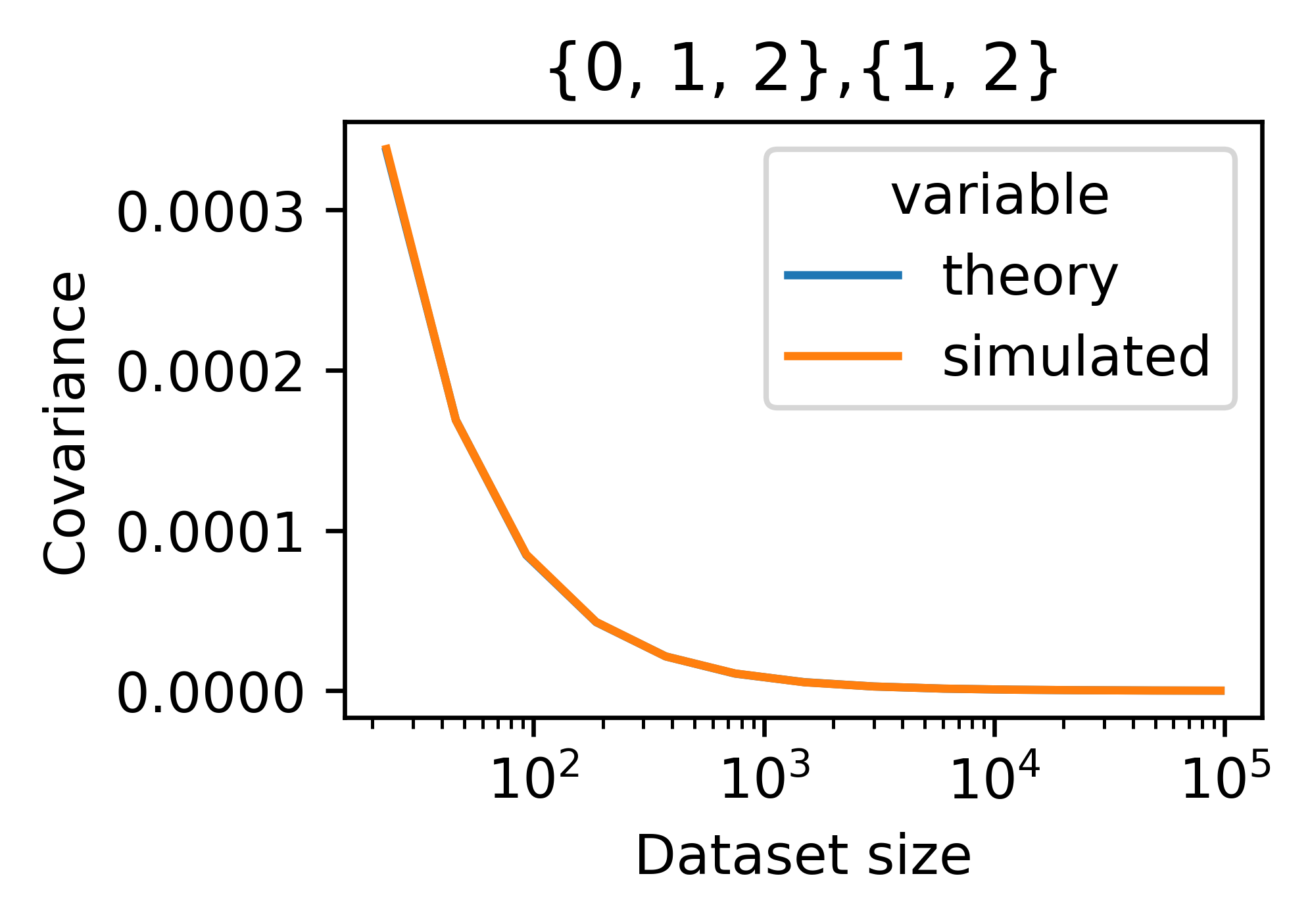}
		\caption{}
	\end{subfigure}
	\begin{subfigure}[]{0.3\textwidth}\centering
		\includegraphics[width=\textwidth]{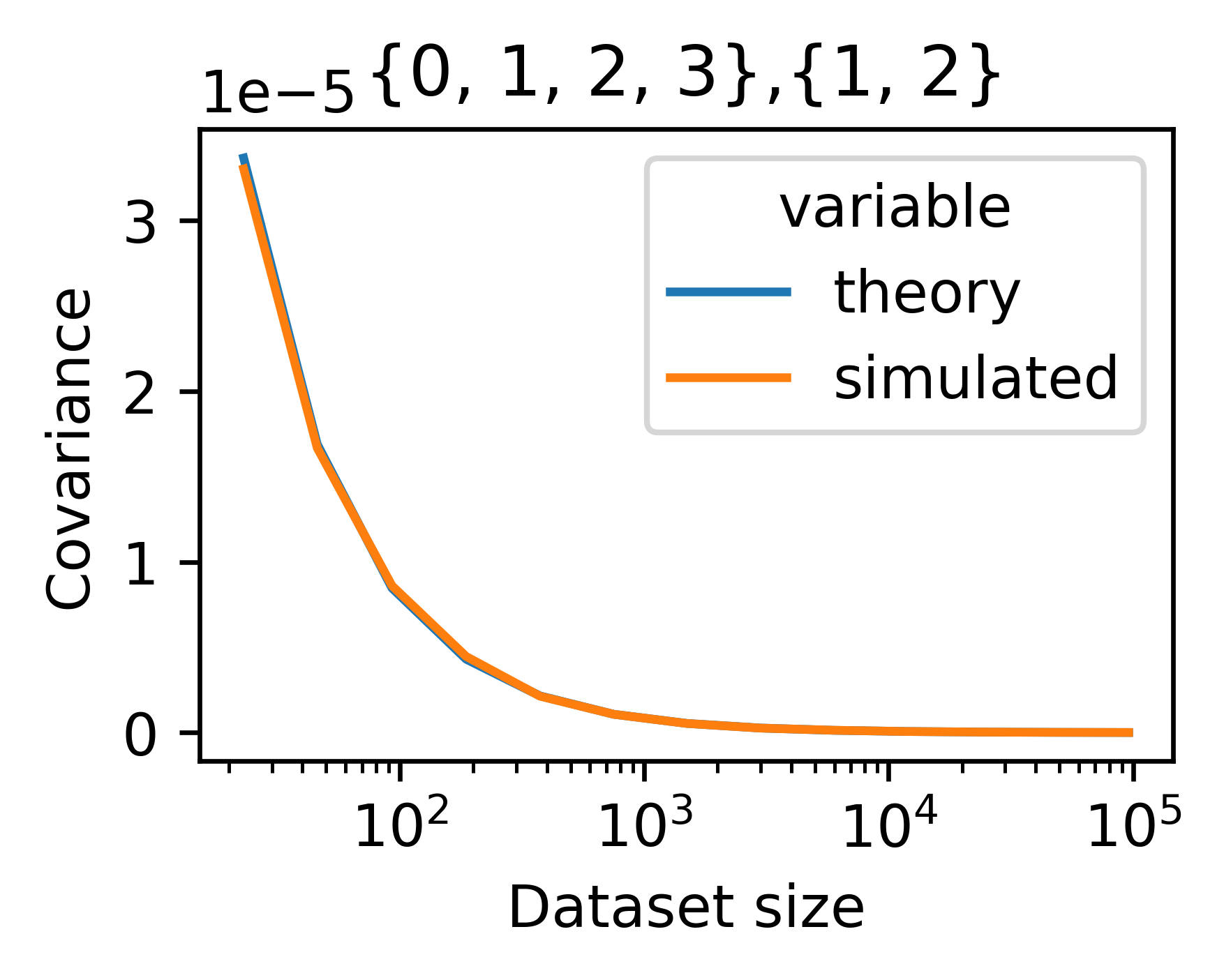}
		\caption{}
	\end{subfigure}
	\caption{Examples of mean and covariance matrix elements computed analytically vs. simulated. Bracket on top of figures signify the set of columns $\mathcal{I}$. (a-c) $E(\cc_\mathcal{I})$, (d-f) $\left(V(\cc_\mathcal{I})\right)^{1/2}$, (g-i) $V(\cc_\mathcal{I}, \cc_\mathcal{J})$ \label{fig-pmdh-normalp}}
\end{figure}

\begin{figure}
	\centering
	\begin{subfigure}[]{0.3\textwidth}\centering
		\includegraphics[width=\textwidth]{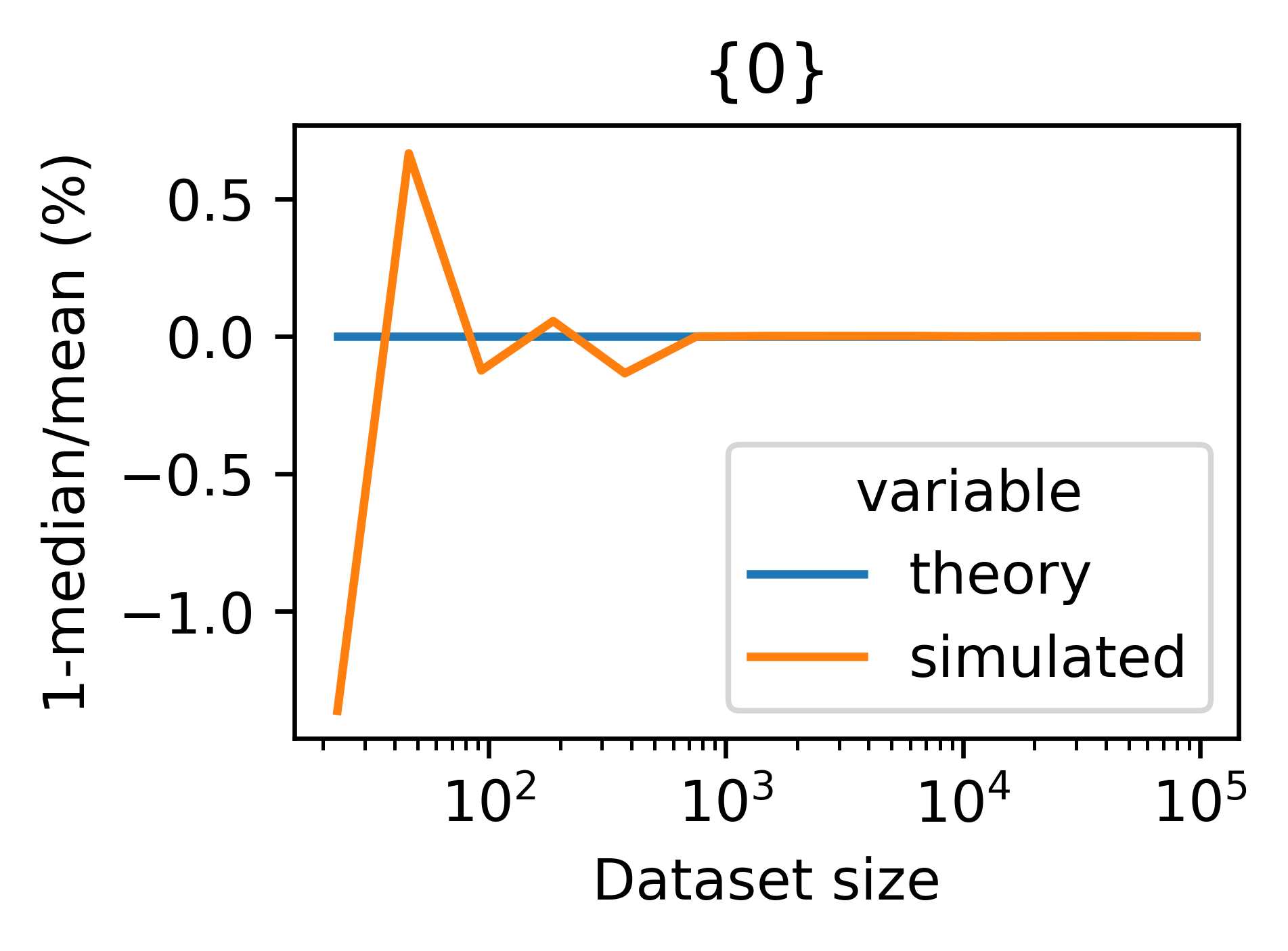}
		\caption{}
	\end{subfigure}
	\begin{subfigure}[]{0.3\textwidth}\centering
		\includegraphics[width=\textwidth]{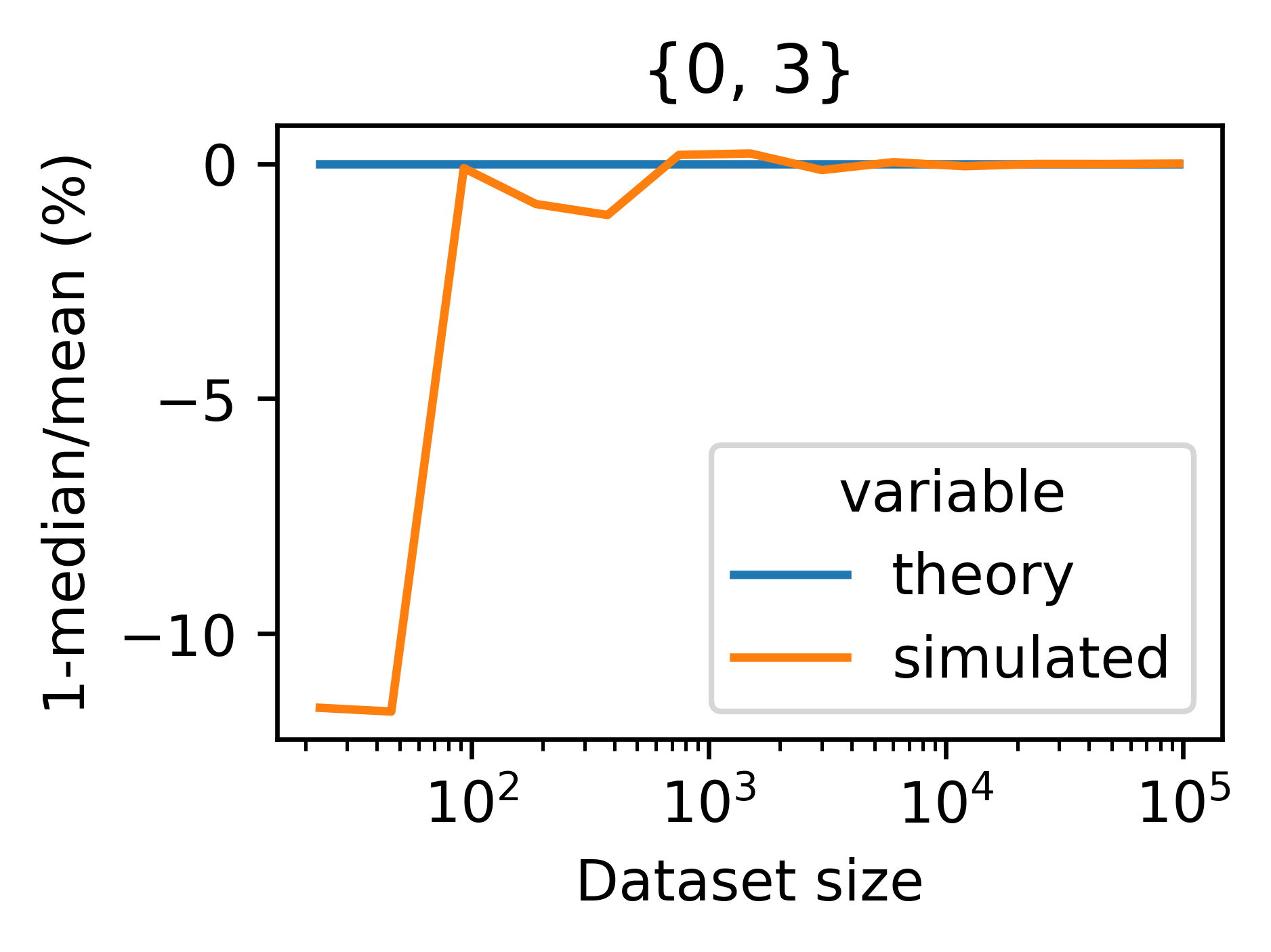}
		\caption{}
	\end{subfigure}
	\begin{subfigure}[]{0.3\textwidth}\centering
		\includegraphics[width=\textwidth]{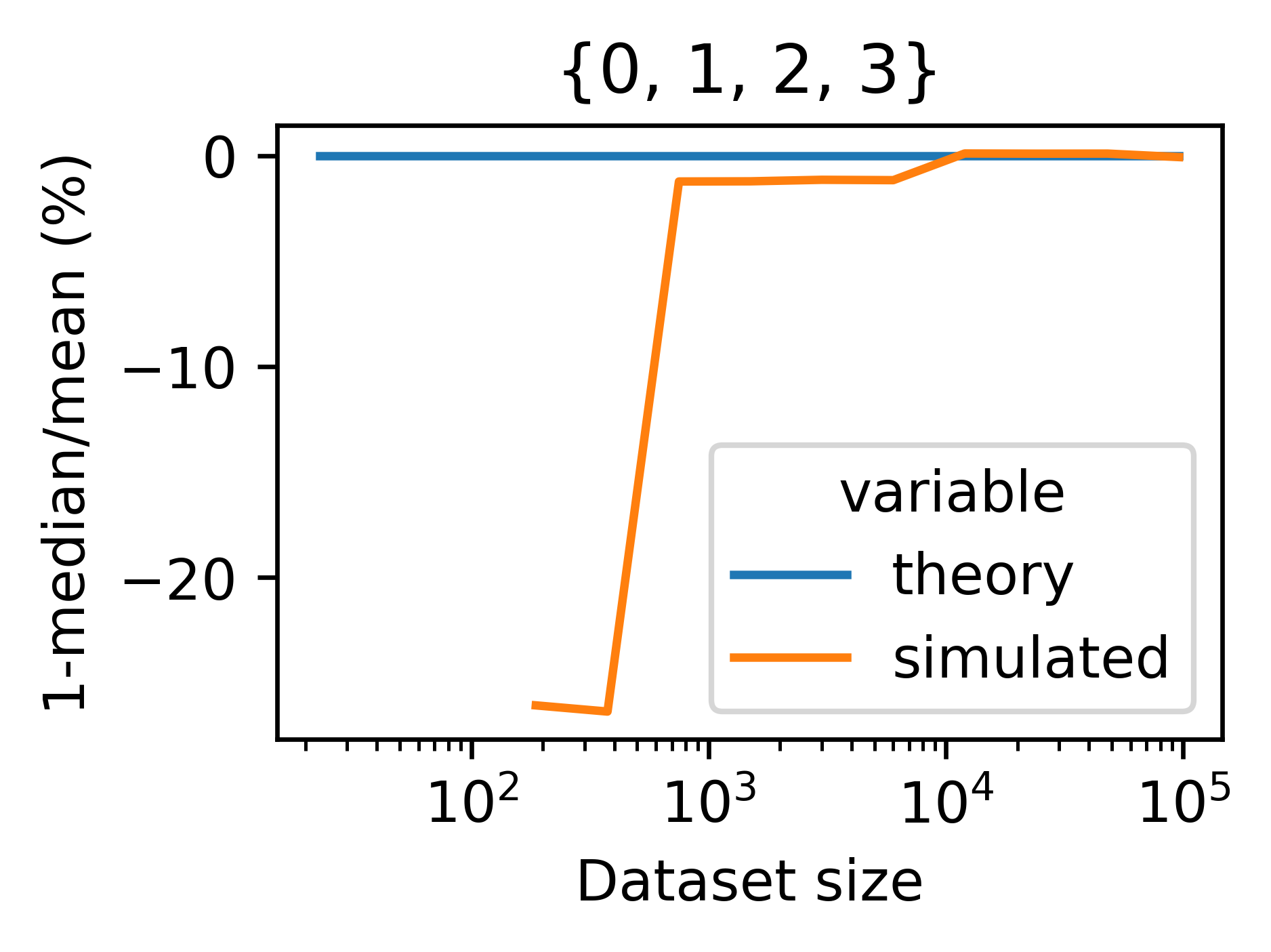}
		\caption{}
	\end{subfigure}
	
	\begin{subfigure}[]{0.3\textwidth}\centering
		\includegraphics[width=\textwidth]{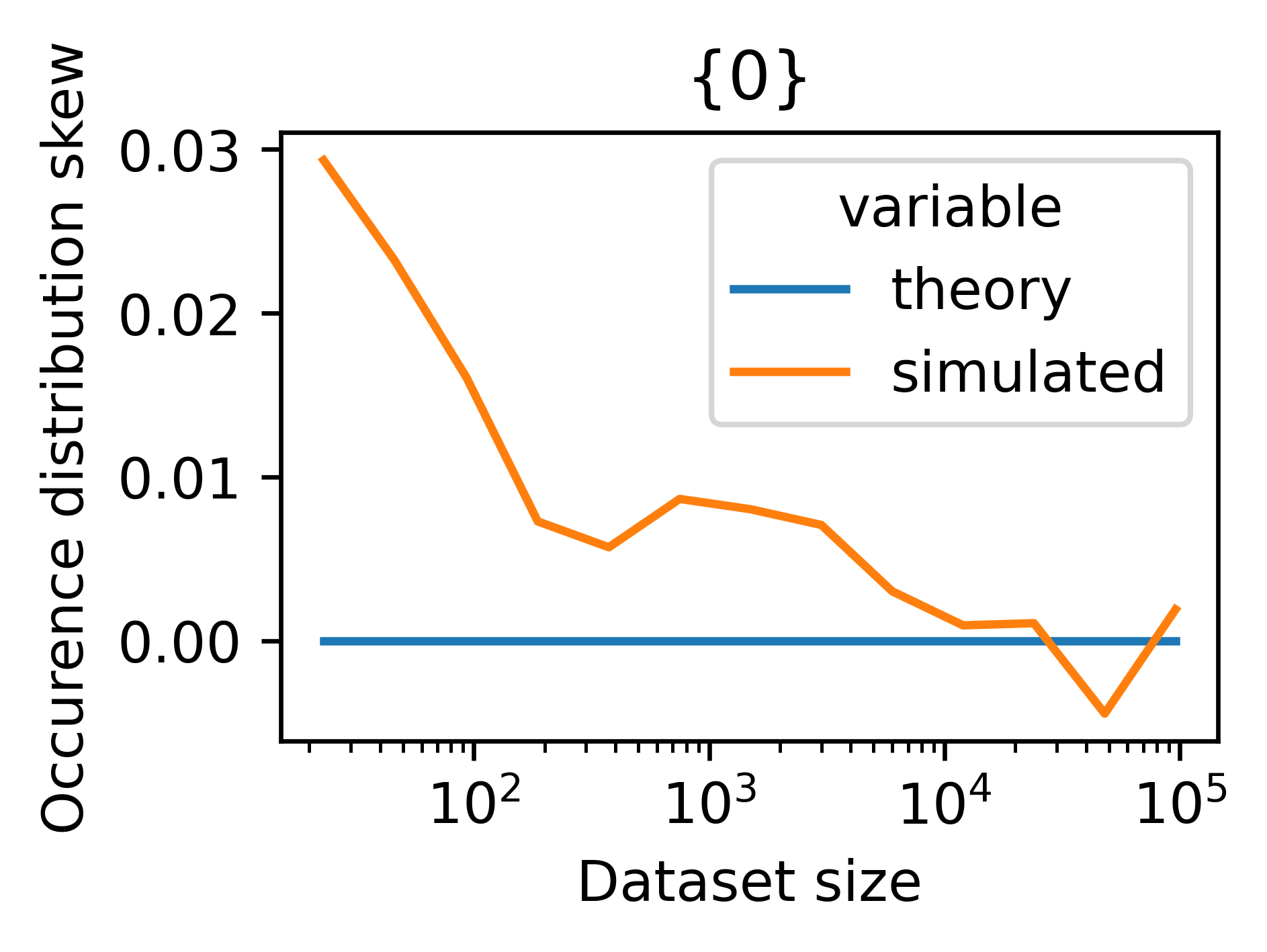}
		\caption{}
	\end{subfigure}
	\begin{subfigure}[]{0.3\textwidth}\centering
		\includegraphics[width=\textwidth]{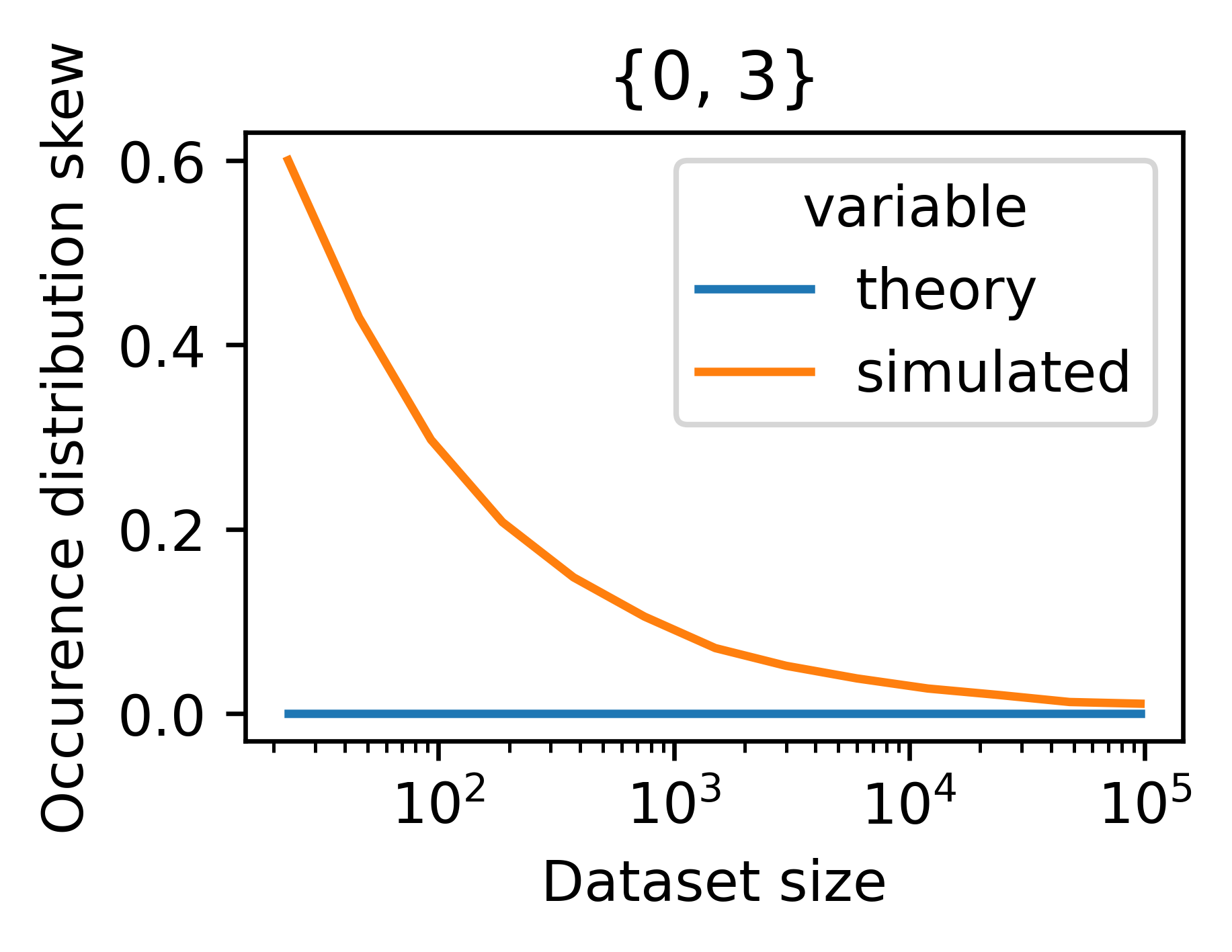}
		\caption{}
	\end{subfigure}
	\begin{subfigure}[]{0.3\textwidth}\centering
		\includegraphics[width=\textwidth]{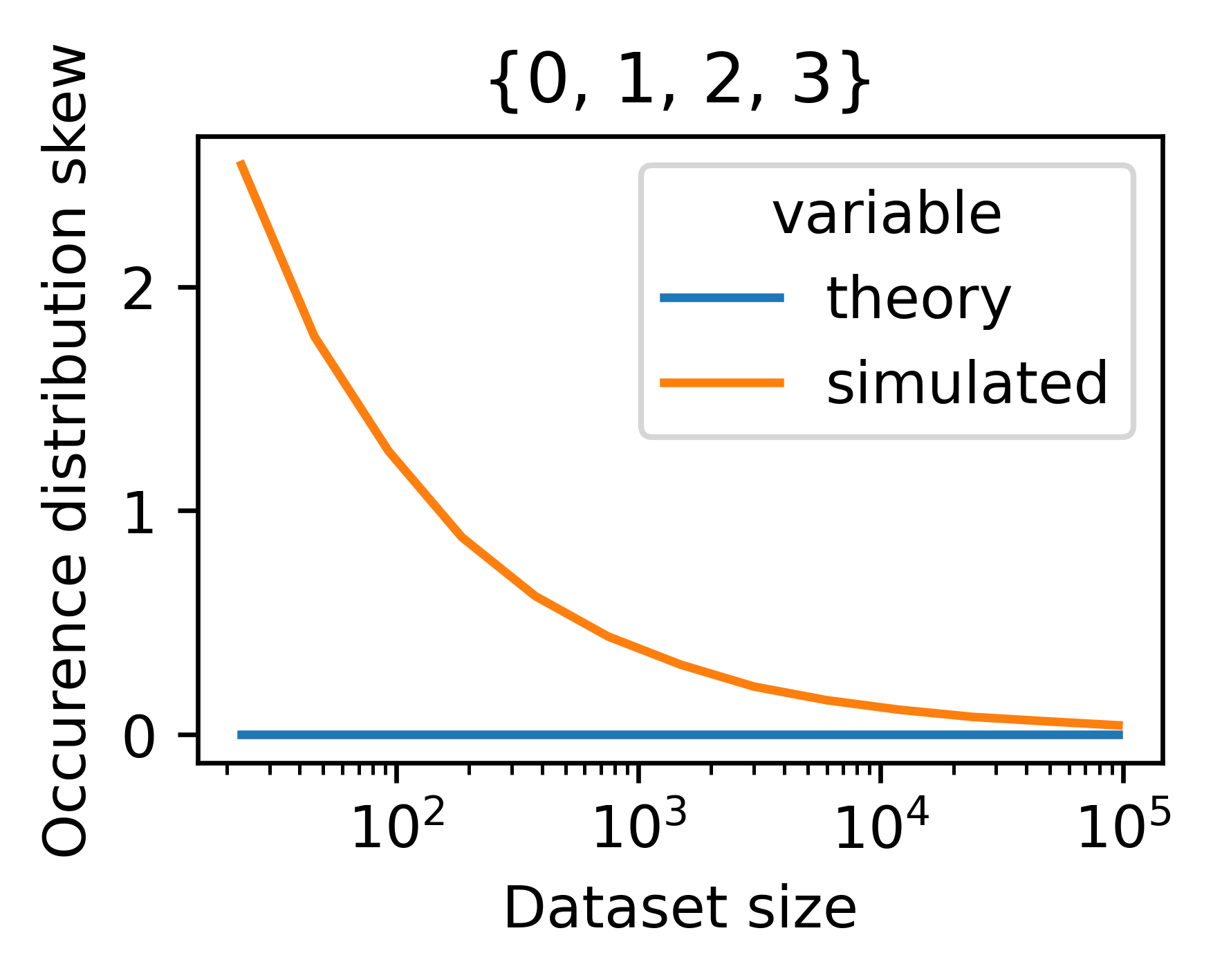}
		\caption{}
	\end{subfigure}
	\caption{Examples of metrics showing divergence to normal behavior. Bracket on top of figures signify the set of columns $\mathcal{I}$ and  $\mathcal{J}$. (a-c) relative divergence of median to mean $1-mean/median$, (d-f) $Skew(\cc_\mathcal{I})$ \label{fig-pmdh-non-normal}}
\end{figure}

\subsection{Validation of the TMDH}

To verify the quality of our estimates as a funciton of the dataset size, we sample PMDH from their exact distribution and compute the corresponding expected values for the true distribution. As in the case of the PMDH, we observe that the expected value of our mean TMDH estimator defined in~\ref{theo-t-exp} is within numerical uncertainty of the true frequencies. Furthermore, the variance of this estimator is also closely matched by the TMDH variance estimator defined in~\ref{theo-t-var}. As shown in figure~\ref{fig-tmdh-normalp-a}, we observe errors above 1\% only for low frequency $\tt_{\mathcal{I}}$ in smaller datasets ($<10^3$). In those cases, the uncertainty on the variance estimator can exceed 5\% (see figure~\ref{fig-tmdh-normalp-b}). We do not attribute this error on simulation noise, but rather on the effect of our approximation close to the boundaries of the exact distribution. This is issue is clearly visible in the figure~\ref{fig-tmdh-normalp-0-1-2-3} where the discretization of the support is visible in the reconstructed distribution. Metrics of deviation from the normal as a function of dataset size are displayed in figures~\ref{fig-tmdh-normalp}g and \ref{fig-tmdh-normalp}h.

\begin{figure}
	\centering
	\begin{subfigure}[]{0.3\textwidth}\centering
		\includegraphics[width=\textwidth]{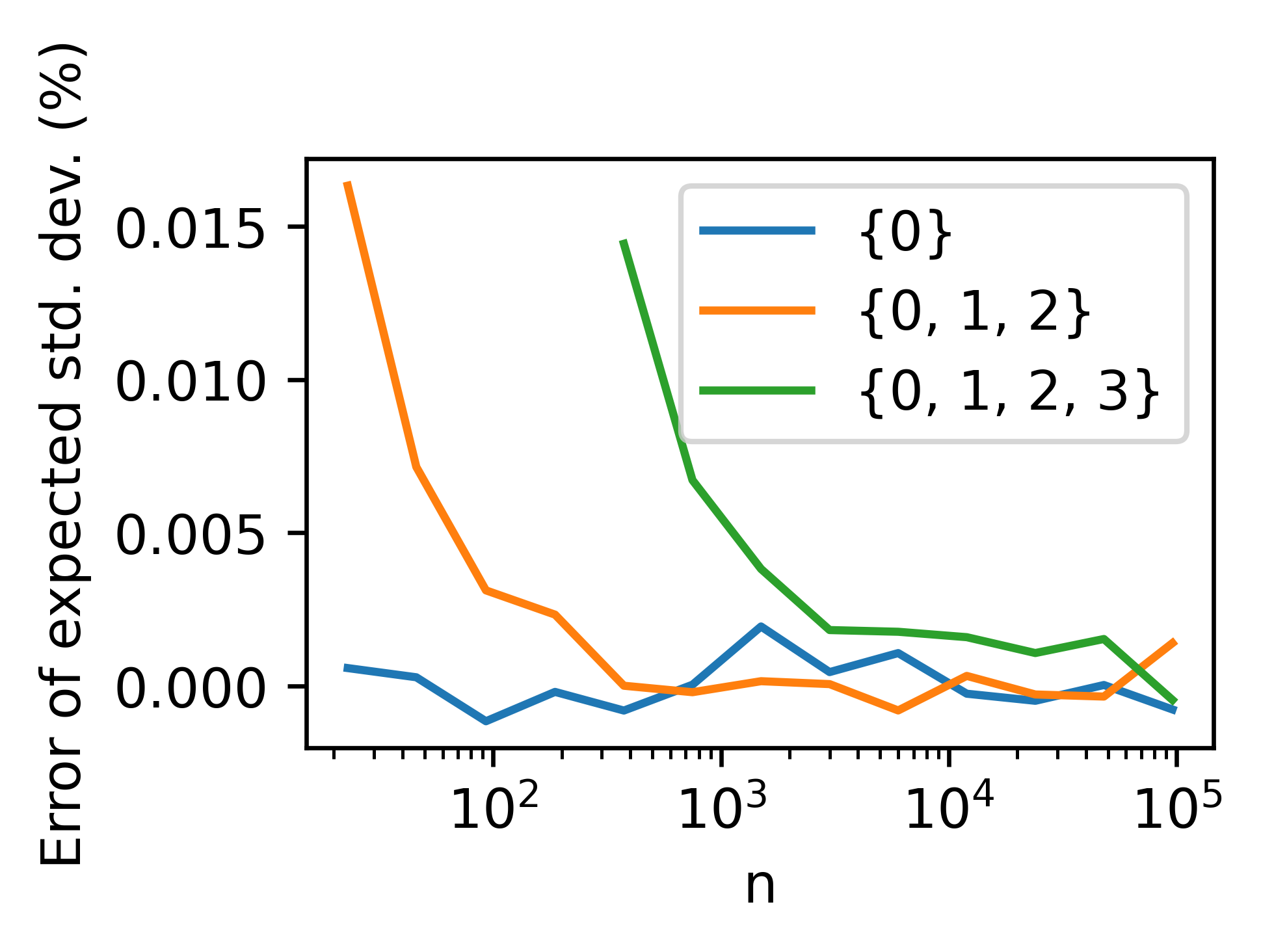}
		\caption{\label{fig-tmdh-normalp-a}}
	\end{subfigure}
	\begin{subfigure}[]{0.3\textwidth}\centering
		\includegraphics[width=\textwidth]{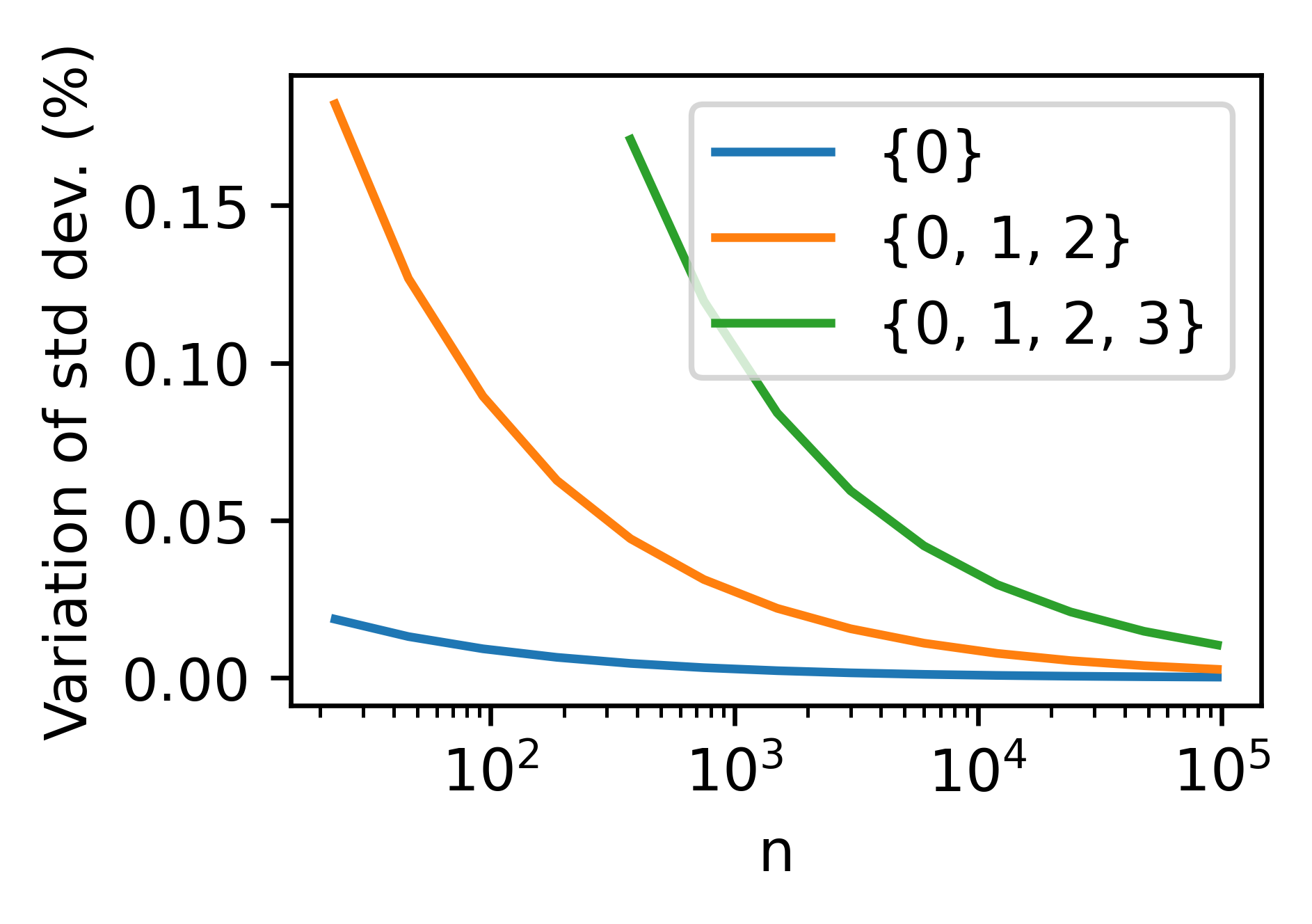}
		\caption{\label{fig-tmdh-normalp-b}}
	\end{subfigure}
	\begin{subfigure}[]{0.3\textwidth}\centering
		\includegraphics[width=\textwidth]{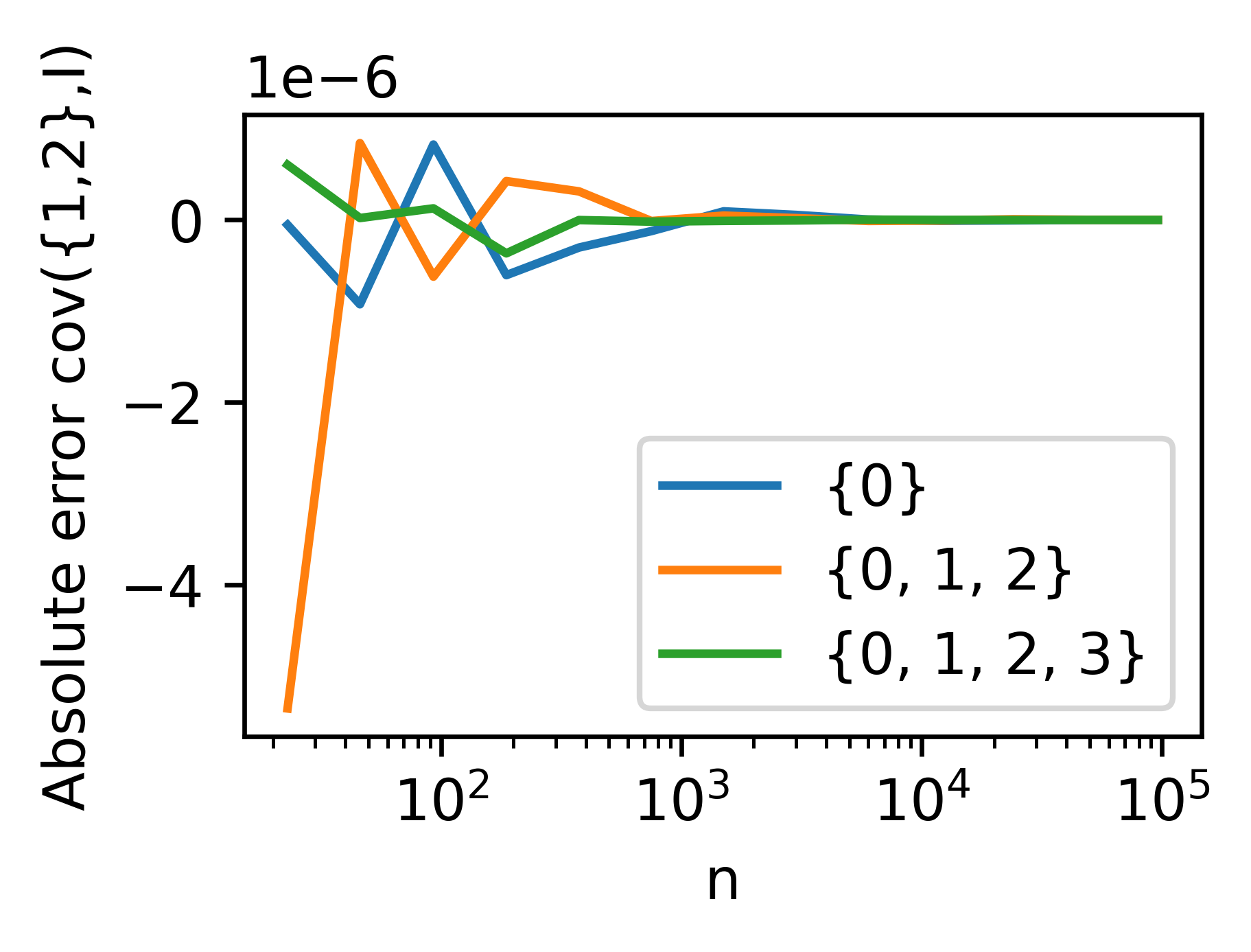}
		\caption{}
	\end{subfigure}

	\begin{subfigure}[]{0.3\textwidth}\centering
		\includegraphics[width=\textwidth]{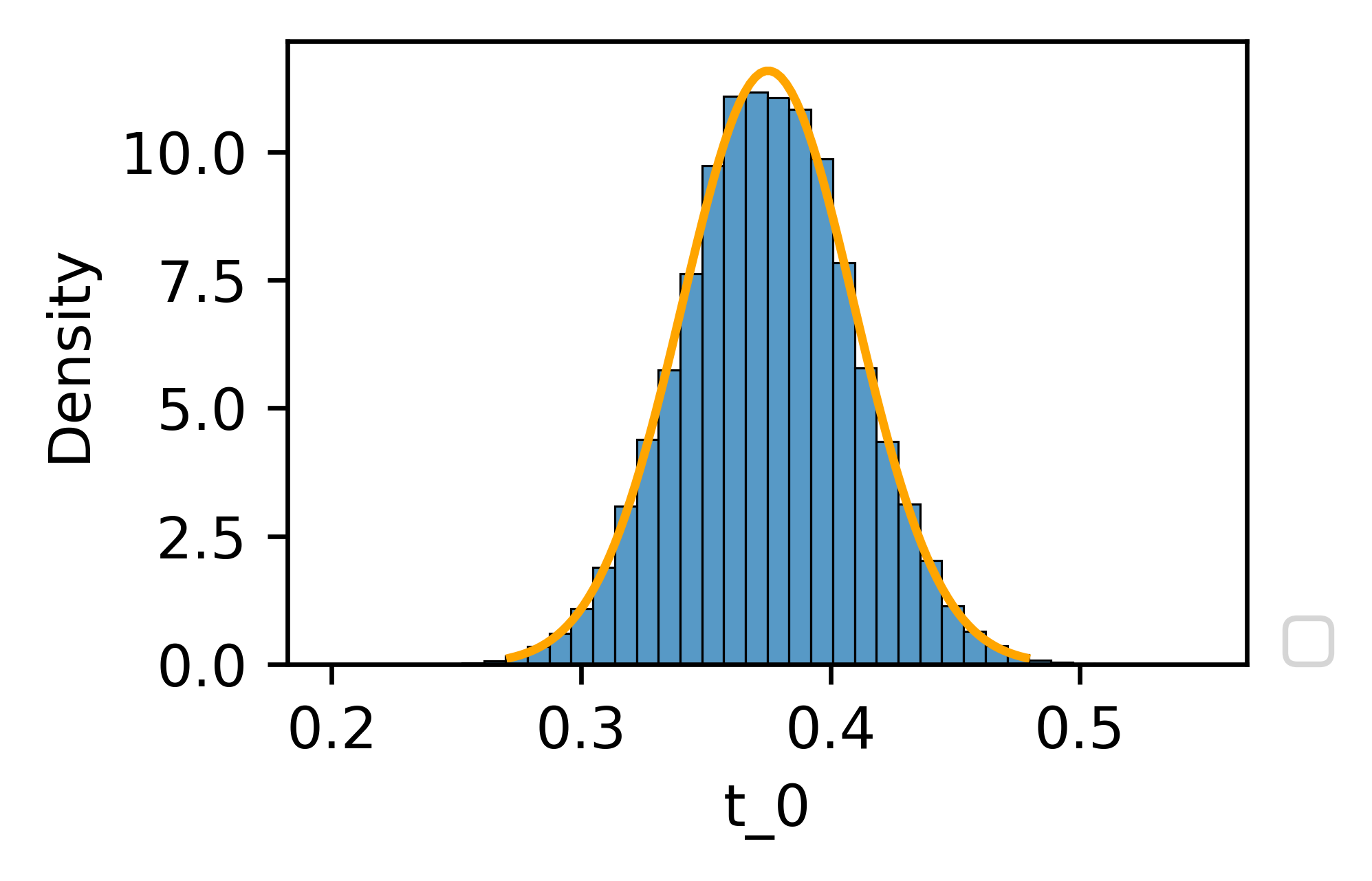}
		\caption{}
	\end{subfigure}
	\begin{subfigure}[]{0.3\textwidth}\centering
		\includegraphics[width=\textwidth]{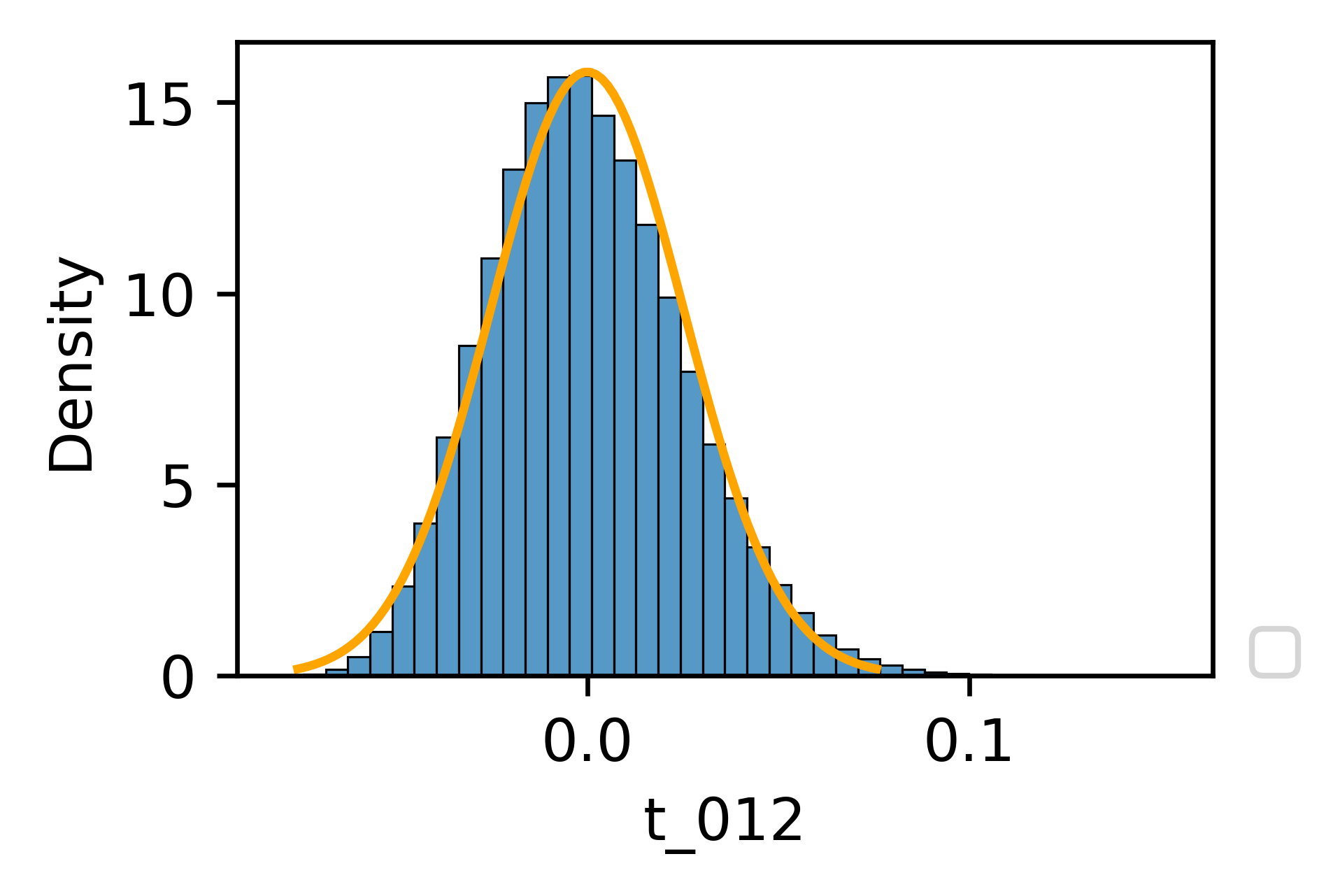}
		\caption{}
	\end{subfigure}
	\begin{subfigure}[]{0.3\textwidth}\centering
		\includegraphics[width=\textwidth]{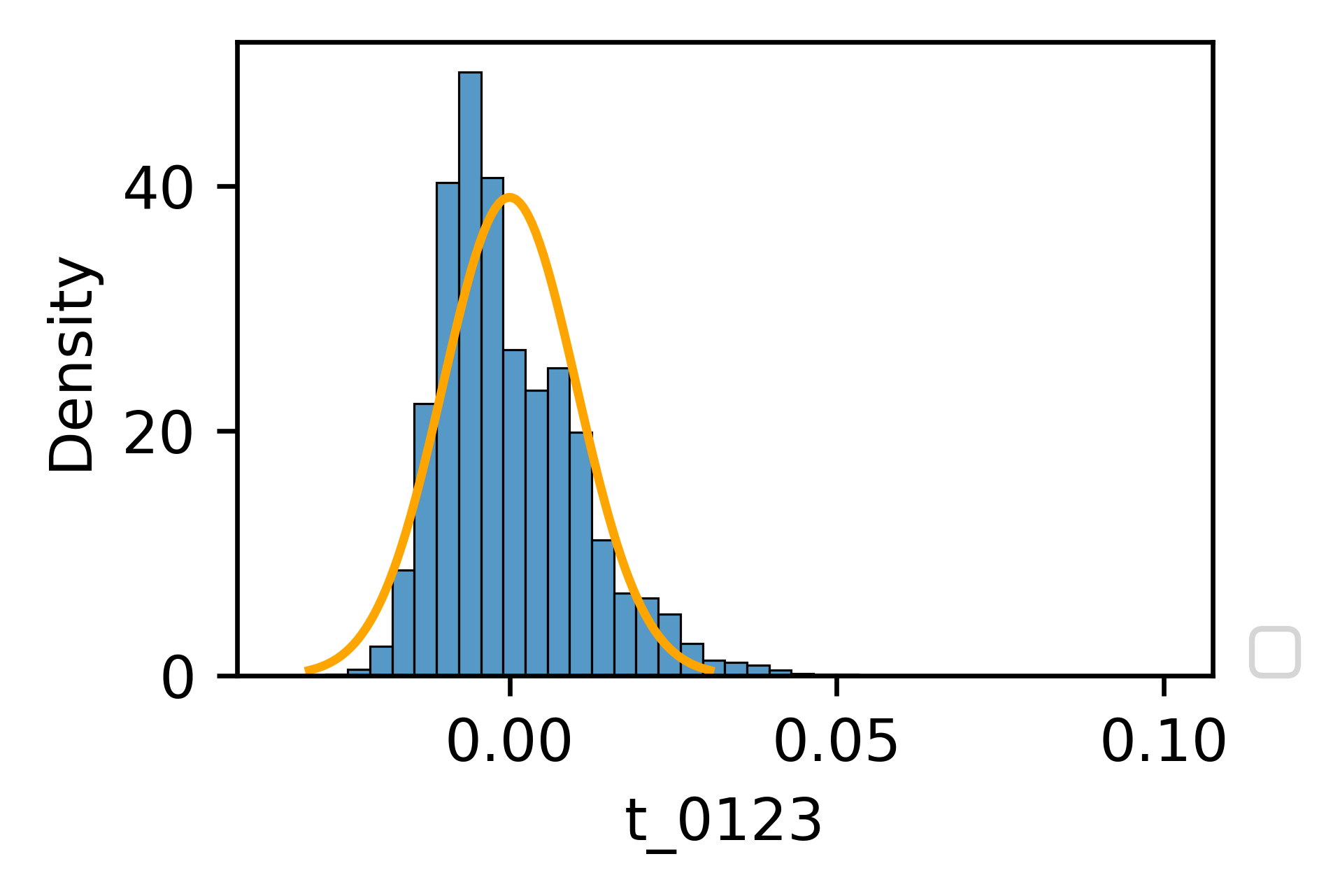}
		\caption{\label{fig-tmdh-normalp-0-1-2-3}}
	\end{subfigure}

	\begin{subfigure}[]{0.3\textwidth}\centering
		\includegraphics[width=\textwidth]{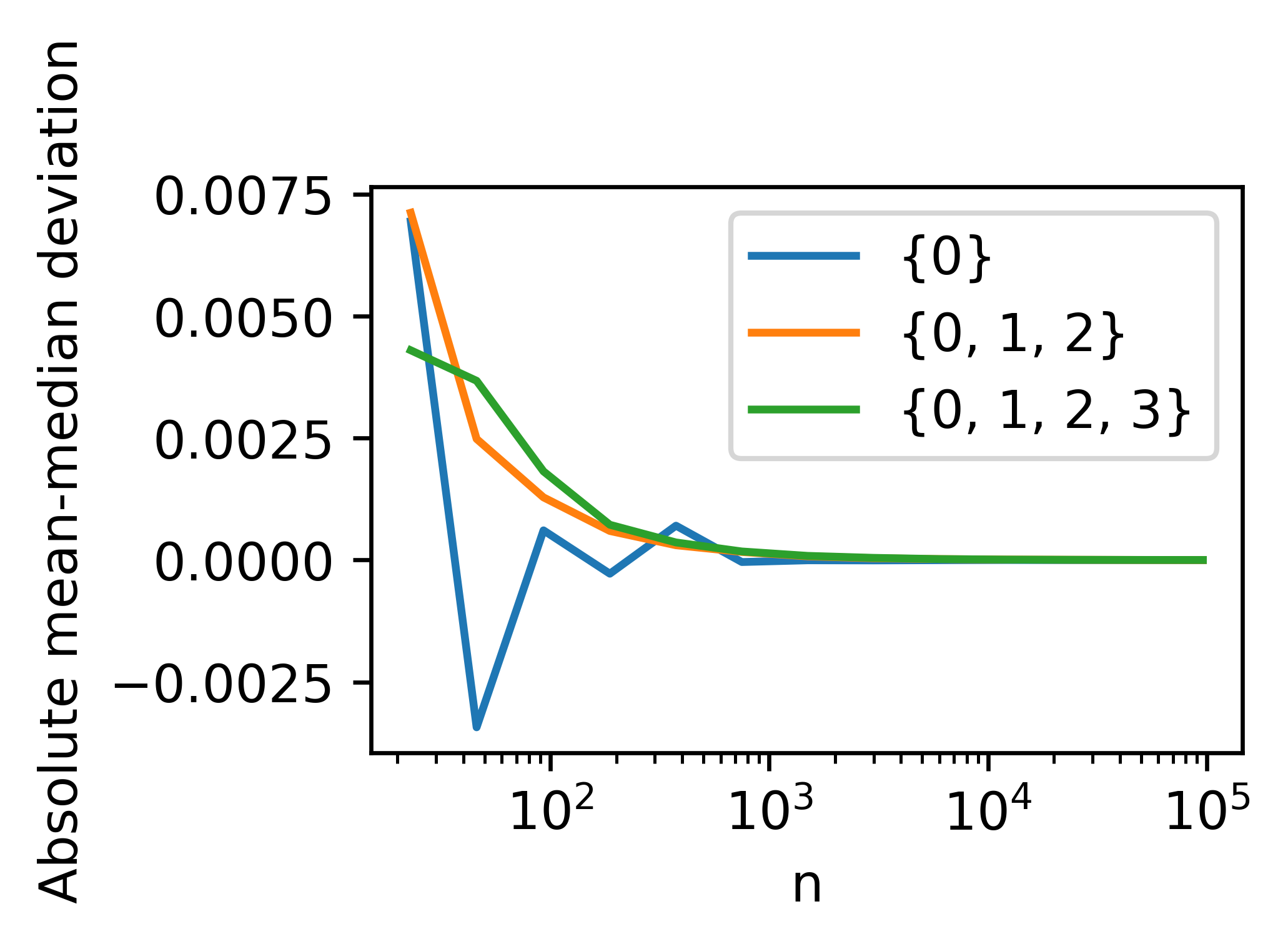}
		\caption{}
	\end{subfigure}
	\begin{subfigure}[]{0.3\textwidth}\centering
		\includegraphics[width=\textwidth]{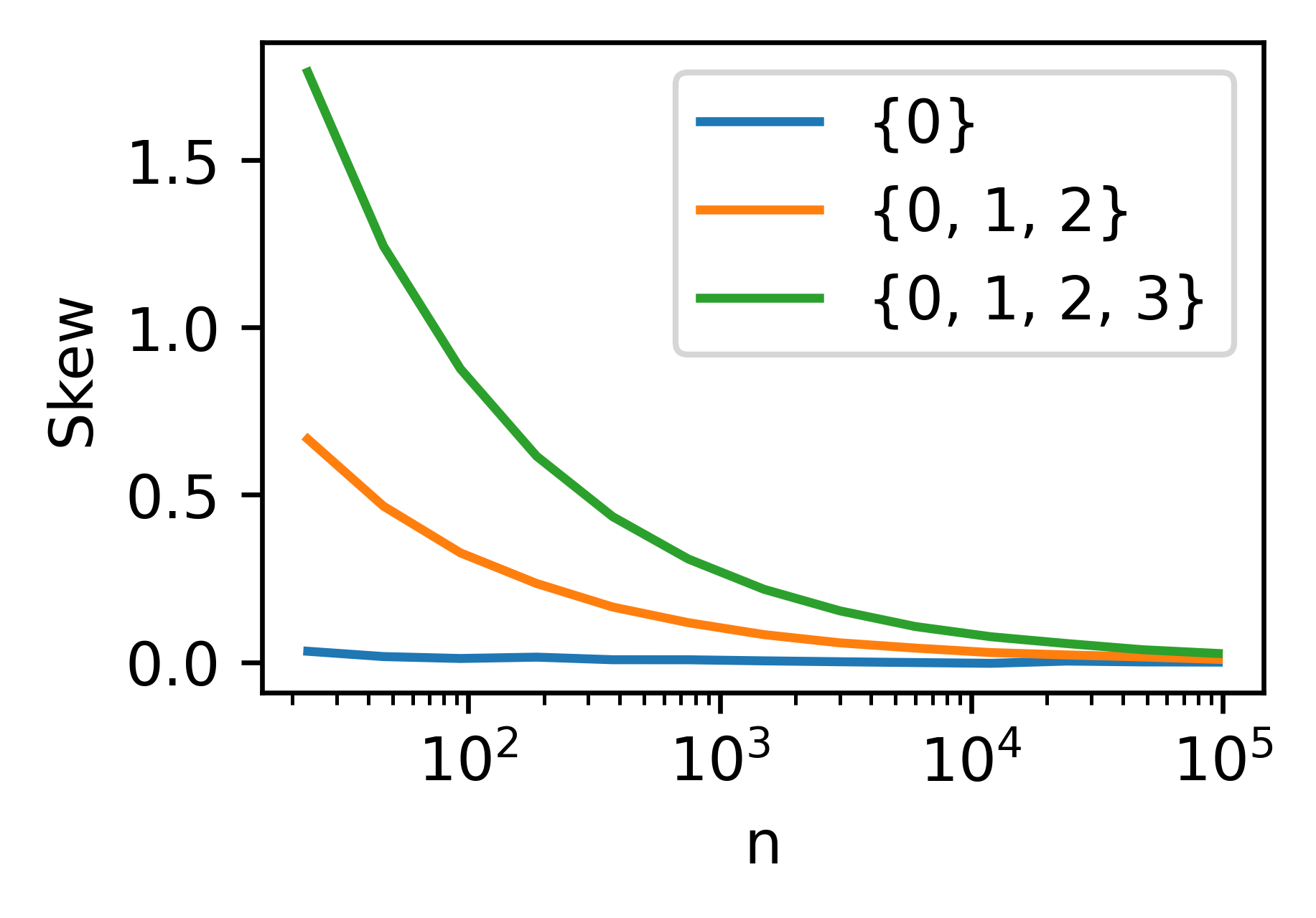}
		\caption{}
	\end{subfigure}
	\caption{Reconstruction of true frequencies based on PMDH. (a-c) Errors on normal parameters. (a-b) see text. (c) Difference between covariance of estimated means and theoretical covariance between ${1,2}$ and various sets $\mathcal{I}$ (d-f) comparison of the distribution of reconstructed $E(\tt_i)$ (blue histogram) and a normal distribution of same mean and variance (orange curve) for a dataset of 200 datapoints.
	(g-h) metrics of deviation to the normal of the distribution of reconstructed $E(\tt_i)$ : mean-median deviation (g) and skew (h).
	\label{fig-tmdh-normalp}}
\end{figure}

It would be interesting to compare the theoretical distribution of the TMDH to a numerical solution, e.g. obtained with Expectation Maximization (EM). However, our purpose here is to demonstrate that we propose a useful estimator of the joint frequencies, as well as robust error estimates on those frequencies. Therefore we consider that comparing our method with numerical approaches is beyond the scope of this paper. Another point we do not study here is the impact of priors on the estimates. Simple solutions like truncating and/or rounding the normal distribution will have either a positive or little impact on our conclusions, while more involved solutions are beyond the scope of this work.

\section{Conclusion}

In this work, we propose an normal approximation for the joint estimation of category frequencies of arbitrary dimensionality under pure LDP protocols. We beleive this can be useful both to further the research on this difficult problem as well as help speedup current numerical algorithms.

We show numerically that the normal approximation of the TMDH has a wide domain of applicability, loosing accuracy only close to the histogram boundaries. 
Future work should compare our solution to numerical solutions. We expect using the analytical TMDH could significantly improve the performance of EM-based algorithms.

We also briefly highlight potential opportunities and caveats of the normal approximation to account for prior information and constrains on the distribution support.
In section~\ref{theo-exp-c}, we discuss some constrains in the encoding space which can be handled using equation of the form~\ref{eq-sm-formula}. However, more subtle priors could be hard to handle. In principle, working with multivariate gaussian distributions offer all sorts of possibilities to build conditional probabilities. In practice however, one can easily face numerical instabilities when working directly on the posterior distribution with sharp constrains. In section~\ref{sec-val-pmdh}, we highlight how error can grow when constrains stray too far from the distribution center. Such issues should be studied more carefully to enable a wider use of the results reported here.

\appendix

\section{Decomposition of the randomized matrix}\label{app-rowtype}

We denote $\X[\mathcal{I}]$, with $\mathcal{I}$ a column set $[ij\dots]$, the slice of $\X$ constituted of the columns $i, j, \dots$ only. We denote $\X^b[\mathcal{I}]$ the
horizontal slice of $\X[\mathcal{I}]$ that corresponds to ground truth rows equal to the binary vector $b = [b_i, b_j, \dots]$ defined on $\mathcal{I}$.

\theoremstyle{definition}
\begin{definition}
A Synchronous Bernoulli Sequence (SBS) is a series of finite length (denoted $N$) of $D$ distinguishable synchronous Bernoulli experiments with fixed (permanent) success probabilities $[p_1, p_2, \dots, p_D]$ respectively.
\end{definition}

 The matrix $\X^b[\mathcal{I}]$ is analogous to a SBS where each column is a sequence, each row represent synchronous experiments, and the perturbed binary value in each cell is a single experiment output. We will denote $L(\X)$ the number of rows in a matrix $\X$.

\begin{lemma}\label{theo-row-type-counts}
For any column subset $\mathcal{I}$, and corresponding row-set defined by the bit array $b_\mathcal{J} = [b_ib_j\dots]$, denoting $\mathcal{J} = \{i | b_i = 1\}$, then
\beq
L(\X^{b_\mathcal{J}}[\mathcal{I}]) = \sum_{\mathcal{J} \subseteq \mathcal{S}\subseteq \mathcal{I}} (-1)^{|\mathcal{S}|-|\mathcal{J}|}t_\mathcal{S}
\eeq
\end{lemma}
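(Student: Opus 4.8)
The plan is to recognize this as an inclusion--exclusion (Möbius inversion) identity on the Boolean lattice of subsets of $\mathcal{I}$, relating ``exact pattern'' counts to ``at least'' counts. First I would fix notation for the ground truth: for each row $n$ of the dataset let $u_{ni}\in\{0,1\}$ denote the deterministic \emph{unperturbed} bit of column $i$, so that by definition the co-occurrence count is $t_\mathcal{S}=\sum_{n} \prod_{i\in\mathcal{S}} u_{ni}$ for every $\mathcal{S}\subseteq\mathcal{I}$, with the empty product giving $t_\emptyset = N$. The key observation is that $L(\X^{b_\mathcal{J}}[\mathcal{I}])$ counts exactly those rows whose unperturbed restriction to $\mathcal{I}$ equals the pattern $b_\mathcal{J}$, i.e.\ rows that are $1$ on every column of $\mathcal{J}$ and $0$ on every column of $\mathcal{I}\setminus\mathcal{J}$. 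Since this count depends only on the ground truth and not on the randomization, it is a fixed integer that I can write as a sum of products of indicators over the rows.

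Concretely, I would express the exact-pattern indicator of row $n$ as a product,
\[
\mathbbm{1}\big[\text{row }n\text{ has pattern }b_\mathcal{J}\text{ on }\mathcal{I}\big]
= \prod_{i\in\mathcal{J}} u_{ni}\ \prod_{i\in\mathcal{I}\setminus\mathcal{J}} (1-u_{ni}),
\]
and then expand the second product by distributivity as $\sum_{\mathcal{K}\subseteq\mathcal{I}\setminus\mathcal{J}} (-1)^{|\mathcal{K}|}\prod_{i\in\mathcal{K}} u_{ni}$. Summing over all rows $n$ and interchanging the two finite sums turns each inner sum $\sum_n \prod_{i\in\mathcal{J}\cup\mathcal{K}} u_{ni}$ into $t_{\mathcal{J}\cup\mathcal{K}}$ by the definition above. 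Reindexing with $\mathcal{S}=\mathcal{J}\cup\mathcal{K}$, which ranges over exactly the sets with $\mathcal{J}\subseteq\mathcal{S}\subseteq\mathcal{I}$ and satisfies $|\mathcal{K}| = |\mathcal{S}|-|\mathcal{J}|$, then yields the claimed identity $L(\X^{b_\mathcal{J}}[\mathcal{I}]) = \sum_{\mathcal{J}\subseteq\mathcal{S}\subseteq\mathcal{I}} (-1)^{|\mathcal{S}|-|\mathcal{J}|} t_\mathcal{S}$.

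Equivalently, and more conceptually, I could note that $t_\mathcal{S} = \sum_{\mathcal{S}\subseteq\mathcal{R}\subseteq\mathcal{I}} L(\X^{b_\mathcal{R}}[\mathcal{I}])$ — any row counted by $t_\mathcal{S}$ has a unique exact pattern $\mathcal{R}\supseteq\mathcal{S}$ within $\mathcal{I}$, noting that on-bits outside $\mathcal{I}$ are irrelevant since $\mathcal{S}\subseteq\mathcal{I}$ — and then invoke Möbius inversion on the subset lattice, whose Möbius function is $(-1)^{|\mathcal{S}|-|\mathcal{J}|}$, to solve for the $L$ terms. I do not expect a genuine obstacle here: the computation is routine once the bookkeeping is set. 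The only point requiring care is the interpretive step, namely checking that $L(\X^{b_\mathcal{J}}[\mathcal{I}])$ really is the count of ground-truth rows matching $b_\mathcal{J}$ on $\mathcal{I}$ (rather than a count involving the perturbed bits), and confirming that $t_\mathcal{S}$ in the statement is this same ground-truth co-occurrence count with $t_\emptyset=N$, so that the boundary cases (e.g.\ the sanity check $\sum_{\mathcal{J}\subseteq\mathcal{I}} L(\X^{b_\mathcal{J}}[\mathcal{I}]) = N$) come out consistently.
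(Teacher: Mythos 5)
Your proof is correct, and it takes a genuinely different route from the paper's. The paper proceeds by downward recursion on $|\mathcal{J}|$: it checks the base case $\mathcal{J}=\mathcal{I}$, writes $L(\X^{b_\mathcal{J}}[\mathcal{I}]) = t_\mathcal{J} - \sum_{\mathcal{J}\subset\mathcal{B}\subseteq\mathcal{I}} L(\X^{b'}[\mathcal{I}])$ by subtracting all strictly larger exact patterns, substitutes the inductive hypothesis, and then collapses the resulting double sum of signed binomial coefficients with the binomial theorem. You instead introduce the ground-truth bits $u_{ni}$, write the exact-pattern count as $\sum_n \prod_{i\in\mathcal{J}} u_{ni}\prod_{i\in\mathcal{I}\setminus\mathcal{J}}(1-u_{ni})$, and expand distributively --- a one-pass inclusion--exclusion needing no induction; your alternative phrasing, $t_\mathcal{S}=\sum_{\mathcal{S}\subseteq\mathcal{R}\subseteq\mathcal{I}} L(\X^{b_\mathcal{R}}[\mathcal{I}])$ followed by M\"obius inversion on the Boolean lattice, is the same argument in inverted form, and the paper's recursion is in effect a hand-rolled verification of exactly that inversion. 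Your interpretive checkpoints are also right: by the paper's definition $\X^{b}[\mathcal{I}]$ is sliced by the \emph{ground-truth} rows, so $L$ is a deterministic count unaffected by the randomization, and $t_\emptyset=N$ matches the paper's convention, so the sanity check $\sum_{\mathcal{J}\subseteq\mathcal{I}} L(\X^{b_\mathcal{J}}[\mathcal{I}])=N$ holds. What your route buys is brevity and transparency: the sign $(-1)^{|\mathcal{S}|-|\mathcal{J}|}$ appears immediately as the M\"obius function of the subset lattice rather than emerging from a cancellation of binomial sums. What the paper's recursion buys is that it stays entirely at the level of the aggregate counts $t_\mathcal{S}$ and $L(\cdot)$ without introducing per-row indicator variables, at the cost of heavier bookkeeping --- indeed the paper's own write-up carries minor index slips (e.g.\ ``choosing elements from $\mathcal{S}\setminus\mathcal{I}$'' where $\mathcal{S}\setminus\mathcal{J}$ is meant), which your cleaner derivation sidesteps.
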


\begin{proof}
We will prove the result using recursion.

By definition, $L(\X^{b_\mathcal{J}}[\mathcal{J}]) = t_\mathcal{J}$, so that the result is immediate when $\mathcal{J} = \mathcal{I}$.

When $|\mathcal{I}| - |\mathcal{J}| = 1$, \ie there is one and only one zero in $b_\mathcal{J}$, we have
\beq
L(\X^{b_\mathcal{J}}[\mathcal{I}]) = t_\mathcal{J} - t_\mathcal{I}
\eeq
In words, that means that the number of co-occurrences on $\mathcal{J}$ where the remaining bit of $\mathcal{I}$ is 0 is the number of co-occurrences on
$\mathcal{J}$ minus the number of co-occurrences on $\mathcal{I}$.

If we consider an arbitrary $\mathcal{J}$ with $|\mathcal{J}| < |\mathcal{I}|$, we know that we have to remove from $t_\mathcal{J}$ all the rows corresponding to bit vectors $b'$ in which at least one 0 has been flipped up compared to $b_\mathcal{J}$. We have
\beqa
L(\X^{b_\mathcal{J}}[\mathcal{I}]) &=& t_\mathcal{J} - \sum_{b'} L(\X^{b'}[\mathcal{I}]) \\
&=& t_\mathcal{J} - \sum_{\mathcal{J} \subset \mathcal{B}\subseteq \mathcal{I}}L(\X^{b'}[\mathcal{I}])
\eeqa
where $\mathcal{B} = \{i | b'_i = 1\}$. If we assume the lemma holds for all $b'$
\beqa
L(\X^{b_\mathcal{J}}[\mathcal{I}]) &=& t_\mathcal{J} -
\ \sum_{\mathcal{J} \subset \mathcal{B}\subseteq \mathcal{I}}
\ \sum_{\mathcal{B} \subseteq \mathcal{S}\subseteq \mathcal{I}} (-1)^{|\mathcal{S}|-|\mathcal{B}|}t_\mathcal{S}
\eeqa
Let us consider the contribution of a single count $t_\mathcal{S}$ in this sum. It is the same as asking, for each $l$, how many distinct $\mathcal{B}$ are subsets of $\mathcal{S}$. It is equivalent to choosing $k-l$ elements ($=|\mathcal{S} \setminus \mathcal{B}|$) from $\mathcal{S} \setminus \mathcal{I}$ to remove from $\mathcal{S}$. Since $|\mathcal{S} \setminus \mathcal{J}| = k - |\mathcal{J}|$, we have
\beqa
L(\X^{b_\mathcal{J}}[\mathcal{I}]) 
 &=& t_\mathcal{J} -
\sum_{l=|\mathcal{J}|+1}^{|\mathcal{I}|} 
\ \sum_{k=l}^{|\mathcal{I}|} (-1)^{k-l}\binom{k-|\mathcal{J}|}{k-l}
\ \sum_{\substack{\mathcal{J} \subset \mathcal{S}\subseteq \mathcal{I} \\|\mathcal{S}| = k}} t_\mathcal{S}
\\
&=& t_\mathcal{J} +
\sum_{k=|\mathcal{J}|+1}^{|\mathcal{I}|} 
\ \sum_{l=|\mathcal{J}|+1}^{k} (-1)^{k-l-1}\binom{k-|\mathcal{J}|}{k-l}
\ \sum_{\substack{\mathcal{J} \subset \mathcal{S}\subseteq \mathcal{I} \\|\mathcal{S}| = k}} t_\mathcal{S}
\\
&=& t_\mathcal{J} +
\ \sum_{k=|\mathcal{J}|+1}^{|\mathcal{I}|} (-1)^{k-|\mathcal{J}|}
\ \sum_{\substack{\mathcal{J} \subset \mathcal{S}\subseteq \mathcal{I} \\|\mathcal{S}| = k}} t_\mathcal{S}
\eeqa
where the sum on $l$ is performed using the binomial theorem. We conclude that the lemma is also valid for $b_\mathcal{J}$ with the assumption on all parent vectors $b'$. By recursion we prove the lemma.
\end{proof}

In the remainder of the appendix, we will refer to row-type and co-occurrence representations as being the histogram of distinct rows and the histogram of co-occurrences, respectively. Both form a complete representation of the joint distribution of a binary matrix.

\section{PMDH distribution}\label{app-pmdh-distri}

\subsection{Marginal distribution of co-occurrences}\label{app-marignal-c}

\begin{lemma}\label{theo-co-occ-formula}
Consider a set of columns $\mathcal{I}$ of which we count the number of co-occurrences $\cc_\mathcal{I}$ after randomization. We have
\beq
\cc_\mathcal{I} = \sum_{b=0}^{|\mathcal{I}|} \B\left(
\sum_{k=b}^{|\mathcal{I}|} (-1)^{k-b}  \binom{k}{b}
\sum_{\substack{\mathcal{S}\subseteq \mathcal{I} \\|\mathcal{S}| = k}} t_\mathcal{S},
 q^{b}p^{|\mathcal{I}|-b}\right)
\eeq
where $\B(n, p)$ signifies a Binomial random variable of parameters $n$ and $p$.
\end{lemma}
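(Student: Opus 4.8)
The plan is to decompose the co-occurrence count $\cc_\mathcal{I} = \sum_{n=1}^N \prod_{i \in \mathcal{I}} \X_{ni}$ by grouping the rows according to their ground-truth type on $\mathcal{I}$, and then to recognize the resulting object as a sum of independent Binomials. First I would fix a row $n$ whose ground-truth bit vector on $\mathcal{I}$ is $b_\mathcal{J}$ with support $\mathcal{J} = \{i : b_i = 1\}$ and $|\mathcal{J}| = \beta$. Using the Bernoulli laws $\X_{ni} \sim \textrm{Bernouilli}(q)$ when $i \in \mathcal{J}$ and $\X_{ni} \sim \textrm{Bernouilli}(p)$ when $i \in \mathcal{I}\setminus\mathcal{J}$, together with the standing assumption $\X_{ni}\indep\X_{nj}$ across columns, the product $\prod_{i\in\mathcal{I}}\X_{ni}$ is itself a Bernoulli variable with success probability $\prod_{i\in\mathcal{I}} P(\X_{ni}=1) = q^{\beta}p^{|\mathcal{I}|-\beta}$. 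The crucial observation here is that this probability depends on the ground-truth type only through $\beta=|\mathcal{J}|$, not through which particular columns are supported.

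Next I would count how many rows carry a given value of $\beta$. Summing the per-pattern row counts of Lemma~\ref{theo-row-type-counts} over all support sets of size $\beta$ gives the number of rows whose ground truth on $\mathcal{I}$ has exactly $\beta$ ones,
\beq
n_\beta = \sum_{\substack{\mathcal{J}\subseteq\mathcal{I}\\|\mathcal{J}|=\beta}} L(\X^{b_\mathcal{J}}[\mathcal{I}])
= \sum_{\substack{\mathcal{J}\subseteq\mathcal{I}\\|\mathcal{J}|=\beta}}\ \sum_{\mathcal{J}\subseteq\mathcal{S}\subseteq\mathcal{I}} (-1)^{|\mathcal{S}|-\beta}\, t_\mathcal{S}.
\eeq
Swapping the order of summation, each $t_\mathcal{S}$ with $|\mathcal{S}|=k$ is counted once for every size-$\beta$ subset $\mathcal{J}\subseteq\mathcal{S}$, of which there are $\binom{k}{\beta}$, so that $n_\beta = \sum_{k=\beta}^{|\mathcal{I}|} (-1)^{k-\beta}\binom{k}{\beta} \sum_{|\mathcal{S}|=k} t_\mathcal{S}$. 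This is precisely the first argument of the Binomial appearing in the statement, with $\beta$ playing the role of the summation index $b$.

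Finally I would assemble the pieces. The $n_\beta$ rows of a given class contribute $n_\beta$ independent $\textrm{Bernouilli}(q^\beta p^{|\mathcal{I}|-\beta})$ trials — independent because distinct rows stem from distinct data subjects and hence from independent randomizations — so their contribution to $\cc_\mathcal{I}$ is distributed as $\B(n_\beta,\, q^\beta p^{|\mathcal{I}|-\beta})$. Since the row classes indexed by $\beta=0,\dots,|\mathcal{I}|$ partition the $N$ rows, these Binomials are mutually independent, and summing over $\beta$ yields exactly the claimed decomposition of $\cc_\mathcal{I}$.

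I expect the only genuinely delicate point to be the bookkeeping in the combinatorial step: keeping the two nested subset sums straight and verifying that the inner count collapses to $\binom{k}{\beta}$. The probabilistic content — row-wise independence, the column-wise independence that collapses the product to a single Bernoulli trial, and the fact that the success probability depends on the row type only through $\beta$ — is conceptually the heart of the argument but computationally light.
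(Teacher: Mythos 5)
Your proposal is correct and takes essentially the same route as the paper's own proof: group the rows of $\X$ by their ground-truth pattern on $\mathcal{I}$, note that each row contributes an independent Bernoulli trial whose success probability $q^{b}p^{|\mathcal{I}|-b}$ depends only on $b=|\mathcal{J}|$, and count the rows in each class by summing Lemma~\ref{theo-row-type-counts} over all supports $\mathcal{J}$ of size $b$ and swapping the subset sums to produce the factor $\binom{k}{b}$. The only difference is cosmetic: you make explicit the column-wise independence, the row-wise independence, and the mutual independence of the pooled Binomials, which the paper leaves largely implicit.
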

\begin{proof}
By virtue of the independence of the randomization on each row, each segment of $\X$ with the same probability of yielding a co-occurrence will contribute one Bernoulli experiment in a binomial random variable. This experiment will be positive if all positive bits remain positive (probability $=q^{|\mathcal{J}|}$) and, independently, all the remaining bits are flipped up by the randomization (probability=$p^{|\mathcal{I}|-|\mathcal{J}|}$). The total probability of a row to contribute to $\cc_\mathcal{I}$ is therefore $q^{b}p^{|\mathcal{I}|-b}$. A given $b \in [0, |\mathcal{I}|]$ corresponds to several sets $\mathcal{J}$ of positive bits in a row, all with $|\mathcal{J}| = b$. 
Using the lemma~\ref{theo-row-type-counts} for each of those yields
\beqa
\sum_{\substack{\mathcal{J} \subseteq \mathcal{I} \\|\mathcal{J}| = b}}\sum_{k=b}^{|\mathcal{I}|} (-1)^{k-b}
\sum_{\substack{\mathcal{J} \subseteq \mathcal{S}\subseteq \mathcal{I} \\|\mathcal{S}| = k}} t_\mathcal{S}
&=&
\sum_{k=b}^{|\mathcal{I}|} (-1)^{k-b}
\sum_{\substack{\mathcal{S}\subseteq \mathcal{I} \\|\mathcal{S}| = k}}t_\mathcal{S} 
\sum_{\substack{\mathcal{J} \subseteq \mathcal{S} \\|\mathcal{J}| = b}} 1\\
&=&
\sum_{k=b}^{|\mathcal{I}|} (-1)^{k-b}
\sum_{\substack{\mathcal{S}\subseteq \mathcal{I} \\|\mathcal{S}| = k}}t_\mathcal{S} 
\binom{k}{b}
\eeqa
\end{proof}

\subsection{Proof of theorem~\ref{theo-exp-c}}\label{app-exp-c}

\begin{proof}
By the linearity of the expected value, and using $E(\B(n, p)) = np$
\beqa
E(\cc_\mathcal{I}) &=& \sum_{b=0}^{|\mathcal{I}|} 
\sum_{k=b}^{|\mathcal{I}|} (-1)^{k-b}  \binom{k}{b} q^{b}p^{|\mathcal{I}|-b}
\sum_{\substack{\mathcal{S}\subseteq \mathcal{I} \\|\mathcal{S}| = k}} t_\mathcal{S}\\
 &=& \sum_{k=0}^{|\mathcal{I}|} 
\sum_{b=0}^{k} (-1)^{k-b}  \binom{k}{b} q^{b}p^{|\mathcal{I}|-b}
\sum_{\substack{\mathcal{S}\subseteq \mathcal{I} \\|\mathcal{S}| = k}} t_\mathcal{S}\\
 &=& \sum_{k=0}^{|\mathcal{I}|} 
\left(\sum_{b=0}^{k} \binom{k}{b} q^{b}(-p)^{k-b}\right)p^{|\mathcal{I}|-k}
\sum_{\substack{\mathcal{S}\subseteq \mathcal{I} \\|\mathcal{S}| = k}} t_\mathcal{S}\\
 &=& \sum_{k=0}^{|\mathcal{I}|} 
(q-p)^kp^{|\mathcal{I}|-k}
\sum_{\substack{\mathcal{S}\subseteq \mathcal{I} \\|\mathcal{S}| = k}} t_\mathcal{S}
\eeqa
\end{proof}

\subsection{Variance between arbitrary co-occurrences in SBS} \label{app-c-multinom}

\begin{lemma}\label{theo-covmat-sbs}
Consider a SBS of composed of $N$ experiments and $D = |\mathcal{I}|$ parallel sequences $\mathcal{I}$ of corresponding success probabilities $\{ p_i \ |\ i \in \mathcal{I}\}$. Consider two arbitrary column sets $\mathcal{J}$ and $\mathcal{J}'$. The covariance between the co-occurrences counts $\cc_\mathcal{J}$ and $\cc_\mathcal{J}'$ is
\beqa
V(\cc_\mathcal{J}, \cc_{\mathcal{J}'}) &=& Np_{\mathcal{J} \cup \mathcal{J}'} (1 - p_{\mathcal{J}_\cap})
\eeqa
with 
\beq
p_\mathcal{S} = \prod_{j \in \mathcal{S}} p_j
\eeq
\end{lemma}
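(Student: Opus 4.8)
The plan is to compute the covariance directly from its definition, exploiting the row-wise independence of the SBS together with the idempotence of Bernoulli variables. First I would introduce the per-row co-occurrence indicator $Y_n^{\mathcal{J}} = \prod_{j \in \mathcal{J}} \X_{nj}$, so that $\cc_\mathcal{J} = \sum_{n=1}^N Y_n^{\mathcal{J}}$ and likewise for $\mathcal{J}'$. Since within a single row the $D$ experiments have fixed, independent success probabilities (the assumption $\X_{ni} \indep \X_{nj}$ under which $p_\mathcal{S} = \prod_{j \in \mathcal{S}} p_j$ is the co-occurrence probability), we get $E(Y_n^{\mathcal{J}}) = p_\mathcal{J}$ and hence $E(\cc_\mathcal{J}) = N p_\mathcal{J}$.

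Next I would expand $E(\cc_\mathcal{J} \cc_{\mathcal{J}'}) = \sum_{n,m} E(Y_n^{\mathcal{J}} Y_m^{\mathcal{J}'})$ and split the double sum into diagonal ($n=m$) and off-diagonal ($n \neq m$) contributions. The off-diagonal terms factorize by independence across rows, each contributing $p_\mathcal{J} p_{\mathcal{J}'}$, for a total of $N(N-1) p_\mathcal{J} p_{\mathcal{J}'}$. For the diagonal terms, the key observation is that each $\X_{nj}$ is $0/1$-valued, so $\X_{nj}^2 = \X_{nj}$ and the product collapses to $Y_n^{\mathcal{J}} Y_n^{\mathcal{J}'} = \prod_{l \in \mathcal{J} \cup \mathcal{J}'} \X_{nl}$, whose expectation is $p_{\mathcal{J} \cup \mathcal{J}'}$. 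This yields $E(\cc_\mathcal{J} \cc_{\mathcal{J}'}) = N p_{\mathcal{J} \cup \mathcal{J}'} + N(N-1) p_\mathcal{J} p_{\mathcal{J}'}$, and subtracting $E(\cc_\mathcal{J}) E(\cc_{\mathcal{J}'}) = N^2 p_\mathcal{J} p_{\mathcal{J}'}$ leaves the compact intermediate form $V(\cc_\mathcal{J}, \cc_{\mathcal{J}'}) = N(p_{\mathcal{J} \cup \mathcal{J}'} - p_\mathcal{J} p_{\mathcal{J}'})$.

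The final step is a purely algebraic rewriting into the claimed product form. Partitioning $\mathcal{J} \cup \mathcal{J}'$ into the disjoint pieces $\mathcal{J} \setminus \mathcal{J}'$, $\mathcal{J}' \setminus \mathcal{J}$, and $\mathcal{J}_\cap$, and writing each $p_\mathcal{S}$ multiplicatively over these pieces, one sees that the factors indexed by $\mathcal{J}_\cap$ appear once in $p_{\mathcal{J} \cup \mathcal{J}'}$ but squared in $p_\mathcal{J} p_{\mathcal{J}'}$. Factoring out $p_{\mathcal{J} \cup \mathcal{J}'}$ then leaves exactly $1 - p_{\mathcal{J}_\cap}$, producing $V(\cc_\mathcal{J}, \cc_{\mathcal{J}'}) = N p_{\mathcal{J} \cup \mathcal{J}'}(1 - p_{\mathcal{J}_\cap})$.

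There is no serious obstacle here: the entire argument rests on two elementary facts, namely independence across rows and Bernoulli idempotence $\X_{nj}^2 = \X_{nj}$. If anything, the only places demanding care are the bookkeeping that correctly distinguishes within-row from across-row independence in the diagonal/off-diagonal split, and the closing index-partition $A \cup B \cup C$ that converts the difference $p_{\mathcal{J} \cup \mathcal{J}'} - p_\mathcal{J} p_{\mathcal{J}'}$ into the factored form; both are routine once the partition is set up.
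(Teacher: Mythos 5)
Your proof is correct, but it takes a genuinely different route from the paper. You compute the covariance directly from the definition: per-row indicators $Y_n^{\mathcal{J}} = \prod_{j\in\mathcal{J}}\X_{nj}$, a diagonal/off-diagonal split of $E(\cc_\mathcal{J}\cc_{\mathcal{J}'})$, and Bernoulli idempotence $\X_{nj}^2=\X_{nj}$ to collapse the diagonal term to $p_{\mathcal{J}\cup\mathcal{J}'}$, giving the intermediate form $N(p_{\mathcal{J}\cup\mathcal{J}'}-p_\mathcal{J}p_{\mathcal{J}'})$. The paper instead changes basis to the row-type representation (via its Lemma \ref{theo-row-type-counts}), observes that the row-type counts $\tilde\cc_\mathcal{S}$ are jointly multinomial with cell probabilities $\tilde p_\mathcal{S}=\prod_{j\in\mathcal{S}}p_j\prod_{i\in\mathcal{I}\setminus\mathcal{S}}(1-p_i)$, writes down the multinomial covariance $N(\mathrm{diag}(\tilde P)-\tilde P\tilde P^T)$, and transforms back with $\cc_\mathcal{J}=\sum_{\mathcal{J}\subseteq\mathcal{S}\subseteq\mathcal{I}}\tilde\cc_\mathcal{S}$ and $p_\mathcal{J}=\sum_{\mathcal{J}\subseteq\mathcal{S}\subseteq\mathcal{I}}\tilde p_\mathcal{S}$, arriving at the same intermediate expression. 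Your argument is shorter and more elementary, avoiding the inclusion--exclusion basis change and subset summations entirely; what the paper's heavier route buys is the explicit joint multinomial structure of the perturbed counts, which it reuses elsewhere (exact sampling of the PMDH and the hypergeometric conditionals in the validation section), so the machinery is not wasted there even though it is overkill for this one lemma. Two small points: you rightly make the within-row independence $\X_{ni}\indep\X_{nj}$ explicit, which the SBS definition leaves implicit (the paper smuggles it into the product form of $\tilde p_\mathcal{J}$); and your closing identity $p_\mathcal{J}p_{\mathcal{J}'}=p_{\mathcal{J}\cup\mathcal{J}'}\,p_{\mathcal{J}_\cap}$ is the correct one --- the paper's final displayed equation, which asserts $p_{\mathcal{J}}p_{\mathcal{J}'} = p_{\mathcal{J\cup J}'}(1-p_{\mathcal{J\cap J}'})$, is a typo (as written it would make the covariance vanish identically), so your version actually repairs a slip in the published argument.
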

\begin{proof}
The sum of successes and co-occurring successes in SBS can be modeled using a multinomial distribution in the row-type representation. Indeed, lemme~\ref{theo-row-type-counts} allows to compute how many rows of a given type is present in a dataset characterized by a co-occurrence histogram. Here the former representation is interpreted as a given output for set of distinguishable synchronous Bernoulli experiments (success set $\mathcal{J}$), and the latter is the partial PMDH on the corresponding SBS segment $\X^b[{\mathcal{I}}]$. We therefore define the change of distribution representation
\beq\label{eq-c-rowtype-form}
\tilde\cc_\mathcal{J} = \sum_{k=|\mathcal{J}|}^{|{\mathcal{I}}|} (-1)^{k-|\mathcal{J}|} \sum_{\substack{\mathcal{J} \subseteq \mathcal{S}\subseteq {\mathcal{I}} \\|\mathcal{S}| = k}} \cc_\mathcal{S}
\eeq
The probability that the SBS succeeds on the indexes $\mathcal{J}$ and fails elsewhere is
\beq\label{eq-rowtype-proba}
\tilde p_\mathcal{J} = \prod_{j \in \mathcal{J}} p_j \prod_{i \in {\mathcal{I}} \setminus \mathcal{J}} (1-p_i)
\eeq
Because each row can be considered as one event and that each row is independently randomized, we have that the joint probability distribution to observe any given row follows a multinomial distribution
\beq \label{eq-c-multinom}
P(\tilde\cc_\mathcal{J} = \tilde c_\mathcal{J} \ \forall\  \mathcal{J} \subseteq {\mathcal{I}}) \sim \mathrm{Mutinomial}(\{\tilde c_\mathcal{J} \ \forall\  \mathcal{J} \subseteq {\mathcal{I}}\}; \{\tilde p_\mathcal{J} \ \forall\  \mathcal{J} \subseteq {\mathcal{I}}\})
\eeq

We can then compute the full covariance matrix of $\mathbf{\tilde C}= \{\tilde \cc_\mathcal{J} \ \forall\  \mathcal{J} \subseteq {\mathcal{I}}\}$ with
\beq
V(\mathbf{\tilde C}) = N \left(\mathrm{diag}(\tilde P) - \tilde P\tilde P^T \right)
\eeq
where $\tilde P = \{\tilde p_\mathcal{J} \ \forall\  \mathcal{J} \subseteq {\mathcal{I}}\}$.
The invert transformation from the row-type form back to the co-occurrence form is
\beq\label{eq-c-inverse-rowtype-form}
\cc_\mathcal{J} = \sum_{\mathcal{J} \subseteq \mathcal{S}\subseteq {\mathcal{I}}} \tilde\cc_\mathcal{S}
\eeq
so that by linearity of the covariance operator we can write
\beqa
V(\cc_\mathcal{J}, \cc_{\mathcal{J}'}) 
&=& 
\sum_{\substack{\mathcal{J} \subseteq \mathcal{S}\subseteq {\mathcal{I}} \\
	            \mathcal{J}' \subseteq \mathcal{S}'\subseteq {\mathcal{I}}}}
V(\tilde\cc_\mathcal{S}, \tilde\cc_{\mathcal{S}'})
\eeqa
which become after substituting with the result of lemma~\ref{theo-covmat-sbs}
\beqa
V(\cc_\mathcal{J}, \cc_{\mathcal{J}'}) {}
&=& 
N
\sum_{\substack{\mathcal{J} \subseteq \mathcal{S}\subseteq {\mathcal{I}} \\
	            \mathcal{J}' \subseteq \mathcal{S}'\subseteq {\mathcal{I}}}}
	\tilde p_\mathcal{S} \left(\delta(\mathcal{S}, \mathcal{S}') -  \tilde p_{\mathcal{S}'}
\right)
\eeqa
The probabilities~\ref{eq-rowtype-proba} behave as a vector in the row-type to co-occurrence transformation~\ref{eq-c-inverse-rowtype-form}
\beq\label{eq-proba-inverse-rowtype-form}
p_\mathcal{J} = \sum_{\mathcal{J} \subseteq \mathcal{S}\subseteq {\mathcal{I}}} \tilde p_\mathcal{S}
\eeq
This can be understood easily by the fact that the sum of all $\tilde p_\mathcal{S}/p_\mathcal{J}$, representing the probability of any variation of columns outside of $\mathcal{J}$, sum to 1. We obtain
\beqa
V(\cc_\mathcal{J}, \cc_{\mathcal{J}'}) /N
&=& 
- p_{\mathcal{J}}p_{\mathcal{J}'} +
\sum_{\mathcal{J}\cup \mathcal{J}'\subseteq \mathcal{S}\subseteq {\mathcal{I}}}
	\tilde p_\mathcal{S}\\
\eeqa
The final result is a consequence of equation~\ref{eq-proba-inverse-rowtype-form}  and the event independence
\beq
p_{\mathcal{J}}p_{\mathcal{J}'} = p_{\mathcal{J \cup J}'} (1- p_{\mathcal{J \cap J}'}).
\eeq
\end{proof}

\subsection{Covariance matrix of PMDH}
\subsubsection{Proof of theorem~\ref{theo-var-c}}\label{app-var-c}
\begin{proof}
By the independence of each row in $\X$, the linearity of the variance, and using $V(\B(n, p)) = np(1-p)$ we have
\beqa
V(\cc_\mathcal{I}) &=& \sum_{b=0}^{|\mathcal{I}|} 
\sum_{k=b}^{|\mathcal{I}|} (-1)^{k-b}  \binom{k}{b} q^{b}p^{|\mathcal{I}|-b}(1-q^{b}p^{|\mathcal{I}|-b})
\sum_{\substack{\mathcal{S}\subseteq \mathcal{I} \\|\mathcal{S}| = k}} t_\mathcal{S}\\
 &=& \sum_{k=0}^{|\mathcal{I}|} 
\sum_{b=0}^{k} \left(\binom{k}{b} q^{b}(-p)^{k-b}p^{|\mathcal{I}|-k} -  q^{2b}(-p^2)^{k-b}p^{2(|\mathcal{I}|-k)} \right)
\sum_{\substack{\mathcal{S}\subseteq \mathcal{I} \\|\mathcal{S}| = k}} t_\mathcal{S}\\
 &=& \sum_{k=0}^{|\mathcal{I}|} 
(q-p)^kp^{|\mathcal{I}|-k}(1 - (q+p)^kp^{|\mathcal{I}|-k})
\sum_{\substack{\mathcal{S}\subseteq \mathcal{I} \\|\mathcal{S}| = k}} t_\mathcal{S}
\eeqa
which leads directly to the result.
\end{proof}

\subsubsection{Proof of theorem~\ref{theo-covar-c}}\label{app-covar-c}
\begin{proof}
For convenience, we introduce the notation $\mathcal{J}_\cup = \mathcal{J}\cup \mathcal{J}'$.
Theorem~\ref{theo-covmat-sbs} provides an expression for the covariance matrix in a segment $\X^b[{\mathcal{J}_\cup}]$ of $\X$. 
By the independence of each row in $\X$, and thanks to the covariance linearity, we have 
\beq
V(\cc_\mathcal{J}, \cc_{\mathcal{J}'})  = \sum_{\mathcal{B} \subseteq \mathcal{J}_\cup} V(\cc^{(b)}_\mathcal{J}, \cc^{(b)}_{\mathcal{J}'})
\eeq
where the sum is performed on all possible $b$ ($\sim \mathcal{B}$) and $\cc^{(b)}$ refers to the partial PMDH computed on the segment defined by $b$. We have using $L$ to count the number of elements in $\X^b[{\mathcal{J}_\cup}]$
\beqa
V(\cc_\mathcal{J}, \cc_{\mathcal{J}'})  &=& \sum_{\mathcal{B} \subseteq \mathcal{J}_\cup}  L(\X^{b}[{\mathcal{J}_\cup}]) \ 
p_{\mathcal{J} \cup \mathcal{J}'} (1 - p_{\mathcal{J}_\cap})
\eeqa
We pause to interpret the above equation. Choosing $b$ is equivalent to choosing a set $\mathcal{B}$ defining the ground truth from which a row has been generated. $\mathcal{J}$ and $\mathcal{J}'$ however represent the \emph{output} of the randomization. They define the actual row in $\X$, and $\tilde p_\mathcal{J}$ is the probability to obtain it from a ground truth $b$. Therefore $\tilde p_\mathcal{J}$ depends implicitly on $b$
\beq
p_\mathcal{J}(\mathcal{B}) = 
q^{|\mathcal{J} \cap \mathcal{B}|} 
p^{|\mathcal{J} \setminus \mathcal{B}|} 
\eeq
We use lemma~\ref{theo-row-type-counts} and introduce the shorthand notation $\mathcal{J}_\cap = \mathcal{J} \cap \mathcal{J}'$ to obtain
\beqa
V(\cc_\mathcal{J}, \cc_{\mathcal{J}'})  &=& 
\sum_{\mathcal{B}\subseteq {\mathcal{J}_\cup}} 
p_{\mathcal{J}_\cup}(\mathcal{B}) (1 - p_{\mathcal{J}_\cap}(\mathcal{B}))
\sum_{\mathcal{B} \subseteq \mathcal{S}\subseteq {\mathcal{J}_\cup}}
(-)^{|\mathcal{S}| - |\mathcal{B}|}
t_{\mathcal{S}}\\
 &=& 
\sum_{\mathcal{B}\subseteq {\mathcal{J}_\cup}} 
q^{|\mathcal{B}|} p^{|\mathcal{J}_\cup| - |\mathcal{B}|} (1 - p_{\mathcal{J}_\cap}(\mathcal{B}))
\sum_{\mathcal{B} \subseteq \mathcal{S}\subseteq {\mathcal{J}_\cup}}
(-)^{|\mathcal{S}| - |\mathcal{B}|}
t_{\mathcal{S}}\\
 &=& 
\sum_{\mathcal{S}\subseteq {\mathcal{J}_\cup}}
t_{\mathcal{S}}\ p^{|\mathcal{J}_\cup| - |\mathcal{S}|}
\sum_{\mathcal{B}\subseteq \mathcal{S}} 
q^{|\mathcal{B}|} (-p)^{|\mathcal{S}| - |\mathcal{B}|} (1 - p_{\mathcal{J}_\cap}(\mathcal{B}))\\
 &=& 
\sum_{\mathcal{S}\subseteq {\mathcal{J}_\cup}}
t_{\mathcal{S}} \ p^{|\mathcal{J}_\cup| - |\mathcal{S}|}
\left(
(q-p)^{|\mathcal{S}|} -
\sum_{\mathcal{B}\subseteq \mathcal{S}}
q^{|\mathcal{B}| + |\mathcal{J}_\cap \cap \mathcal{B}|} (-p)^{|\mathcal{S}| - |\mathcal{B}|} p^{|\mathcal{J}_\cap \setminus \mathcal{B}|}
\right)\\
 &=& 
\sum_{\mathcal{S}\subseteq {\mathcal{J}_\cup}}
t_{\mathcal{S}} \ p^{|\mathcal{J}_\cup| - |\mathcal{S}|}
\left(
(q-p)^{|\mathcal{S}|}
-\sum_{\mathcal{B}'\subseteq (\mathcal{S} \cap \mathcal{J}_\cap)}
q^{|\mathcal{B}'|} p^{|\mathcal{J}_\cap| - |\mathcal{B}'|}
\sum_{\mathcal{B}''\subseteq (\mathcal{S} \setminus \mathcal{J}_\cap)}
q^{|\mathcal{B}|} (-p)^{|\mathcal{S}| - |\mathcal{B}|}
\right)
\eeqa
where in the last step we split $\mathcal{B}$ into the part intersecting $\mathcal{B}$, $\mathcal{B}'$ and the part excluding $\mathcal{B}$, $\mathcal{B}''$. We have $\mathcal{B} = \mathcal{B}' + \mathcal{B}''$ and $|\mathcal{B}| = |\mathcal{B}'| + |\mathcal{B}''|$. Using the binomial theorem twice, we obtain the result
\beqa
V(\cc_\mathcal{J}, \cc_{\mathcal{J}'})  
 &=& 
\sum_{\mathcal{S}\subseteq {\mathcal{J}_\cup}}
t_{\mathcal{S}} \ p^{|\mathcal{J}_\cup| - |\mathcal{S}|}\nonumber\\ && 
\left(
(q-p)^{|\mathcal{S}|}
-
p^{|\mathcal{J}_\cap| - |\mathcal{S} \cap \mathcal{J}_\cap|}
\sum_{\mathcal{B}'\subseteq (\mathcal{S} \cap \mathcal{J}_\cap)}
q^{2|\mathcal{B}'|} (-p^2)^{|\mathcal{S} \cap \mathcal{J}_\cap| - |\mathcal{B}'|}
\sum_{\mathcal{B}''\subseteq (\mathcal{S} \setminus \mathcal{J}_\cap)}
q^{|\mathcal{B}''|} (-p)^{|\mathcal{S} \setminus \mathcal{J}_\cap| - |\mathcal{B}''|}
\right)\nonumber\\\\
 &=& 
\sum_{\mathcal{S}\subseteq {\mathcal{J}_\cup}}
t_{\mathcal{S}} \ p^{|\mathcal{J}_\cup| - |\mathcal{S}|}
\left(
(q-p)^{|\mathcal{S}|}
-
p^{|\mathcal{J}_\cap| - |\mathcal{S} \cap \mathcal{J}_\cap|}
(q^2-p^2)^{|\mathcal{S} \cap \mathcal{J}_\cap|}
(q-p)^{|\mathcal{S} \setminus \mathcal{J}_\cap|}
\right)\\
 &=& 
\sum_{\mathcal{S}\subseteq {\mathcal{J}_\cup}}
t_{\mathcal{S}} \ p^{|\mathcal{J}_\cup| - |\mathcal{S}|}\ (q-p)^{|\mathcal{S}|}
\left(
1
-
p^{|\mathcal{J}_\cap| - |\mathcal{S} \cap \mathcal{J}_\cap|}
(q+p)^{|\mathcal{S} \cap \mathcal{J}_\cap|}
\right)
\eeqa
\end{proof}

\subsection{Example of M matrix of PMDH for 3 columns and it's inverse}\label{app-ex-m3}

For three columns, the maximum co-occurrence order is 3. The first block is composed of the three column counts $\cc_1, \cc_2,$ and $\cc_3$. The second block is also three-dimensional, with $\cc_{12}, \cc_{13}$, and $\cc_{23}$. The last block has only one dimension, $\cc_{123}$
We then have
\beq
M  =
\begin{bmatrix}
(q-p) \id & 0 & 0 \\
p(q-p)\km^{(21)} & (q-p)^2 \id & 0 \\
p^2(q-p)\ \km^{(31)}  & p(q-p)^2\ \km^{(32)}  & (q-p)^3 \\
\end{bmatrix} 
\eeq
where here we denote the 3 by 3 identity matrix $\id$ for simplicity. We also have
\beq
\km^{(21)} = 
\begin{bmatrix}
1 & 1 & 0 \\
1 & 0 & 1 \\
0 & 1 & 1
\end{bmatrix}\qquad \rm{and} \qquad \km^{(31)} = \km^{(32)} = [1, 1, 1].
\eeq

The inverse $M^{-1}$ is

\beq
M^{-1}  =
\frac{1}{(q-p)^3}
\left(
\begin{bmatrix}
(q-p)^2 \id & 0 & 0 \\
-p(q-p)\km^{(21)}  & (q-p) \id & 0 \\
p^2\ \km^{(31)} & -p\ \km^{(32)} & 1 \\
\end{bmatrix} 
\right)
\eeq

As an example, we can verify that the  $(31)$ element of the product $MM^{-1}$ vanishes
\beqa
p^2(q-p)\ \km^{(31)} (q-p)^{-1} \id - p(q-p)^2\ \km^{(32)} p(q-p)^{-2}\km^{(21)} + (q-p)^3 p^2(q-p)^{-3}\ \km^{(31)} &=& 0 \\
p^2\ \km^{(31)} - p^2\ \km^{(32)}\km^{(21)} + p^2\ \km^{(31)} &=& 0 
\eeqa
which is obtained because we have
\beq
\km^{(32)}\km^{(21)} = 2\km^{(31)}
\eeq
as demonstrated in the next section (lemma~\ref{theo-km-prod}).

\section{TMDH distribution}

\subsection{Proof of theorem~\ref{theo-inv-m}}\label{app-inv-m}

\begin{lemma}\label{theo-km-prod}
For any $\alpha > \beta > \gamma$, we have
\beq
\km^{(\alpha\beta)}\km^{(\beta\gamma)} = \binom{\alpha-\gamma}{\alpha-\beta} \km^{(\alpha\gamma)}
\eeq
\end{lemma}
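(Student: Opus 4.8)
The plan is to expand the matrix product entrywise using the set-indexed definition in equation~\ref{eq_k_def} and reduce the whole statement to a single counting problem. The rows of $\km^{(\alpha\beta)}$ are indexed by size-$\alpha$ sets $\mathcal{J}$, its columns (which are also the rows of $\km^{(\beta\gamma)}$) by size-$\beta$ sets $\mathcal{S}$, and the columns of $\km^{(\beta\gamma)}$ by size-$\gamma$ sets $\mathcal{G}$. First I would write
$$
(\km^{(\alpha\beta)}\km^{(\beta\gamma)})(\mathcal{J},\mathcal{G})
= \sum_{|\mathcal{S}|=\beta} \km^{(\alpha\beta)}(\mathcal{J},\mathcal{S})\,\km^{(\beta\gamma)}(\mathcal{S},\mathcal{G}),
$$
and observe that a summand equals $1$ precisely when $\mathcal{G}\subseteq\mathcal{S}\subseteq\mathcal{J}$ and $0$ otherwise. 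Hence the $(\mathcal{J},\mathcal{G})$ entry is exactly the number of size-$\beta$ sets $\mathcal{S}$ sandwiched between $\mathcal{G}$ and $\mathcal{J}$.

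Next I would split into cases. If $\mathcal{G}\not\subseteq\mathcal{J}$, no sandwiched $\mathcal{S}$ exists, so the entry is $0$; this matches the right-hand side, since $\km^{(\alpha\gamma)}(\mathcal{J},\mathcal{G})=0$ there. If $\mathcal{G}\subseteq\mathcal{J}$, every admissible $\mathcal{S}$ is obtained by adjoining to $\mathcal{G}$ exactly $\beta-\gamma$ further elements chosen from $\mathcal{J}\setminus\mathcal{G}$, a set of size $\alpha-\gamma$. The count is therefore $\binom{\alpha-\gamma}{\beta-\gamma}$, independent of the particular pair $(\mathcal{J},\mathcal{G})$, while $\km^{(\alpha\gamma)}(\mathcal{J},\mathcal{G})=1$. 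The last step is to reconcile this with the stated coefficient $\binom{\alpha-\gamma}{\alpha-\beta}$, which is immediate from the symmetry $\binom{n}{k}=\binom{n}{n-k}$ with $n=\alpha-\gamma$ and $k=\beta-\gamma$, since $(\alpha-\gamma)-(\beta-\gamma)=\alpha-\beta$. Collecting the two cases gives the entrywise identity for all $\mathcal{J}$ and $\mathcal{G}$, which is the claim.

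There is no serious obstacle here: the argument is a clean double counting. The only points demanding care are the index bookkeeping—keeping straight which family of sets labels the rows versus the columns of each factor, and respecting the size constraints $|\mathcal{J}|=\alpha$, $|\mathcal{G}|=\gamma$—and remembering to invoke binomial symmetry so the result appears in the exact form $\binom{\alpha-\gamma}{\alpha-\beta}$ used elsewhere. As a sanity check, this reproduces the identity $\km^{(32)}\km^{(21)}=2\,\km^{(31)}$ invoked in appendix~\ref{app-ex-m3}, since there $\alpha=3,\beta=2,\gamma=1$ and $\binom{2}{1}=2$.
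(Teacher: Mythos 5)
Your proposal is correct and follows essentially the same route as the paper's proof in appendix~\ref{app-inv-m}: expand the $(\mathcal{J},\mathcal{G})$ entry of the product, observe that the nonzero summands are exactly the size-$\beta$ sets sandwiched as $\mathcal{G}\subseteq\mathcal{S}\subseteq\mathcal{J}$, and count them as $\binom{\alpha-\gamma}{\beta-\gamma}=\binom{\alpha-\gamma}{\alpha-\beta}$. Your version merely spells out the zero case and the binomial symmetry, which the paper leaves implicit.
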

\begin{proof}
We introduce the arbitrary sets $\mathcal{A}$ and $\mathcal{C}$, of size $\alpha$ and $\gamma$ respectively. We have
\beqa
(\km^{(\alpha\beta)}\km^{(\beta\gamma)})(\mathcal{A}, \mathcal{C}) 
&=& \sum_{\mathcal{B}, |\mathcal{B}| = \beta} \km^{(\alpha\beta)}(\mathcal{A}, \mathcal{B})\km^{(\beta\gamma)}(\mathcal{B}, \mathcal{C})\\
&=& \sum_{\substack{\mathcal{C} \subset \mathcal{B} \subset \mathcal{A}\\ |\mathcal{B}| = \beta}} 1 \\
&=& \binom{\alpha-\gamma}{\alpha-\beta}\km^{(\alpha\gamma)}(\mathcal{A}, \mathcal{C}) 
\eeqa
\end{proof}
We now turn to the proof of theorem~\ref{theo-inv-m}.
\begin{proof}
In order to obtain the TMDH normal parameters, we first need to invert the matrix $M$. This can be done using the Neumann series:
\beq\label{eq-neumann}
(A-B)^{-1} = (\id  - A^{-1}B)^{-1}A^{-1}  = \left(\sum_{k = 0}^\infty (A^{-1}B)^k\right) A^{-1} 
\eeq
In our case, we have
\beq
A^{-1} =  \sum_{\alpha=1}^{D} (q-p)^{-\alpha} \id^{(\alpha)}
\eeq
and 
\beq
A^{-1}B = -\sum_{\alpha=2}^{D}\sum_{\beta = 1}^{\alpha-1} \left(\frac{p}{q-p}\right)^{\alpha-\beta} \km^{(\alpha\beta)}
\eeq
This is a lower triangular matrix so that we are assured that the series~\ref{eq-neumann} terminates. Using the lemma~\ref{theo-km-prod} we obtain
\beqa
\left(\sum_{\alpha=2}^{D}\sum_{\beta = 1}^{\alpha-1} \left(\frac{p}{q-p}\right)^{\alpha-\beta} \km^{(\alpha\beta)}\right)^2
&=&
\sum_{\alpha=3}^{D}\sum_{\beta = 2}^{\alpha-1}\sum_{\gamma = 1}^{\beta-1}
\left(\frac{p}{q-p}\right)^{\alpha - \gamma}  \binom{\alpha-\gamma}{\alpha-\beta}\km^{(\alpha\gamma)}\\
&=&
\sum_{\alpha=3}^{D}\sum_{\gamma = 1}^{\alpha-2}\left(\frac{p}{q-p}\right)^{\alpha - \gamma}\km^{(\alpha\gamma)}
\sum_{\beta' = 1}^{\alpha-\gamma-1}\binom{\alpha-\gamma}{\beta'}\\
&=&
\sum_{\alpha=3}^{D}\sum_{\beta = 1}^{\alpha-2}(2^{\alpha-\beta} - 2)\left(\frac{p}{q-p}\right)^{\alpha - \beta}\km^{(\alpha\beta)}
\eeqa
where we rename $\gamma$ to $\beta$ in the last step. Similarly
\beqa
\left(\sum_{\alpha=2}^{D}\sum_{\beta = 1}^{\alpha-1} \left(\frac{p}{q-p}\right)^{\alpha-\beta} \km^{(\alpha\beta)}\right)^3
&=&
\sum_{\alpha=4}^{D}\sum_{\beta = 2}^{\alpha-2}\sum_{\gamma = 1}^{\beta-1}(2^{\alpha-\beta} - 2)\left(\frac{p}{q-p}\right)^{\alpha - \gamma}
\binom{\alpha-\gamma}{\alpha-\beta}\km^{(\alpha\gamma)}\\
&=&
\sum_{\alpha=4}^{D}\sum_{\gamma = 1}^{\alpha-3}\left(\frac{p}{q-p}\right)^{\alpha - \gamma}\km^{(\alpha\gamma)}
\sum_{\beta' = 1}^{\alpha-\gamma-2}(2^{\alpha-\gamma-\beta'} - 2)\binom{\alpha-\gamma}{\beta'}
\eeqa
By repeating the process we deduce that in general
\beq\label{eq-A-1Bk}
\left(\sum_{\alpha=2}^{D}\sum_{\beta = 1}^{\alpha-1} \left(\frac{p}{q-p}\right)^{\alpha-\beta} \km^{(\alpha\beta)}\right)^k
=
\sum_{\alpha=k+1}^{D}\sum_{\gamma = 1}^{\alpha-k}
\kappa^{(k)}_{\alpha-\gamma}\left(\frac{p}{q-p}\right)^{\alpha - \gamma}\km^{(\alpha\gamma)}
\eeq
where $\kappa^{(k)}_{\alpha\gamma}$ are combinatorial coefficients that can be computed recursively
\beq
\kappa^{(k+1)}_{\delta} 
= \sum_{\beta = 1}^{\delta-k}\kappa^{(k)}_{\delta - \beta}\binom{\delta}{\beta}
= \sum_{\beta' = k}^{\delta-1}\kappa^{(k)}_{\beta'}\binom{\delta}{\beta}
\eeq
Therefore $\kappa^{(k)}_{\delta} = 0$ if $k>\delta$ and the highest non-vanishing $k$ is $D-1$. 
Let us investigate those coefficients with combinatorics reasoning.

The coefficients $\kappa^{(k)}_{\delta}$ appear also in the exponentiation of the full binary matrix $\km$ composed of all the blocks $\km^{(\alpha\gamma)}$ without factors. We can interpret $\km^k$ as the matrix counting all the ways to connect any to subset of the range $[0, D]$ using $k$ steps. If we consider $\km^k(\mathcal{A}, \mathcal{B})$, then the corresponding weight is $\kappa^{(k)}_{\alpha-\beta}$ if $\mathcal{B} \subset \mathcal{A}$. This count represents all the ways $\mathcal{A} \setminus \mathcal{B}$ can be divided into $k$ partitions in a given order. Bare the ordering, this corresponds to the Stirling numbers of the second kind $S(n, k)$ defined as
\beq
S(n, k) = \frac{1}{k!}\sum_{i=0}^k (-)^i \binom{k}{i}(k-i)^n
\eeq
In our case, we have to multiply those factors to account for the permutations of the partitions
\beq
\kappa^{(k)}_{\delta} = k! S(\delta, k)
\eeq
Those numbers are listed as the Triangle of numbers in ref.~\cite{A019538}. They obey the triangular relation
\beq\label{eq-triang-num-triang}
\kappa^{(k)}_{\delta} = k \left( \kappa^{(k-1)}_{\delta-1} + \kappa^{(k)}_{\delta-1}\right)
\eeq

Substituting equation~\ref{eq-A-1Bk} in the series~\ref{eq-neumann} and replacing $\gamma$ by $\beta$ we have
\beqa
M^{-1} 
&=&  \left(\sum_{k = 0}^{D-1} (-)^k \sum_{\alpha=k+1}^{D}\sum_{\beta = 1}^{\alpha-k}
\kappa^{(k)}_{\alpha-\beta}\left(\frac{p}{q-p}\right)^{\alpha - \beta}\km^{(\alpha\beta)}\right)
A^{-1}\\
&=& 
 \sum_{\alpha=1}^{D} (q-p)^{-\alpha} 
 \left( \id^{(\alpha)} +  \sum_{\beta=1}^{\alpha-1}  \vkappa_{\alpha-\beta}\ p^{\alpha - \beta}\ \km^{(\alpha\beta)}  \right)
\eeqa
where
\beqa
\vkappa_{\delta} 
&=& 
\sum_{k = 1}^{\delta} (-)^k\kappa^{(k)}_{\delta} \\
\eeqa
We can verify manually that the first $\vkappa_{\delta}$ are equal to $(-)^\delta$. Using~the triangle relation~\ref{eq-triang-num-triang} we have
\beqa
\vkappa_{\delta} 
&=& 
\sum_{k = 1}^{\delta} (-)^k k \left( \kappa^{(k-1)}_{\delta-1} + \kappa^{(k)}_{\delta-1}\right) \\
&=& 
(-)^\delta\delta \kappa^{\delta}_{\delta-1} + \kappa^{0}_{\delta-1} + \sum_{k = 1}^{\delta-1} (-)^k \left( -(k+1) \kappa^{(k)}_{\delta-1} + k \kappa^{(k)}_{\delta-1}\right) \\
&=& 
-\vkappa_{\delta-1} 
\eeqa
with $\kappa^{\delta}_{\delta-1} = \kappa^{0}_{\delta-1} = 0$

\end{proof}

\subsection{Proof of theorems~\ref{theo-t-exp} and~\ref{theo-t-var}}\label{app-t-params}

\begin{proof}[proof of theorem~\ref{theo-t-exp}]
\beqa
\mu_\mathcal{I}(C) &=& \left(M^{-1} (C - NP)\right)_{\mathcal{I}} \\
&=&  \sum_\mathcal{S} M_{\mathcal{I}\mathcal{S}}^{-1} \left(c_\mathcal{S} - p^{|\mathcal{S}|}N\right) \\
&=&  \sum_\mathcal{S} (q-p)^{-|\mathcal{I}|} (-p)^{|\mathcal{I}| - |\mathcal{S}|} \km^{(|\mathcal{I}||\mathcal{S}|)}(\mathcal{I}, \mathcal{S}) \left(c_\mathcal{S} - p^{|\mathcal{S}|}N\right) \\
&=&  \sum_{\mathcal{S} \subseteq \mathcal{I}} \frac{(-p)^{|\mathcal{I}| - |\mathcal{S}|}}{(q-p)^{|\mathcal{I}|}} \left(c_\mathcal{S} - p^{|\mathcal{S}|}N\right)\\
&=&
({q-p})^{-|\mathcal{I}|}
\sum_{\mathcal{S} \subseteq \mathcal{I}} (-p)^{|\mathcal{I}| - |\mathcal{S}|}c_\mathcal{S}
\eeqa
where we use the binomial formula together with $\cc_\emptyset = N$ in the last step.
\end{proof}

\begin{proof}[proof of theorem~\ref{theo-t-var}]
For any matrix element of $\hat \Sigma(\mu)$ we have using theorems~\ref{theo-covar-c} and~\ref{theo-t-exp}
\beqa
V(\cc_\mathcal{J}(\mu), \cc_{\mathcal{J}'}(\mu)) 
&=& 
\sum_{\mathcal{S}\subseteq {\mathcal{J}_\cup}}
\mu_{\mathcal{S}} \ (q-p)^{|\mathcal{S}|}\ p^{|\mathcal{J}_\cup| - |\mathcal{S}|}
\left(1 - (q+p)^{|\mathcal{S} \cap \mathcal{J}_\cap|} p^{|\mathcal{J}_\cap| - |\mathcal{S} \cap \mathcal{J}_\cap|} \right)\\
&=& 
\sum_{\mathcal{S'}\subseteq\mathcal{S}\subseteq {\mathcal{J}_\cup}} c_{\mathcal{S'}}
 (-)^{|\mathcal{S}| - |\mathcal{S'}|} 
 p^{|\mathcal{J}_\cup| - |\mathcal{S'}|}
\left(1 - (q+p)^{|\mathcal{S} \cap \mathcal{J}_\cap|} p^{|\mathcal{J}_\cap| - |\mathcal{S} \cap \mathcal{J}_\cap|} \right)
\eeqa
Each term depends on $\mathcal{S}$ only through the sign $(-)^{|\mathcal{S}|}$ and the size of its overlap with the intersection $\mathcal{J}_\cap$ ($|\mathcal{S} \cap \mathcal{J}_\cap|$). The summation on $\mathcal{S}$ can be simplified using $\mathcal{J}_\Delta = \mathcal{J}_\cup \setminus \mathcal{J}_\cap$
\beqa
&&
\sum_{s_\cap=0}^{|\mathcal{J}_\cap \setminus \mathcal{S'}|}
\sum_{s_\Delta=0}^{|\mathcal{J}_\Delta \setminus \mathcal{S'}|}
 (-)^{|\mathcal{S'}|+s_\cap+s_\Delta}
 \binom{|\mathcal{J}_\cap \setminus \mathcal{S'}|}{s_\cap}
 \binom{|\mathcal{J}_\Delta \setminus \mathcal{S'}|}{s_\Delta}
 \left(1 - (q+p)^{|\mathcal{S'} \cap \mathcal{J}_\cap| + s_\cap} p^{|\mathcal{J}_\cap| - |\mathcal{S'} \cap \mathcal{J}_\cap| - s_\cap} \right)\nonumber\\\\
 &=&
 (-)^{|\mathcal{S'}|}
\id(\mathcal{J}_\Delta \setminus \mathcal{S'}, \emptyset)
\sum_{s_\cap=0}^{|\mathcal{J}_\cap \setminus \mathcal{S'}|}
 (-)^{s_\cap}
 \binom{|\mathcal{J}_\cap \setminus \mathcal{S'}|}{s_\cap}
 \left(1 - (q+p)^{|\mathcal{S'} \cap \mathcal{J}_\cap| + s_\cap} p^{|\mathcal{J}_\cap \setminus \mathcal{S'}| - s_\cap} \right)\\
 &=&
 (-)^{|\mathcal{S'}|}
\id(\mathcal{J}_\Delta \setminus \mathcal{S'}, \emptyset)
 \left(\id(\mathcal{J}_\cap \setminus \mathcal{S'}, \emptyset)
- (q+p)^{|\mathcal{S'} \cap \mathcal{J}_\cap|} (-q)^{|\mathcal{J}_\cap \setminus \mathcal{S'}|} \right)
\eeqa
so that
\beqa
V(\cc_\mathcal{J}(\mu), \cc_{\mathcal{J}'}(\mu)) 
&=& 
\sum_{\mathcal{S'}\subseteq{\mathcal{J}_\cup}} c_{\mathcal{S'}}
p^{|\mathcal{J}_\cup| - |\mathcal{S'}|}
\id(\mathcal{J}_\Delta \setminus \mathcal{S'}, \emptyset)
 \left(\id(\mathcal{J}_\cap \setminus \mathcal{S'}, \emptyset)
- (q+p)^{|\mathcal{S'} \cap \mathcal{J}_\cap|} (-q)^{|\mathcal{J}_\cap \setminus \mathcal{S'}|} \right)\nonumber\\\\
&=& 
\sum_{\mathcal{J}_\Delta \subseteq \mathcal{S'}\subseteq{\mathcal{J}_\cup}}
 c_{\mathcal{S'}}
p^{|\mathcal{J}_\cup| - |\mathcal{S'}|}
 \left(\id(\mathcal{J}_\cap \setminus \mathcal{S'}, \emptyset)
- (q+p)^{|\mathcal{S'} \cap \mathcal{J}_\cap|} (-q)^{|\mathcal{J}_\cap \setminus \mathcal{S'}|} \right)
\eeqa

We can then develop equation~\ref{eq-tot-var} and inject the identity deduced above
\beqa
V(\tt_\mathcal{J}, \tt_{\mathcal{J}'}) 
&=&
(q-p)^{-(|\mathcal{J}|+|\mathcal{J}'|)} 
\sum_{\substack{\mathcal{B} \subseteq \mathcal{J} \\ \mathcal{B}' \subseteq \mathcal{J}'}} 
(-p)^{|\mathcal{J}| + |\mathcal{J}'| - |\mathcal{B}| - |\mathcal{B}'|}
V(\cc_{\mathcal{B}}, \cc_{\mathcal{B}'})\\
&=&
\left(\frac{-p}{q-p}\right)^{|\mathcal{J}|+|\mathcal{J}'|}\!\!
\sum_{\substack{\mathcal{B} \subseteq \mathcal{J} \\ \mathcal{B}' \subseteq \mathcal{J}'\\\mathcal{B}_\Delta \subseteq \mathcal{S'} \subseteq \mathcal{B}_\cup} }
 c_{\mathcal{S'}}\
(-)^{|\mathcal{B}|+|\mathcal{B'}|}
\frac{
 \id(\mathcal{B}_\cap \setminus \mathcal{S'}, \emptyset)
- (q+p)^{|\mathcal{B}_\cap \cap \mathcal{S'}|} (-q)^{|\mathcal{B}_\cap \setminus \mathcal{S'}|} }
{p^{2|\mathcal{B}_\cup| - |\mathcal{B}_\Delta| - |\mathcal{B}_\cup| + |\mathcal{S'}|}}\\
 &=&
\left(\frac{-p}{q-p}\right)^{|\mathcal{J}|+|\mathcal{J}'|}\!\!
\sum_{\mathcal{S'} \subseteq \mathcal{J}_\cup}
c_{\mathcal{S'}}
\sum_{x = 0}^{|\mathcal{J}_\cap\setminus\mathcal{S'}|}
\sum_{y = 0}^{|\mathcal{S'}\cap\mathcal{J}_\cap|}
 (-)^{y + |\mathcal{S'}\setminus\mathcal{J}_\cap|}2^y
\nonumber\\&&\qquad\qquad\qquad\quad
 \binom{|\mathcal{J}_\cap\setminus\mathcal{S'}|}{x}
 \binom{|\mathcal{S'}\cap\mathcal{J}_\cap|}{y}
\frac{
 \delta^0_{x} - (q+p)^{|\mathcal{S'}\cap\mathcal{J}_\cap| - y} (-q)^{x} 
 }{p^{|\mathcal{S'}| + x + |\mathcal{S'}\cap\mathcal{J}_\cap| - y}}\\
 &=&
\left(\frac{-p}{q-p}\right)^{|\mathcal{J}|+|\mathcal{J}'|}\!\!
\sum_{\mathcal{S'} \subseteq \mathcal{J}_\cup}
c_{\mathcal{S'}}
\sum_{x = 0}^{|\mathcal{J}_\cap\setminus\mathcal{S'}|}
 (-)^{|\mathcal{S'}\setminus\mathcal{J}_\cap|}
\binom{|\mathcal{J}_\cap\setminus\mathcal{S'}|}{x}
 \frac{
 \delta^0_{x}(1-2p)^{|\mathcal{S'}\cap\mathcal{J}_\cap|} - (q-p)^{|\mathcal{S'}\cap\mathcal{J}_\cap|} (-q)^{x} 
 }{p^{|\mathcal{S'}| + x + |\mathcal{S'}\cap\mathcal{J}_\cap|}}
 \nonumber\\&&\qquad\qquad\qquad\quad\\
 &=&
\left(\frac{-p}{q-p}\right)^{|\mathcal{J}|+|\mathcal{J}'|}\!\!
\sum_{\mathcal{S'} \subseteq \mathcal{J}_\cup}
c_{\mathcal{S'}}
(-)^{|\mathcal{S'}\setminus\mathcal{J}_\cap|}
 \frac{
 (1-2p)^{|\mathcal{S'}\cap\mathcal{J}_\cap|} - (-p)^{-|\mathcal{J}_\cap\setminus\mathcal{S'}|}(q-p)^{|\mathcal{J}_\cap\cap\mathcal{S'}|+|\mathcal{J}_\cap\setminus\mathcal{S'}|} 
 }{p^{|\mathcal{S'}| + |\mathcal{J}_\cap\cap\mathcal{S'}|}}\\
 &=&
\left(\frac{1}{q-p}\right)^{|\mathcal{J}|+|\mathcal{J'}|}\!\!
\sum_{\mathcal{S'} \subseteq \mathcal{J}_\cup}
c_{\mathcal{S'}}
(-p)^{|\mathcal{J}_\cup| - |\mathcal{S'}|}
 \left(
 (-p)^{|\mathcal{J}_\cap\setminus\mathcal{S'}|}(1-2p)^{|\mathcal{J}_\cap\cap\mathcal{S'}|} - (q-p)^{|\mathcal{J_\cap}|}
 \right)
\eeqa
using that
\beq
|\mathcal{B}| + |\mathcal{B}'| = |\mathcal{B}_\cup| + |\mathcal{B}_\cap|= 2|\mathcal{B}_\cup| - |\mathcal{B}_\Delta|
\eeq
that $\mathcal{B}_\cup \setminus \mathcal{J}_\cap \subseteq \mathcal{B}_\Delta$, and that $\mathcal{J}_\cup\setminus\mathcal{S'} = \mathcal{J}_\cap\setminus\mathcal{S'}$.
\end{proof}


\bibliography{XTable_mathModel}
\bibliographystyle{unsrt}

\end{document}